\newcommand{\INLONGVERSION}[1]{#1}
\newcommand{\INSHORTVERSION}[1]{}
\newcommand{\shortlong}[2]{\INSHORTVERSION{#1}\INLONGVERSION{#2}}
\keywords{homotopy type theory, cubical type theory, univalence}  
\newcommand{\longshortref}[1]{\shortlong{\cite[\cref*{LONG-#1}]{long-version}}{\cref{#1}}}
\newcommand{\andrea}[1]{}
\newcommand{\christian}[1]{}
\newcommand{\outline}[1]{}
\begin{document}


\title[]{Partial Univalence in $n$-truncated Type Theory}


\author{Christian Sattler}
\affiliation{
  \department{Department of Computer Science and Engineering}
  \institution{Chalmers University of Technology}
  \country{Sweden}                    
}
\email{sattler@chalmers.se}          

\author{Andrea Vezzosi}
\orcid{0000-0001-9570-9407}             
\affiliation{
  \department{Department of Computer Science}
  \institution{IT University of Copenhagen}
  \country{Denmark}                   
}
\email{avez@itu.dk}         

\begin{abstract}






It is well known that univalence is incompatible with uniqueness of
identity proofs (UIP), the axiom that all types are h-sets. This is
due to finite h-sets having non-trivial automorphisms as soon as they
are not h-propositions.

A natural question is then whether univalence restricted to
h-propositions is compatible with UIP. We answer this affirmatively by
constructing a model where types are elements of a closed
universe defined as a higher inductive type in homotopy type theory.
This universe has a path constructor for simultaneous "partial"
univalent completion, i.e., restricted to h-propositions.

More generally, we show that univalence restricted to $(n-1)$-types is
consistent with the assumption that all types are
$n$-truncated. Moreover we parametrize our construction by a suitably
well-behaved container, to abstract from a concrete choice of type
formers for the universe.

\end{abstract}

\begin{CCSXML}
<ccs2012>
<concept>
<concept_id>10003752.10003790.10011740</concept_id>
<concept_desc>Theory of computation~Type theory</concept_desc>
<concept_significance>500</concept_significance>
</concept>
<concept>
<concept_id>10003752.10003790.10003796</concept_id>
<concept_desc>Theory of computation~Constructive mathematics</concept_desc>
<concept_significance>500</concept_significance>
</concept>
<concept>
<concept_id>10003752.10010124.10010131.10010137</concept_id>
<concept_desc>Theory of computation~Categorical semantics</concept_desc>
<concept_significance>500</concept_significance>
</concept>
</ccs2012>
\end{CCSXML}

\ccsdesc[500]{Theory of computation~Type theory}
\ccsdesc[500]{Theory of computation~Constructive mathematics}
\ccsdesc[500]{Theory of computation~Categorical semantics}

\maketitle

\section{Introduction, Motivation, and Overview}


Martin-L\"{o}f type theory \cite{MLTT} (MLTT) is a formal system useful both for dependently typed
programming and as a foundations for the development of mathematics.
It is the basis of proof assistants like Agda, Coq, Idris, Lean.

Homotopy type theory (HoTT) is a variation born out of the observation that equality proofs
in MLTT behave like paths in homotopy theory \cite{awodey-warren:2009}.
A major focus is
then to characterize the exact nature of equality for each type,
filling some gaps left underspecified by MLTT by taking inpiration
from the connection to spaces up to homotopy.

Central is Voevodsky's univalence axiom, stating that equalities of
types corresponds to equivalence of types.  From univalence other
extensionality principles follow, like function and propositional
extensionality: equality of functions corresponds to pointwise
equality, and equality of propositions corresponds to logical
equivalence.

Another important contribution is the introduction of higher inductive
types (HITs), which generalize inductive types by not only allowing elements
of the type but also equalities between them to be inductively
generated. A general example is taking the quotient of a type by a
relation, other examples are finite and countable powerset
types\cite{FinSetsCPP18,VVCPP20}, ordinal notations \cite{FXGCPP20}, syntax
of type theory up to judgemental equality \cite{AKPOPL16}, other
forms of colimits, and types of spaces for synthetic homotopy theory \cite{HoTTBook}.

HoTT also brought attention to a classification of types based on the
complexity of their equality type. We say that a type is
$(-2)$-truncated or \emph{contractible} if it is equivalent to the unit
type, we say a type $A$ is $(n+1)$-truncated when for any $x,y : A$, the
equality type $x =_A y$ is $n$-truncated. In particular $(-1)$-truncated
types are referred to as \emph{h-propositions}, and are those for
which any two elements are equal, while $0$-truncated types, whose
equality types are h-propositions, are called \emph{h-sets}.

The h-sets are the notion of set of homotopy type theory, and where
most constructions will belong when using HoTT as a foundation for set-based
mathematics or to reason about programs. Restricting oneself to types
whose equality type is an h-proposition also avoids having to
stipulate coherence conditions between different ways of proving the
same equality. Such coherence conditions might be arbitrarily complex
and not necessarily expressible within HoTT itself
\cite{KrausTYPES14}.

It would be tempting then, at least for these applications, to assume
that every type is an h-set, i.e., the uniqueness of identity proofs (UIP).
In the case of HITs, e.g. for set quotients, an
explicit equality constructor can be included to impose the desired
truncation level. However we are forced to step outside the h-sets
when considering them collectively as a type, which we call the
universe of h-sets, $\hSet$. In fact, by univalence, equalities in
$\hSet$ correspond to isomorphisms between the equated h-sets, of
which in general there are more than one.  This is often unfortunate
because of, e.g, the need to define sets by induction on a set
quotient, or the lack of a convenient type that could take the role of
a Grothendieck universe when formalizing categorical semantics in sets
or presheaves.

The counterexample of $\hSet$ however does not apply to univalence restricted to
h-propositions, i.e. proposition extensionality, since any two
proofs of logical equivalence between two propositions can be proven
equal.
Moreover results about set-truncated HITs often rely on propositional
extensionality when defining a map into h-propositions by induction.
One example is effectiveness of quotients, i.e., that equalities $[a]
=_{A/R} [b]$ between two representative of an equivalence class
correspond to proofs of relatedness $R(a,b)$.





In this paper we show, for the first time, that UIP is consistent with univalence for h-propositions,
and more generally that, for $n \geq 0$, the assumption that every
type is $n$-truncated is consistent with univalence restricted to
$(n-1)$-truncated types (\cref{consistency-result}).
We refer to this as \emph{partial univalence}.
We note that the result cannot be improved to include univalence for
$n$-truncated types, as that would imply univalence for all types, and
then we could prove that the $n$-th universe is not an $n$-type by the
main result of \cite{kraus-sattler:hierarchy-of-universes}.


We stress that existing truncated models such as the set model or groupoid model~\cite{hofmann-streicher:groupoid-model} do not model partial univalence.
Although the set model contains a univalent universe of propositions, it is not the case that the set of small sets is univalent when restricted to propositions.
Similarly, the groupoid model contains a univalent universe of sets, but the groupoid of small groupoids is not univalent when restricted to h-sets, \ie groupoids with propositional sets of morphisms.

The main challenge will thus be how to interpret the universes of
such a theory.  For a fixed collection of type formers, we show
how to overcome this in \cref{sec:simple}, where we construct a
partially univalent universe through a indexed higher inductive type.
%
%
In \cref{sec:general} we generalize the construction by a
signature of type formers given as an indexed container \cite{IxC}.
We then prove the consistency of a partially univalent $n$-truncated type
theory in \cref{sec:model}.  The proof uses a model of HoTT
capable of interpreting indexed higher inductive types to derive a
model for our theory.


The viability of MLTT as a programming language relies on the
canonicity property: every closed term is equal to one in canonical
form.  While univalence as an axiom interferes with canonicity,
cubical type theory~\cite{CCHM} has remedied this by representing
equality proofs as paths from an abstract interval type, and made
univalence no longer an axiom.  We formulate a partially univalent
$0$-truncated cubical type theory in \cref{sec:htcan}.  There we
also prove that the theory satisfies homotopy canonicity, the property that
every closed term is path equal to one in canonical form.
We believe that this result establishes an important first step towards a
computational interpretation of the theory.
In \cref{sec:conclusion} we discuss related works and conclude.

\INSHORTVERSION{
A long version of this article is available~\cite{long-version}.
It includes appendices that are used and referenced in our development.
}

\subsection{Formalization}

We have formalized the main construction, the subject of \cref{sec:simple,sec:general}, in Cubical Agda.
Separately, we have formalized~\longshortref{pushouts-along-monomorphisms} \INSHORTVERSION{(the material on pushouts of monomorphisms)} in Agda with univalence as a postulate.
The formalizations are available as supplementary material to this article.

\section{A $0$-truncated Partially Univalent Universe of $0$-types}
\label{sec:simple}

In this section we set up some preliminary definitions and notations.
We then provide a simpler case of our main technical result
\cref{general-V-truncated}, to exemplify the reasoning necessary.

When reasoning internally in type theory, we write $\equiv$ for
judgmental equality and $=$ for internal equality using the identity
type. Given a type $A$, we also write the latter as $A(-, -)$.
Given $p : a_0 =_A a_1$ and a family $B$ over $A$, the dependent equality $b_0 =_{B(p)} b_1$ of $b_0$ and $b_1$ over $p$ is shorthand for the identity type $p_*(b_0) =_{B(a_1)} b_1$, where $p_*(b_0)$ is the transport of $b_0 : B(a_0)$ to $B(a_1)$ along $p$.
Given a type $A$ and $n \geq -2$, we write $\tlevel{n}(A)$ for the type that $A$ is $n$-truncated.
For $n = -2, -1, 0$, we have the usual special cases $\iscontr(A), \isprop(A), \isset(A)$ of $A$ being contractible, propositional, and a set, respectively.
All of these types are propositions.

We recall the notion of a univalent family.
\begin{definition}[Univalence] \label{univalence}
A family $Y$ over a type $X$ is \emph{univalent} if the canonical map $X(x_0, x_1) \to Y(x_0) \simeq Y(x_1)$ is an equivalence for all $x_0, x_1 : X$.
\end{definition}

We relativize this notion with respect to a property $\Pred$ on types.
This is supposed to be an extensional property, in the sense that it should depend only on $Y(x)$, not on the ``code'' $x : X$.
To make this precise, we let $Y$ be a valued in a universe $\UU$ and express $\Pred$ as a propositional family over $\UU$.

We use the word universe in a rather weak sense: until we add closure under some type formers, it can refer to an arbitrary type family.
Notationwise, universes are distinguished in that we leave the decoding function from elements of $\UU$ to types implicit.

\begin{definition}[Partial univalence] \label{partial-univalence}
Let $X$ be a type and $Y \co X \to \UU$ for some universe $\UU$.
Let $\Pred$ be a propositional family over $\UU$.
We say that $(X, Y)$ is \emph{$\Pred$-univalent} if the restriction of the family $Y$ to the subtype $\Sigma_{x:X} \Pred(Y(x))$ is univalent.
\end{definition}

We also say that $X$ is \emph{partially univalent} or \emph{univalent for $\Pred$}, leaving $Y$ implicit.
For $\Pred \equiv \tlevel{n}$, we say that $X$ is \emph{univalent for $n$-types}.
In particular, for $\Pred \equiv \isprop$, we say that $X$ is \emph{univalent for propositions}.

Of particular importance is the case where $Y$ is the identity function.
In that case, we say that \emph{the universe $\UU$ is $\Pred$-univalent}.

\begin{lemma} \label{partial-univalence-via-embedding}
Let the universe $\UU$ be $\Pred$-univalent.
Then $Y \co X \to \UU$ is $\Pred$-univalent exactly if its restriction to $x : X$ with $\Pred(Y(x))$ is an embedding.
\end{lemma}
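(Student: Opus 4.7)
The plan is to factor the canonical map witnessing $\Pred$-univalence of $(X, Y)$ through the action on paths of the restriction, and then conclude by 2-out-of-3 for equivalences. Write $Y' \co \Sigma_{x:X} \Pred(Y(x)) \to \UU$ for the restriction $(x, p) \mapsto Y(x)$. For two points $\tilde x_i := (x_i, p_i)$ of the subtype, I claim that the canonical map
\[ \alpha \co (\tilde x_0 = \tilde x_1) \to (Y(x_0) \simeq Y(x_1)) \]
(whose being an equivalence for all such pairs expresses $\Pred$-univalence of $(X, Y)$) factors as $\alpha = \mathrm{idtoequiv} \circ \mathrm{ap}_{Y'}$, where $\mathrm{idtoequiv} \co (Y(x_0) = Y(x_1)) \to (Y(x_0) \simeq Y(x_1))$ is the usual canonical map on $\UU$. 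Both sides send $\mathrm{refl}$ to the identity equivalence, so the factorization drops out of path induction on the domain.

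The second step is to observe that the middle map $\mathrm{idtoequiv}$ above is itself an equivalence, which is where $\Pred$-univalence of $\UU$ enters. By assumption, the identity family on $\Sigma_{A : \UU} \Pred(A)$ is univalent; because $\Pred$ is a propositional family, equality in this subtype reduces to equality of first components in $\UU$, so univalence on the subtype translates directly into the statement that $\mathrm{idtoequiv}_\UU \co (A_0 =_\UU A_1) \to (A_0 \simeq A_1)$ is an equivalence for any $A_0, A_1 : \UU$ carrying $\Pred$-witnesses. I instantiate this at $A_i := Y(x_i)$ with the witnesses $p_i$ supplied by $\tilde x_i$.

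Finally, 2-out-of-3 for equivalences yields that $\alpha$ is an equivalence iff $\mathrm{ap}_{Y'}$ is; quantifying over all pairs of points in the subtype, this is precisely the biconditional between $(X, Y)$ being $\Pred$-univalent and $Y'$ being an embedding. The only substantive step is the translation between subtype equality in $\Sigma_{A:\UU} \Pred(A)$ and plain equality in $\UU$, which is routine given propositionality of $\Pred$; the rest is pure bookkeeping around the factorization and 2-out-of-3, so I do not anticipate a genuine obstacle here.
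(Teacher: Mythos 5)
Your proof is correct and takes essentially the same route as the paper's: both factor the canonical univalence-witnessing map through $\mathrm{ap}$ of the restricted family, observe that the comparison map into $Y(x_0) \simeq Y(x_1)$ lands in an equivalence by $\Pred$-univalence of $\UU$, and conclude by 2-out-of-3. You are somewhat more explicit than the paper about passing between path spaces in the subtype $\Sigma_{x:X}\Pred(Y(x))$ and path spaces in $X$ (respectively in $\Sigma_{A:\UU}\Pred(A)$ and $\UU$), which the paper leaves implicit; that is a legitimate presentational choice rather than a difference in substance.
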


\begin{proof}
Let $x_1, x_2 : X$ with $\Pred(Y(x_1))$ and $\Pred(Y(x_2))$.
Consider the commuting diagram
\[
\xymatrix@!C@C-1.5cm{
  X(x_1, x_2)
  \ar[rr]
  \ar[dr]
&&
  \UU(Y(x_1), Y(x_2))
  \ar[dl]
\\&
  Y(x_1) \simeq Y(x_2)
\rlap{.}}
\]
Since $\UU$ is $\Pred$-univalent, the right map is an equivalence.
By 2-out-of-3, the left map is invertible exactly if the top map is invertible.
Quantifying over $x_1, x_2$, we obtain the claim.
\end{proof}

\subsection{Partially Univalent Type $V$ of (Small) Sets}

Let now $\UU^{\leq 0}$ be a univalent universe of sets, meaning its elements decode to $0$-truncated types.
We wish to define a ``closed'' $0$-truncated universe $V$ with a decoding function $\El_V : V \to \UU^{\leq 0}$ that is univalent for propositions in the sense of \cref{partial-univalence}.
We illustrate the essential features of our construction by requiring that:
\begin{itemize}
\item $V$ contains codes for a fixed family $N : M \to \UU^{\leq 0}$ of elements of $\UU^{\leq 0}$ where $M$ is a set.
\item $V$ is closed under $\Pi$-types (assuming that $\UU^{\leq 0}$ is),
\end{itemize}
The former family can for example include codes for the empty type or the type of Booleans.

Closed universes are typically defined by induction\hyp{}recursion, simultaneously defining the type $V$ and the function $\El_V : V \to \UU^{\leq 0}$.
To model the above closure conditions, one takes:
\begin{itemize}
\item
given $m : M$, a constructor $\overline{N}(m) : V$ and a clause $\El_V(\overline{N}(m)) \equiv N(m)$.
\item
given $\overline{A} : V$ and $\overline{B} : \El_V(\overline{A}) \to V$, a constructor $\overline{\Pi}(\overline{A}, \overline{B}) : V$ and a clause
\[
\El_V(\overline{\Pi}(\overline{A}, \overline{B})) \equiv \textstyle{\prd{a : \El_V(\overline{A})} \El_V(\overline{B}(a))}
,\]
\end{itemize}

In order to make $V$ univalent for propositions, one could imagine turning this into a \emph{higher inductive\hyp{}recursive} definition.
Given $\overline{A}_i : V$ with $\El_V(\overline{A}_i)$ a proposition for $i \in \{0, 1\}$ and an equality $p : \UU^{\leq 0}(\El_V(\overline{A}_0), \El_V(\overline{A}_1))$, one would add a path constructor $\ua(e) : V(\overline{A}_0, \overline{A}_1)$ with a clause giving an identification of the action of $\El_V$ on the path $\ua(e)$ with $p$.%
\footnote{
In this particular case, the clause of $\El_V$ for the path constructor amounts to nothing as it is an identification in a propositional type.
}
This is the right idea, but there are problems.
\begin{itemize}
\item
While syntax and semantics of higher inductive types have been analyzed to a certain extent~\cite{lumsdaine-shulman:hits-semantics,kaposi-kovacs:hits-syntax} this analysis does not yet extend to the case of induction\hyp{}recursion.
As such, the rules for higher inductive\hyp{}recursive types have not yet been established and none of the known models of homotopy type theory have been shown to admit them.
\item
Induction\hyp{}recursion is known to increase the proof-theoretic strength of the type theory over just (indexed) induction.
Thus, we do not wish to assume it in our ambient type theory.
\end{itemize}

Given a type $I$, recall that types $X$ with a map $X \to I$ are equivalent to families over $I$: in the forward direction, one takes fibers; in the backward direction, one takes the total type.
Exploiting this correspondence, the above inductive\hyp{}recursive definition of $V$ (without path constructor for partial univalence) can be turned into an \emph{indexed inductive} definition of a family $\inV$ over $\UU^{\leq 0}$.
The translation of the constructors for $\Pi$-types and $M$ is given in \cref{simple-in-V:pi,simple-in-V:M} of \cref{simple-in-V}.
The path constructor for partial univalence corresponds to the following: given propositions $A_i : \UU^{\leq 0}$ with $w_{A_i} : \inV(A_i)$ for $i \in \braces{0, 1}$ and an equality $p : \UU^{\leq 0}(A_0, A_1)$, we have a path $\ua(e)$ in the family $\inV$ betwen $w_{A_0}$ and $w_{A_1}$ over $p$.
We contract the path $p$ with one of its endpoints and arrive at the definition below.

\begin{definition} \label{simple-in-V}
The family $\inV$ over $\UU^{\leq 0}$ is defined as the following higher indexed inductive type:
\begin{enumerate}[label=(\roman*)]
\item \label{simple-in-V:M}
given $m : M$, a constructor $w_N(m) : \inV(N(m))$,
\item \label{simple-in-V:pi}
given $w_A : \inV(A)$ and $w_B(a) : \inV(B(a))$ for $a : A$ (with implicit $A : \UU^{\leq 0}$ and $B : A \to \UU^{\leq 0}$), a constructor $w_\Pi(w_A, w_B) : \inV(\prd{a : A} B(a))$,
\item \label{simple-in-V:univalence}
given a proposition $X : \UU^{\leq 0}$ with $w_0, w_1 : \inV(X)$, a path constructor $\ua(w_0, w_1) : w_0 = w_1$.
\end{enumerate}
\end{definition}

We recover $V$ as the total type $V = \sm{X : \UU^{\leq 0}} \inV(X)$, with $\El_V$ given by the first projection.

We may regard $\inV$ as a \emph{conditionally} or \emph{partially propositionally truncated} indexed inductive type (see \longshortref{appendix-on-W-type-stuff}).
In this form, it becomes clear that the constructor $\ua$ indeed suffices for partial univalence and does not introduce coherence problems: it exactly enforces that the restriction of the family $\inV$ to elements decoding to propositions is valued in propositions.

\begin{lemma} \label{simple-V-partially-univalent}
The type $V$ with $\El_V : V \to \UU^{\leq 0}$ is univalent for propositions.
\end{lemma}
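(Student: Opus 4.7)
The plan is to reduce the claim to \cref{partial-univalence-via-embedding} and then extract the propositional truncation content of the path constructor $\ua$. Since $\UU^{\leq 0}$ is univalent, it is in particular univalent for the propositional predicate $\isprop$. By \cref{partial-univalence-via-embedding} applied with $\Pred \equiv \isprop$ and $Y \equiv \El_V$, it then suffices to check that the restriction of $\El_V$ to the subtype $\sm{v : V} \isprop(\El_V(v))$ is an embedding, i.e.\ that all of its fibers are propositions.

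Unfolding $V = \sm{X : \UU^{\leq 0}} \inV(X)$ and the definition of the subtype, the fiber of this restricted first projection over $X : \UU^{\leq 0}$ is equivalent to $\inV(X) \times \isprop(X)$. I therefore need to show this is a proposition for every $X$. Because $\isprop(X)$ is already a proposition, the claim reduces (under the assumption that $X$ is a proposition) to showing that $\inV(X)$ is itself a proposition.

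This is exactly what the path constructor \ref{simple-in-V:univalence} buys us: for $X$ a proposition and any $w_0, w_1 : \inV(X)$, $\ua(w_0, w_1)$ witnesses $w_0 = w_1$, so $\inV(X)$ has at most one element up to identification, hence is a proposition. Assembling the pieces gives that each fiber is a proposition, hence the restricted $\El_V$ is an embedding, hence $V$ is univalent for propositions.

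The potentially delicate point, and the one to double-check, is that the schema of \cref{simple-in-V} really behaves as a higher indexed inductive type in which \ref{simple-in-V:univalence} acts as a conditional propositional truncation at propositional indices — i.e.\ that introducing $\ua$ only at propositional $X$ does not disturb the underlying recursive structure of cases \ref{simple-in-V:M} and \ref{simple-in-V:pi}. This is addressed by the reference to the long version's treatment of conditionally truncated indexed inductive types, and once it is granted, the argument above is essentially immediate.
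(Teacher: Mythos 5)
Your proof is correct and follows essentially the same route as the paper: reduce via \cref{partial-univalence-via-embedding} (noting that full univalence of $\UU^{\leq 0}$ gives $\isprop$-univalence), identify the fiber of the restricted $\El_V$ over $X$ as $\inV(X) \times \isprop(X)$, and observe that the path constructor \ref{simple-in-V:univalence} makes $\inV(X)$ propositional whenever $\isprop(X)$ holds. You spell out the fiber computation slightly more explicitly than the paper, but the content and key step are identical.
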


\begin{proof}
Using \cref{partial-univalence-via-embedding}, we have to show that the restriction of $\El_V \co V \to \UU^{\leq 0}$ to $\overline{X} : V$ with $\El_V(\overline{X})$ a proposition is an embedding.
Unfolding to $\inV$, this says that $\inV(X)$ is propositional for $X : \UU^{\leq 0}$ a proposition.
This is exactly enforced by the path constructor~\cref{simple-in-V:univalence} in \cref{simple-in-V}.
\end{proof}

It remains to show that $V$ is $0$-truncated.
For this, we adapt the encode-decode method to characterize the dependent equalities in $\inV$ over an equality in $\UU^{\leq 0}$.

\subsection{Dependent Equalities in $\inV$}

In the following, we make use of (homotopy) pushouts.
Recall~\cite[Section~6.8]{HoTTBook} that the \emph{pushout} of a span $f \co A \to B$ and $g \co A \to C$ of types is the (non-recursive) higher inductive type $B +_A C$ with points constructors $\inl(b) : B +_A C$ for $b : B$ and $\inr(c) : B +_A C$ for $c : C$ and path constructor $\glue(a) : (B +_A C)(\inl(f(b)), \inr(g(c)))$ for $a : A$.

As in~\cite[Lecture~13]{egbert:hott-intro}, we do not require judgmental $\beta$\hyp{}laws even for point constructors.
Thus, pushout types are simply a particular choice of pushout squares, (homotopy) initial cocones under the span $B \leftarrow A \rightarrow C$.
We refer to \longshortref{pushouts-along-monomorphisms} for some key properties of pushouts used in our development.

An important special case is the \emph{join} $X \join Y$ of types $X$ and $Y$, the pushout of $X$ and $Y$ under $X \times Y$.
For a proposition $\Pred$, the operation $\Pred \join -$ is also known as the \emph{closed modality} associated with $\Pred$~\cite[Example~1.8]{rijke-shulman-bas:modalities}.
We will only use the join in this form.
Recall that $X \join Y$ is contractible if $X$ or $Y$ is contractible.
In particular, $\Pred \join X$ is contractible if $\Pred$ holds.

\andrea{todo: announce motivation/inspiration for the conditional contractiblity}
\begin{problem} \label{simple-in-V-Eq}
Given an equality $p \co \UU^{\leq 0}(X_0, X_1)$ and $w_i : \inV(X_i)$ for $i \in \braces{0, 1}$, we wish to define: 
\begin{itemize}
\item
a type $\Eq_\inV(p, w_0, w_1)$ of \emph{codes of equalities} over $p$ between $w_0$ and $w_1$
\item
such that $\Eq_\inV(p, w_0, w_1)$ is contractible if $X_0$ (or equivalently $X_1$) is a proposition.
\end{itemize}
\end{problem}

\begin{proof}[Construction]
By univalence for contractible types, the type of contractible types is contractible.
Thus, the goal is contractible if $X_0$ or $X_1$ is a proposition.

We perform double induction, first on $w_0 : \inV(X_0)$ and then on $w_1 : \inV(X_1)$.
In all path constructor cases, we know that $X_0$ or $X_1$ is a proposition.
By the above, the goal becomes an equality in a contractible type, so there is nothing to show.

In all point constructor cases, we define
\[
\Eq_\inV(u, w_0, w_1) \defeq \isprop(X_0) \join E
\]
where $E$ is an abbreviation for an expression that varies depending on the case.
Note that this makes $\Eq_\inV(u, w_0, w_1)$ contractible when $X_0$ is a proposition.
The expression $E$ codes structural equality of the top-level constructors.
\begin{itemize}
\item
For $w_0 \equiv w_N(m_0)$ and $w_1 \equiv w_N(m_1)$, we let $E$ consist of pairs $(p_m, c)$ where $p_m : M(m_0, m_1)$ and $c$ is a proof that $p : \UU^{\leq 0}(N(m_0), N(m_1))$ is equal to the action of $M$ on $p_m$.
\item
For $w_i \equiv w_\Pi(w_{A_i}, w_{B_i})$ with $w_{A_i} : \inV(A_i)$ and $w_{B_i}(a_i) : \inV(B_i(a_i))$ for $a_i : A_i$, all for $i \in \braces{0, 1}$, we let $E$ consist of tuples $(p_A, e_A, p_B, e_B, c)$ where:
\begin{itemize}
\item
$p_A : \UU^{\leq 0}(A_0, A_1)$ with $e_A : \Eq_\inV(p_A, w_{A_0}, w_{A_1})$,
\item
for $a_0 : A_0$, $a_1 : A_1$, and a dependent equality $p_a$ over $p_A$ between $a_0$ and $a_1$, we have $p_B : \UU^{\leq 0}(B_0(a_0), B_1(a_1))$ with $e_B : \Eq_\inV(p_B, w_{B_0}(a_0), w_{B_1}(a_1))$.
\item
$c$ witnesses that
\[
p : \UU^{\leq 0}(\textstyle{\prd{a_0 : A_0} B_0(a_0)}, \textstyle{\prd{a_1 : A_1} B_1(a_1)})
\]
is equal to the action of the type forming operation $\Pi$ on $p_A$ and $p_B$.
\end{itemize}
\item
In the remaining ``mixed'' cases, we let $E$ be empty.
\qedhere
\end{itemize}
\end{proof}

\begin{proposition} \label{simple-equality-in-V}
Given $p : \UU^{\leq 0}(X_0, X_1)$ and $w_i : \inV(X_i)$ for $i \in \braces{0, 1}$, there is an equivalence between dependent equalities in $\inV$ over $p$ between $w_0$ and $w_1$ and $\Eq_\inV(p, w_0, w_1)$.
\end{proposition}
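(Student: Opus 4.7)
The plan is to apply the fundamental theorem of identity types. By based path induction on $p \co \UU^{\leq 0}(X_0, X_1)$, we may reduce to the case $p \equiv \refl_{X_0}$ (with $X_1 \equiv X_0$), so the goal becomes to show that the canonical map $(w_0 = w_1) \to \Eq_\inV(\refl, w_0, w_1)$ is an equivalence of families in $w_1 \co \inV(X_0)$. It then suffices to construct a reflexivity element $r_w \co \Eq_\inV(\refl, w, w)$ for every $w$, and to show that the total space $\sum_{w_1 \co \inV(X_0)} \Eq_\inV(\refl, w_0, w_1)$ is contractible, based at $(w_0, r_{w_0})$.

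The reflexivity family $r_w$ is built by induction on $w$. In each point constructor case, $r_w$ is a join element whose right component $E$ is assembled from structural reflexivities and the inductive reflexivity terms: for $w_N(m)$ one takes $(\refl_m, \refl)$, and for $w_\Pi(w_A, w_B)$ one assembles a tuple using $r_{w_A}$, the pointwise family $r_{w_B(a)}$ (reducing pairs $(a_0, a_1)$ related over $\refl$ to the diagonal), and reflexivities for $p_A$, $p_B$, and the naturality witness $c$. In the path constructor case $\ua(w_0, w_1)$, $X_0$ is a proposition, so $\Eq_\inV(\refl, w, w)$ is contractible via the $\isprop$-join and the required coherence holds automatically.

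Contractibility of the total space is then shown by induction on $w_0$. In the path constructor cases of $w_0$, $X_0$ is a proposition, so by the $\ua$ constructor of $\inV$ itself $\inV(X_0)$ is propositional and each fibre $\Eq_\inV(\refl, w_0, w_1)$ is contractible, making the total space contractible. In the point constructor cases, we further split on $w_1$. The off-diagonal ``mixed'' subcases have $E$ empty, so $\isprop(X_0) \join E$ is propositional and their contributions to the total space are absorbed using that $\inV(X_0)$ is propositional whenever $X_0$ is. On the diagonal, $E$ unpacks into standard structural data: for $w_N$, a singleton in $M(m_0, m_0)$ together with a singleton for $c$; for $w_\Pi$, a combination of the inductive contractibility hypotheses for $w_A$ and pointwise for $w_B$, together with a singleton for $c$.

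The main obstacle is the $\Pi$ case of the total-space induction. One must first contract the triple $(w_{A_1}, p_A, e_A)$ via the inductive hypothesis for $w_A$, then use the resulting trivialization together with function extensionality to contract $(w_{B_1}, p_B, e_B)$ pointwise via the inductive hypothesis for $w_B$ (which requires careful handling of dependent equalities over a varying $p_A$ and of the equivalence $A_0 \simeq A_1$ supplied by univalence on $\UU^{\leq 0}$), and finally dispose of the naturality witness $c$ as a singleton for the action of the $\Pi$ type-former on the reflexivity data.
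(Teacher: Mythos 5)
Your proposal follows essentially the same route as the paper's proof: both are encode--decode arguments, constructing the reflexivity witness $r_w$ (the paper's $\encode'$) by induction on $w$ and then establishing that every code $e$ lies in the image of $\encode$ --- equivalently, that the total space is contractible --- by a second induction on $w_0$, $w_1$, and $e$. Two organizational choices differ from the paper, and are worth noting. First, the paper keeps $p$ general throughout the decode step and contracts the pair $(p, c)$ at the very end as a singleton; you instead fix $p \equiv \refl$ up front by based path induction. This is workable, but it means the naturality witness $c$ ends up living in a loop space $\refl =_{\UU^{\leq 0}(X,X)} \refl$ rather than in a based path space, so disposing of it requires an auxiliary observation that $\UU^{\leq 0}(X,X)$ is $0$-truncated (rather than a bare singleton contraction). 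Second, for the $\Pi$ case the paper first replaces the definition of $E$ by an equivalent form whose first component is a single equality $q : (A_0, B_0) = (A_1, B_1)$ in $\sm{A} A \to \UU^{\leq 0}$; inducting on $q$ brings the remaining data $(e_A, e_B, c)$ over reflexivity, making the order of contractions unambiguous. Your sketch gestures at the same reductions (``careful handling of dependent equalities over a varying $p_A$'') but leaves the repackaging and the order of singleton contractions implicit; this is precisely the place where, without the paper's reorganization of $E$, one has to be careful that the equivalence $A_0 \simeq A_1$, the pointwise $p_B$, and the naturality witness can all be contracted coherently. The argument is sound, but that passage is where the bookkeeping effort lies and the paper's repackaging is what makes it go through cleanly.
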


\begin{proof}
For the purpose of this proof, it will be convenient to work with a different, but equivalent definition of the expression $E$ in the construction of \cref{simple-in-V-Eq} in the case $w_i \equiv w_\Pi(w_{A_i}, w_{B_i})$ with $w_{A_i} : \inV(A_i)$ and $w_{B_i}(a_i) : \inV(B_i(a_i))$ for $a_i : A_i$, all for $i \in \braces{0, 1}$.
Namely, we let $E$ consist of pairs $(q, r)$ as follows.
\begin{itemize}
\item
The component $q$ is an equality $(A_0, B_0) = (A_1, B_1)$ in the dependent sum $\sm{A : \UU^{\leq 0}} A \to \UU^{\leq 0}$.
\item
Inducting on the equality $q$, we may suppose $A \defeq A_0 \equiv A_1$ and $B \defeq B_0 \equiv B_1$.
The component $r$ is then a triple $(e_A, e_B, c)$ where
\begin{itemize}
\item $e_A : \Eq_\inV(\refl_A, w_{A_0}, w_{A_1})$,
\item $e_B(a) : \Eq_\inV(\refl_{B(a)}, w_{B_0}(a), w_{B_1}(a))$ for $a : A$,
\item $c : p = \refl$.
\end{itemize}
\end{itemize}
The equivalence between this choice of $E$ and the previous one is a staightforward consequence of structural equivalences, splitting up the equality $q$ into components for $A$ and $B$ and distributing them over the components of $r$.

We follow the encode-decode method as described in \longshortref{encode-decode}.
To define
\[
\xymatrix@C+1.2cm{
  w_0 =_{\inV(p)} w_1
  \ar[r]^-{\encode_{p,w_0,w_1}}
&
  \Eq_\inV(p, w_0, w_1)
\rlap{,}}
\]
we use equality induction on $p$ and the argument, reducing the goal to $\encode'(w) : \Eq_\inV(\refl_X, w, w)$ for $w : \inV(X)$.
We induct on $x : \inV(X)$.
\begin{itemize}
\item
For $w \equiv w_N(m) : \inV(N(m))$, we take
\[
\encode'(x) \equiv \inr(\refl_m, \refl)
.\]
\item
For $w \equiv w_\Pi(w_A, w_B) : \inV(\prd{a:A} B(a))$, we take
\[
\begin{split}
& \encode'(x) \equiv \inr
\\
& \quad (\refl_{(A, B)}, (\encode'(w_A), \lam{a} \encode'(w_B(a)), \refl))
.\end{split}
\]
\item
In the path constructor case, we have that $X$ is a proposition.
Then the goal is a dependent equality in a contractible type.
\end{itemize}

We now show $\encode_{p,w_0,w_1}^{-1}(e)$ for $e : \Eq_\inV(p, w_0, w_1)$.
We use double induction on $w_0$ and $w_1$.
In all path constructor cases, we know that $X_0$ or $X_1$ is a proposition (hence both are).
Thus, both source and target of $\encode_{p,w_0,w_1}$ are contractible, so the goal becomes contractible.
In all point constructor cases, we have $e : \isprop(X_0) \join E$ where $E$ depends on the particular case.
We induct on $e$.
In the case for $\inl$ or $\glue$, we have $\isprop(X_0)$, and the goal becomes contractible. 
For $e \equiv \inr(z)$, we proceed with $z$ according to the point constructor case for $w_0$ and $w_1$.
\begin{itemize}
\item
For $w_0 \equiv w_N(m_0)$ and $w_1 \equiv w_N(m_1)$, we have $z = (p_m, c)$.
By equality induction on $p_m : M(m_0, m_1)$, we may suppose $m \defeq m_0 \equiv m_1$ and $p_m \equiv \refl_m$.
By equality induction on $c$, we may then suppose $p \equiv \refl$ and $c \equiv \refl$.
We have
\begin{align*}
e
&\equiv
\inr(\refl_m, \refl)
\\&\equiv
\encode'(w_N(m))
\\&\equiv
\encode_{\refl_{N(m)}, w_N(m), w_N(m)}(\refl)
,\end{align*}
showing $\encode_{p,w_0,w_1}^{-1}(e)$.
\item
For $w_i \equiv w_\Pi(w_{A_i}, w_{B_i})$ with $w_{A_i} : \inV(A_i)$ and $w_{B_i}(a_i) : \inV(B_i(a_i))$ for $a_i : A_i$, all for $i \in \braces{0, 1}$, we have $z = (q, r)$ as described at the beginning of this proof.
By equality induction on $q : (A_0, B_0) = (A_1, B_1)$, we may suppose $A \defeq A_0 \equiv A_1$, $B \defeq B_0 \equiv B_1$, and $q \defeq \refl$.
Then $r = (e_A, e_B, c)$.
By equality induction $c$, we may suppose that $p \equiv \refl$ and $c \equiv \refl$.
By induction hypothesis, we have
\begin{align*}
&\encode_{\refl_A,w_{A_0},w_{A_1}}^{-1}(e_A)
,\\
&\text{$\encode_{\refl_{B(a)},w_{B_0}(a),w_{B_1}(a)}^{-1}(e_B(a))$ for $a : A$.}
\end{align*}
By equality induction and function extensionality, we may thus suppose that
\begin{align*}
e_A &\equiv \encode_{\refl_A,w_{A_0},w_{A_1}}(q_A)
,\\
e_B &\equiv \lam{a} \encode_{\refl_{B(a)},w_{B_0}(a),w_{B_1}(a)}(q_B(a))
\end{align*}
for some $q_A : w_{A_0} = w_{A_1}$ and $q_B(a) : w_{B_0}(a) = w_{B_1}(a)$ for $a : A$.
By equality induction on $q_A$ and $q_B$ (after using function extensionality), we may suppose that $w_A \defeq w_{A_0} \equiv w_{A_1}$ and $q_A \equiv \refl$ as well as $w_B \defeq w_{B_0} \equiv w_{B_1}$ and $q_B \equiv \lam{a} \refl$.
Now we have
\begin{align*}
e
&\equiv
\inr(\refl_{(A,B)}, (\encode'(w_A), \lam{a} \encode'(w_B(a)), \refl))   
\\&\equiv
\encode'(w_\Pi(w_A, w_B))
\\&\equiv
\encode_{\refl_{\prd{a:A} B(a)}, w_\Pi(w_A, w_B), w_\Pi(w_A, w_B)}(\refl)
,\end{align*}
showing $\encode_{p,w_0,w_1}^{-1}(e)$.
\item
In all ``mixed'' cases, we have $z : 0$.
\qedhere
\end{itemize}
\end{proof}

For readers concerned with the length of the above argument, we note the following.
In \cref{sec:general}, we will motivate abstraction that will allow us to reorganize the above argument into smaller, more general pieces.

\subsection{$V$ is a set}

From our characterization of dependent equality in $\inV$, we obtain a corresponding characterization of equality in $V$.
Given $\overline{X}_i \equiv (X_i, w_i) : V$ for $i \in \braces{0, 1}$, we define
\[
\Eq_V(\overline{X}_0, \overline{X}_1) \defeq \sm{p : \UU^{\leq 0}(X_0, X_1)} \Eq_\inV(p, w_0, w_1)
.\]

\begin{corollary} \label{simple-equality-V}
For $\overline{X}_0, \overline{X}_1 : V$, we have
\[
V(\overline{X}_0, \overline{X}_1) \simeq \Eq_V(\overline{X}_0, \overline{X}_1)
.\]
\end{corollary}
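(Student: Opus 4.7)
The plan is to reduce the equality type in $V = \sm{X : \UU^{\leq 0}} \inV(X)$ to the data described by $\Eq_V$ by chaining the standard characterization of equality in a $\Sigma$-type with \cref{simple-equality-in-V}. This makes the corollary an essentially routine consequence of the preceding proposition.

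First, I would invoke the general lemma that, for a family $Y$ over a type $X$ and pairs $(x_i, y_i)$ in the total type, the identity type $(x_0, y_0) = (x_1, y_1)$ is equivalent to $\sm{p : X(x_0, x_1)} (y_0 =_{Y(p)} y_1)$, where the second component is a dependent equality over $p$. Instantiating this with $Y \defeq \inV$ and $\overline{X}_i \equiv (X_i, w_i)$ yields
\[
V(\overline{X}_0, \overline{X}_1) \simeq \sm{p : \UU^{\leq 0}(X_0, X_1)} (w_0 =_{\inV(p)} w_1)
.\]

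Next, I would apply \cref{simple-equality-in-V} fiberwise over the space of base equalities $p$: for each such $p$, the dependent equality type $w_0 =_{\inV(p)} w_1$ is equivalent to $\Eq_\inV(p, w_0, w_1)$. Taking the total space of this fiberwise family of equivalences gives
\[
\sm{p : \UU^{\leq 0}(X_0, X_1)} (w_0 =_{\inV(p)} w_1) \simeq \sm{p : \UU^{\leq 0}(X_0, X_1)} \Eq_\inV(p, w_0, w_1),
\]
and the right-hand side is the definition of $\Eq_V(\overline{X}_0, \overline{X}_1)$. Composing the two equivalences completes the argument.

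Since the technical content was already discharged in \cref{simple-equality-in-V}, no step here is an obstacle; the only thing to be careful about is matching the direction of the dependent equality in the $\Sigma$-type equality lemma with the convention used in \cref{simple-equality-in-V}, but both sides use the standard transport-based definition, so no bookkeeping complications arise.
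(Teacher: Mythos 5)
Your proposal is correct and takes the same approach as the paper's proof: decompose the identity type in the $\Sigma$-type $V$ into a base equality in $\UU^{\leq 0}$ and a dependent equality over it, then apply \cref{simple-equality-in-V} fiberwise. You have simply unpacked the paper's terse two-sentence argument into explicit steps, which is fine.
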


\begin{proof}
Equality types in the dependent sum $V \equiv \sm{X : \UU^{\leq 0}} \inV(X)$ are dependent sums of an equality in $\UU^{\leq 0}$ and a dependent equality over it.
Thus, the claim is a consequence of \cref{simple-equality-in-V}.
\end{proof}

\begin{proposition} \label{simple-V-is-set}
The type $V$ is $0$-truncated.
\end{proposition}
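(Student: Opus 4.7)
The plan is to leverage the characterization from \cref{simple-equality-V}: since $V(\overline{X}_0, \overline{X}_1) \simeq \Eq_V(\overline{X}_0, \overline{X}_1)$, showing $V$ is a set reduces to showing $\Eq_V(\overline{X}_0, \overline{X}_1) \equiv \sm{p : \UU^{\leq 0}(X_0, X_1)} \Eq_\inV(p, w_0, w_1)$ is propositional for all $\overline{X}_i \equiv (X_i, w_i) : V$. Since being propositional is itself a propositional statement, I can perform double induction on $w_0$ and $w_1$ in $\inV$, discharging the path constructor cases automatically.

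For the point-constructor cases I would first case-split on $\isprop(X_0)$, valid since $\isprop$ is propositional. When $X_0$ is a proposition, every $\Eq_\inV(p, w_0, w_1) \equiv \isprop(X_0) \join E$ is contractible, so $\Eq_V$ simplifies to the equality type $\UU^{\leq 0}(X_0, X_1)$. Transporting $\isprop(X_0)$ along any hypothetical such equality makes $X_1$ a proposition too, and by univalence of $\UU^{\leq 0}$ this identity type is equivalent to the type of equivalences between two propositions, which is itself a proposition.

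When $X_0$ is not a proposition, $\isprop(X_0)$ is empty so $\isprop(X_0) \join E$ is equivalent to $E$, and I would proceed by the constructor pairing. In the mixed cases, $E$ is empty so $\Eq_V$ is empty. In the $(w_N, w_N)$ case, swapping the sums $\sm{p} \sm{p_m}$ and contracting the singleton equation that determines $p$ reduces $\Eq_V$ to $M(m_0, m_1)$, which is a proposition since $M$ is a set. In the $(w_\Pi, w_\Pi)$ case, using the alternative formulation of $E$ from the proof of \cref{simple-equality-in-V} and contracting the coherence component that fixes $p$, I would rewrite $\Eq_V$ as the product of $\Eq_V(\overline{A}_0, \overline{A}_1)$ (a proposition by the induction hypothesis) with a dependent product of $\Eq_V(\overline{B}_0(a_0), \overline{B}_1(a_1))$ over $a_0, a_1$ and dependent equalities between them (each a proposition by the induction hypothesis). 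Dependent products of propositions are propositions, closing the argument.

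I expect the main obstacle to be the bookkeeping in the $(w_\Pi, w_\Pi)$ case: organizing the tuple structure of $E$ so that contracting the sum over $p$ and applying the induction hypotheses line up cleanly, and invoking function extensionality where needed to equate the $B$-family components. The rest is routine once the contractibility from the join and from the coherence component of $E$ has been extracted.
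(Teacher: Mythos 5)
Your reduction to $\Eq_V(\overline{X}_0, \overline{X}_1)$, the double induction on $w_0, w_1$, and the per-constructor analysis of $T \defeq \sm{p} E(p)$ are all sound and match the paper's skeleton. However, the step ``case-split on $\isprop(X_0)$, valid since $\isprop$ is propositional'' is a genuine gap. Propositionality of $\isprop(X_0)$ tells you that any two inhabitants are equal; it does \emph{not} tell you that $\isprop(X_0) + \neg\isprop(X_0)$ is inhabited. That is decidability, which in this constructive setting would require the law of excluded middle — not available here, and assuming it would defeat the purpose of the result. Both of your branches actually rely on this decision: the first assumes $\isprop(X_0)$ to collapse the join to a point, and the second assumes $\neg\isprop(X_0)$ to collapse the join to $E$, so neither can be dropped.

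The paper circumvents exactly this problem. Rather than deciding the proposition, it observes that $\Eq_V$ is the dependent sum over $p : \UU^{\leq 0}(X_0, X_1)$ of a join, that dependent sum preserves pushouts, and — crucially, using univalence of $\UU^{\leq 0}$ to see that $\UU^{\leq 0}(X_0, X_1)$ is propositional \emph{under the hypothesis} $\isprop(X_0)$ — that the resulting span is a product span, so $\Eq_V$ is itself a join $\parens{\UU^{\leq 0}(X_0, X_1) \times \isprop(X_0)} \join T$. Then \longshortref{join-prop-level} (a join of two propositions is a proposition) finishes, with the two factors being exactly the two sub-goals you isolated, but now glued together constructively rather than by excluded middle. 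So your per-case computations are reusable; what is missing is the join/pushout lemma that lets you conclude propositionality of the whole from propositionality of the parts without deciding $\isprop(X_0)$.
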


\begin{proof}
Given $\overline{X}_i \equiv (X_i, w_i) : V$ for $i \in \braces{0, 1}$, we wish to show $V(\overline{X}_0, \overline{X}_1)$ propositional.
By \cref{simple-equality-V}, this amounts to showing $\Eq_V(\overline{X}_0, \overline{X}_1)$ propositional.
This we show by double induction, first on $w_0 : \inV(X_0)$ and then on $w_1 : \inV(X_1)$.
Since the goal is propositional, there is nothing to show in the path constructor cases.

In all point constructor cases, we have
\begin{equation} \label{simple-V-is-set:0}
\begin{gathered}
\Eq_V(\overline{X}_0, \overline{X}_1) = \sm{p : \UU^{\leq 0}(X_0, X_1)} \isprop(X_0) \join E(p)
\end{gathered}
\end{equation}
where $E(p)$ is as in the construction for \cref{simple-in-V-Eq}, abbreviating an expression depending on the point constructor case (for clarity, we have made the dependency on $p$ explicit).
By definition, this join forms a pushout square
\begin{equation} \label{simple-V-is-set:1}
\begin{gathered}
\xymatrix{
  \isprop(X_0) \times E(p)
  \ar[r]
  \ar[d]
&
  \isprop(X_0)
  \ar[d]
\\
  E(p)
  \ar[r]
&
  \isprop(X_0) \join E(p)
  \fancypullback{[u]}{[l(0.5)]}
\rlap{.}}
\end{gathered}
\end{equation}
The dependent sum over a fixed type preserves pushout squares in its remaining argument (abstractly, because it is a higher functor left adjoint to weakening).
From~\cref{simple-V-is-set:0,simple-V-is-set:1}, we thus obtain the following pushout square:
\begin{equation} \label{simple-V-is-set:2}
\hspace{-1cm} 
\begin{gathered}
\xymatrix@C-0.5cm{
  \sm{p : \UU^{\leq 0}(X_0, X_1)} \isprop(X_0) \times E(p)
  \ar[r]
  \ar@{^(->}[d]
&
  \sm{p : \UU^{\leq 0}(X_0, X_1)} \isprop(X_0)
  \ar[d]
\\
  \sm{p : \UU^{\leq 0}(X_0, X_1)} E(p)
  \ar[r]
&
  \Eq_V(\overline{X}_0, \overline{X}_1)
  \fancypullback{[u]}{[l(0.5)]}
\rlap{.}}
\end{gathered}
\hspace{-1cm} 
\end{equation}
Since $\UU^{\leq 0}$ is univalent, the type $\UU^{\leq 0}(X_0, X_1)$ is propositional if $\isprop(X_0)$.
From this, we see that the span in~\cref{simple-V-is-set:2} is a (homotopy) product span.
By invariance of pushouts under equivalence, it follows that $\Eq_V(\overline{X}_0, \overline{X}_1)$ is equivalent to the join
\begin{equation} \label{simple-V-is-set:3}
\parens*{\UU^{\leq 0}(X_0, X_1) \times \isprop(X_0)} \join \parens[\Big]{\sm{p : \UU^{\leq 0}(X_0, X_1)} E(p)}
.\end{equation}
We now apply \longshortref{join-prop-level}: to show that this join is propositional, it suffices to show that each of its factors is propositional.%
\footnote{Alternatively, we could appeal to the fact that closed modalities are left exact, hence preserve truncation levels~\cite{rijke-shulman-bas:modalities}.}
Since $\isprop(X_0)$ is propositional and $\UU^{\leq 0}(X_0, X_1)$ is propositional if $\isprop(X_0)$, their product is propositional.

It remains to show that $T \defeq \sm{p : \UU^{\leq 0}(X_0, X_1)} E(p)$ is a proposition.
For this, we argue according to the current point constructor case, recalling the corresponding definition of $E(p)$ from \cref{simple-in-V-Eq}.
\begin{itemize}
\item
In the case $w_0 \equiv w_N(m_0)$ and $w_1 \equiv w_N(m_1)$, we have
\begin{align*}
T
&\equiv
\sm{p : \UU^{\leq 0}(X_0, X_1)}{p_m : M(m_0, m_1)} (p = \ap_N(p_m))
\\&\simeq
M(m_0, m_1)
,\end{align*}
a proposition since $M$ was assumed a set.
\item
In the case $w_i \equiv w_\Pi(w_{A_i}, w_{B_i})$ with $w_{A_i} : \inV(A_i)$ and $w_{B_i}(a_i) : \inV(B_i(a_i))$ for $a_i : A_i$ for $i \in \braces{0, 1}$, recall that $T$ consists of tuples $(p, p_A, e_A, p_B, e_B, c)$ with types as in the construction of \cref{simple-in-V-Eq}.
We contract the equality $c$ with its endpoint $p$.
What remains is equivalent to the dependent sum of:
\begin{itemize}
\item
$(p_A, e_A) : \Eq_V(\overline{A}_0, \overline{A}_1)$ where $\overline{A}_i \defeq (A_i, w_i)$,
\item
for $a_0 : A_0$, $a_1 : A_1$, and a dependent equality $p_a$ over $p_A$ between $a_0$ and $a_1$:
\[
(p_B, e_B) : \Eq_V(\overline{B}_0(a_0), \overline{B}_1(a_1))
\]
where $\overline{B}_i(a_i) \defeq (B_i(a_i), w_i(a_i))$.
\end{itemize}
Both $\Eq_V(\overline{A}_0, \overline{A}_1)$ and $\Eq_V(\overline{B}_0(a_0), \overline{B}_1(a_1))$ (the latter for all $a_0, a_1$) are $(n-1)$-truncated by induction hypothesis.
The claim now follows by closure of $(n-1)$-truncated types under dependent sums and dependent products (with arbitrary domain).
\item
In the remaining, ``mixed'' cases, we have
\[
T
\equiv
\sm{p : \UU^{\leq 0}(X_0, X_1)} \bot
\simeq
\bot
,\]
which is propositional.
\qedhere
\end{itemize}
\end{proof}

\section{$n$-truncated Partially Univalent Universes of $n$-types}
\label{sec:general}
To obtain the main result of this paper, we need to generalize the
constructions of the previous section to a partially univalent $n$-truncated universe of $n$-types rather
than sets, and to a universe closed under more type formers. For the
sake of generality, we will also build a universe that is
$\Pred$-univalent for an arbitrary proposition $\Pred$, although the
$n$-truncatedness result will need $\Pred(X)$ to imply
$\tlevel{(n-1)}(X)$.

This section only makes use of univalence for $(n-1)$-types.

\subsection{Indexed Containers and Preservation of Truncation}
To abstract from a particular choice of type formers, we will
parametrize our universe by a signature of them represented by an
indexed container \cite{IxC}, as is done for indexed W-types.
\andrea{comment about our universe being a W-type with simultaneous partial univalence completion?}

We recall here the precise definitions of indexed container and its extension that we will use in the rest of the paper.
\begin{definition}[Indexed container] \label{ixcont}
  Given a type $I$, an \emph{$I$-indexed container} is a pair $(S,\Pos)$ of a type family $S$ over $I$ and a type family $\Pos$ over
  $\sm{i : I} S(i) \times I$.
\end{definition}
\begin{definition}[Extension of a container] \label{extension}
  Let $(S,\Pos)$ be an $I$-indexed container.
  Its extension $\cExt_{S,\Pos}$ takes a family $F$ over $I$ and produces another:

  $\cExt_{S,\Pos}(F,i) = \sm{s : S(i)} \prd{j : I} \Pos(i,s,j) \to F(j)$
\end{definition}
Given $(s, t) : \cExt_{S,\Pos}(F,i)$ we will write $t_Y(p)$ for $t(Y)(p)$.

In the universe construction we will use a $\UU^{\leq n}$-indexed
container, here we demonstrate by example that they not only cover the
type formers considered in \cref{sec:simple}, but also ones
with a more complex signature like (truncated) pushouts.
\begin{example}[Nullary type formers] \label{ex:nullary}
  Given a fixed family of types $N : M \to \UU^{\leq n}$ we define a container with empty positions:
  \[
  \begin{array}{l}
    S(X) = \sm{m : M} (X = N(m))\\
    \Pos(\_,\_,\_) = \bot\\
  \end{array}
  \]
\end{example}
\begin{example}[$\Pi$-types]\label{ex:pi-cont}
  The signature for $\Pi$-types can be represented by a $\UU^{\leq n}$-indexed container where both $S$ and $\Pos$ are given by
  indexed inductive types with constructors:
  \begin{itemize}
  \item given $A :  \UU^{\leq n}$ and $B : A \to \UU^{\leq n}$ a constructor $\pi(A,B) : S(\prd{x : A} B(x))$.
  \end{itemize}
  and with $s \equiv \pi(A,B)$ and $X \equiv \prd{a : A} B(a)$:
  \begin{itemize}
  \item a constructor $\mathsf{pos}_A : \Pos(X,s,A)$
  \item given $a : A$ a constructor $\mathsf{pos}_B : \Pos(X,s,B(a))$
  \end{itemize}
\end{example}
\begin{example}[Truncated pushouts] \label{ex:po-cont}
  Pushouts truncated to be $n$-types can also be represented as a $\UU^{\leq n}$-indexed container:
  \begin{itemize}
  \item given $A_i : \UU^{\leq n}$ for $i \in \{0,1,2\}$ and $f : A_0 \to A_1$ and $g : A_0 \to A_2$ a constructor $\mathsf{po}(f,g) : S(A_1 +^n_{A_0} A_2)$,
  \item for each $i \in \{0,1,2\}$ a constructor\\ $\mathsf{pos}_i : \Pos(A_1 +^n_{A_0} A_2,\mathsf{po}(f,g),A_i)$.
  \end{itemize}
  More generally, this works for arbitrary HITs with an additional constructor ensuring $n$-truncatedness.
\end{example}

To establish the $n$-truncatedness of the universe we will need to know
that the extension of the container $\cExt_{S,P}(F,i)$ preserves the truncation level of the family $F$.
We cannot however just ask for $\tlevel{n}(\sm{i :
  I}. \cExt_{S,P}(F,i))$ to hold whenever $\tlevel{n}(\sm{(i : I)}
F(i))$ holds, as the latter would already be the whole result when $F
\equiv \inV$. We extract then the following condition from what is needed during the induction in the proof of \cref{general-V-truncated}.
\begin{definition}[Retaining $n$-truncatedness] \label{trunc-preserve}
  An $I$-indexed container $(S,\Pos)$ \emph{retains} $n$-truncation,
  if for any family $F$ over $I$, any $i_b : I$ and element of the extension $(s_b,t_b) : \cExt_{S,\Pos}(F,i_b)$ for $b \in \{0,1\}$ we have that\\
  $\begin{array}{l}
    \prd{j_0\,j_1 : I}\prd{p_0 : \Pos(i_0,s_0,j_0),p_1 : \Pos(i_1,s_1,j_1)}\\
    \hspace{50pt}\tlevel{(n-1)}(\sm {q : j_0 = j_1} t_{j_0}(p_0) =_{F(q)} t_{j_1}(p_1))\\
    \end{array}$\\
  implies\\
  $\begin{array}{l}\tlevel{(n-1)}((i_0,s_0,t_0) =_{\sm{i : I} \cExt_{S,\Pos}(F,i)} (i_1,s_1,t_1))\end{array}$
\end{definition}
%
\begin{example}[Signatures retaining $n$-truncatedness] \label{ex:trunc-preserve}
  \cref{ex:nullary,ex:pi-cont,ex:po-cont} all retain $n$-truncatedness.
  The case of nullary type formers is trivial. 
  As mentioned the case for $\Pi$-types follows the reasoning in \cref{simple-V-is-set}.
  For truncated pushouts we can observe that $(X,(s,t))$ of type $\sm{X : \UU^{\leq n}} \cExt_{S,\Pos}(F,X)$ is equivalent to the following data:
  \begin{itemize}
  \item $X : \UU^{\leq n}$
  \item for $i \in \{0,1,2\}$, both $A_i : \UU^{\leq n}$ and $w_i : F(A_i)$
  \item $f : A_0 \to A_1$ and $g : A_0 \to A_2$
  \item $q : X = A_1 +^n_{A_0} A_2$
  \end{itemize}
  then $X$ and $q$ form a contractible pair, the types of $f$ and $g$ are $n$-truncated by construction, so we only have to worry about the $(A_i,w_i)$ pairs.
  But since the $w_i$ are obtained from $t$, those components are handled by the premise given to us.
  
  Coproducts of such containers also retain $n$-truncatedness as
  their extension will correspond to the sum of the extensions, which
  means we can collect multiple type formers into a single
  $n$-truncatedness preserving indexed container.
\end{example}

\subsection{A $\Pred$-univalent $n$-truncated Universe of $n$-truncated Types}

Now we have everything in place to provide the final version of our universe
\[ \V = \sm {X : \UU^{\leq n}} \inV_{S,\Pos}^{n,\Pred}(X) \]
with $\El_\V : \V \to \UU^{\leq n}$ given by first projection.
We will often omit the sub- and sup- scripts on $\inV$ as they will be clear from context.

The family $\inV$ is defined as follows.
In analogy with the indexed W-type $W_{S,\Pos}$, which one would use for an ordinary closed universe, we use the \emph{$P$-propositional indexed W-type} $\inV \defeq W_{S,\Pos}^P$ (\longshortref{partially-propositional-indexed-W-type}).
The theory of partially propositional indexed W-types is developed in \longshortref{appendix-on-W-type-stuff}.
For convenience, we give here the explicit definition as an indexed higher inductive type.

\begin{definition} \label{general-in-V}
Given a $\UU^{\leq n}$-indexed container $(S,\Pos)$, the family $\inV$ over $\UU^{\leq n}$ is defined as the higher inductive type generated by the following constructors:
\begin{enumerate}[label=(\roman*)]
\item \label{general-in-V:tcon}
given $c : \cExt_{S,\Pos}(\inV,X)$, a constructor $\tcon(c) : \inV(X)$,
\item \label{general-in-V:univalence}
given $X : \UU^{\leq n}$ with $\Pred(X)$, and $w_0, w_1 : \inV(X)$, a path constructor $\uacon(w_0, w_1) : w_0 =_{\inV(X)} w_1$.
\end{enumerate}
\end{definition}

\cref{simple-V-partially-univalent} generalizes to the new setting.
\begin{lemma-qed} \label{general-V-partially-univalent}
  Let $(S,\Pos)$ be a $\UU^{\leq n}$-indexed container, and $\Pred$ a family of propositions over $\UU^{\leq n}$.
  The type $\V$ with $\El_\V : \V \to \UU^{\leq n}$ is $\Pred$-univalent.
\end{lemma-qed}

We unfold here the definition of codes for equality in $W_{S,\Pos}^P$ of \longshortref{indexed-w-types-equality}.

\begin{problem} \label{general-in-V-Eq}
Given an equality $p \co \UU^{\leq n}(X_0, X_1)$ and $w_i : \inV(X_i)$ for $i \in \braces{0, 1}$, we define: 
\begin{itemize}
\item
a type $\Eq_\inV(p, w_0, w_1)$ of \emph{codes of equalities} between $w_0$ and $w_1$ over $p$, as in \cref{fig:general-Eq-tcon}.
\item
such that $\Eq_\inV(p, w_0, w_1)$ is contractible if $\Pred(X_0)$.
\end{itemize}
\end{problem}

\begin{proof}[Construction]
  The definition proceeds by double induction on $w_0$ and $w_1$, defining the pair of $\Eq_\inV(p, w_0, w_1)$ and its conditional contractibility in one go.
  Given $\Pred(X_0)$ the case when
  both $w_i$ are built with $\tcon$ is contractible because it's a join with an inhabited proposition.
  When either $w_0$ or $w_1$ is built by $\uacon$ we again have by univalence that the type of contractible types is contractible.
\end{proof}

From \longshortref{indexed-w-types-equality}, we have the following result.

\begin{proposition-qed} \label{general-equality-in-V}
  Given $p : \UU^{\leq n}(X_0, X_1)$ and $w_i : \inV(X_i)$ for $i \in \braces{0, 1}$, there is an equivalence
  \[
  w_0 =_{\inV(p)} w_1 \simeq \Eq_\inV(p, w_0, w_1)
  .\qedhere\]
\end{proposition-qed}

As in \cref{sec:simple} we define
\[
\Eq_\V(p,\overline{X}_0,\overline{X}_1) \defeq \sm{p : \UU^{\leq n}(X_0, X_1)} \Eq_\inV(p, w_0, w_1)
.\]
for $\overline{X}_i \equiv (X_i,w_i)$ for $i \in \braces{0, 1}$ and derive its equivalence with equality in $\V$.

\begin{figure*}
\[
  \begin{array}{l l}
    \Eq'_\inV(q, (s_0,t_0),(s_1,t_1)) \defeq \Sigma\,(q_s : s_0 =_{S(q)} s_1).\\
    \hspace{122pt}(e_t : \prd{Y : \UU^{\leq n}} \prd{p_0,p_1} p_0 =_{\Pos(q,q_s,Y)} p_1 \to \Eq_\inV(\refl,{t_0}_{Y}(p_0),{t_1}_Y(p_1)))
\\
  \end{array}
  \]

  \[
  \begin{array}{l l}
    \Eq_\inV(q : X_0 = X_1, w_0, w_1) \defeq
    \left\{ \begin{array}{l l}\Pred(X_0) \join \Eq'_\inV(q,c_0,c_1) & \mbox{if}\,  w_0 \equiv \tcon(c_0), w_1 \equiv \tcon(c_1)\\
        \mbox{by contractibility} & \mbox{if $w_0$ or $w_1$ given by $\uacon(\ldots)$}\\
        \end{array} \right.
  \end{array}
  \]

  \caption{Definition of $\Eq_\inV$.}
  \label{fig:general-Eq-tcon}
\end{figure*}

\begin{corollary-qed} \label{general-equality-V}
For $\overline{X}_0, \overline{X}_1 : V$, we have
\[
\V(\overline{X}_0, \overline{X}_1) \simeq \Eq_\V(\overline{X}_0, \overline{X}_1)
.\qedhere\]
\end{corollary-qed}

\subsubsection{$\V$ is $n$-truncated}

\begin{theorem} \label{general-V-truncated}
  Let $(S,\Pos)$ be an $n$-truncatedness retaining container.
  If $\Pred(X)$ implies $\tlevel{n-1}(X)$
  then $\mathsf{V}$ is $n$-truncated.
  \begin{proof}
    Given $\overline{X}_i \equiv (X_i,w_i) : \mathsf{V}$ for $i \in \{0,1\}$ we proceed by induction on $w_0$ and $w_1$ to prove
    $\V(\overline{X}_0,\overline{X}_1)$ is $(n-1)$-truncated.
    By \cref{general-equality-V} and the same reasoning as in the proof of \cref{simple-V-is-set},
    we have to concern ourselves only with the following pushout square\footnote{a variant of~\cref{simple-V-is-set:2}}:
\begin{equation} \label{general-V-is-truncated:1}
\hspace{-1cm} 
\begin{gathered}
\xymatrix@C-0.5cm{
  \sm{p : \UU^{\leq n}(X_0, X_1)} \Pred(X_0) \times E(p)
  \ar[r]
  \ar@{^(->}[d]
&
  \sm{p : \UU^{\leq n}(X_0, X_1)} \Pred(X_0)
  \ar[d]
\\
  \sm{p : \UU^{\leq n}(X_0, X_1)} E(p)
  \ar[r]
&
  \Eq_V(\overline{X}_0, \overline{X}_1)
  \fancypullback{[u]}{[l(0.5)]}
}
\end{gathered}
\hspace{-1cm} 
\end{equation}
where $E(p) = \Eq'_\inV(p,(s_0,t_0),(s_1,t_1))$.
By \longshortref{pushout-mono-n-truncated}, it is enough to show
the top right and bottom left corners are $(n-1)$-truncated to
conclude that $\Eq_V(\overline{X}_0, \overline{X}_1)$ is as well.
$\sm{p : \UU^{\leq n}(X_0, X_1)} \Pred(X_0)$ is $(n-1)$-truncated
because $\Pred(X_0)$ is a proposition and implies $\tlevel{n-1}(X_0)$,
so that $\UU^{\leq n}(X_0, X_1)$ is $(n-1)$-truncated by univalence.
$\sm{p : \UU^{\leq n}(X_0, X_1)} E(p)$ is equivalent to
$(X_0,(s_0,t_0)) = (X_1,(s_1,t_1))$ by \cref{general-equality-in-V}, so
we can conclude its $(n-1)$-trucatedness by using that $(S,\Pos)$ retains $n$-truncatedness,
because its premise is satisfied by the induction hypothesis.
  \end{proof}
\andrea{``(n-1)-truncatedness'' is probably repeated too much}
\end{theorem}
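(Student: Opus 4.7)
The plan is to show that $\V(\overline{X}_0, \overline{X}_1)$ is $(n-1)$-truncated for all $\overline{X}_0, \overline{X}_1 : \V$, which is equivalent to $\V$ being $n$-truncated. By \cref{general-equality-V}, it suffices to prove $\Eq_\V(\overline{X}_0, \overline{X}_1)$ is $(n-1)$-truncated. I would proceed by double induction on $w_0 : \inV(X_0)$ and $w_1 : \inV(X_1)$. Since being $(n-1)$-truncated is itself a proposition, the $\uacon$ path constructor cases impose no further obligations.

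In the remaining case $w_i \equiv \tcon(c_i)$, I would transport the strategy of \cref{simple-V-is-set} to the present setting. The type $\Eq_\V(\overline{X}_0, \overline{X}_1)$ unfolds to $\sm{p : \UU^{\leq n}(X_0, X_1)} \Pred(X_0) \join E(p)$ with $E(p) = \Eq'_\inV(p, c_0, c_1)$. Distributing the dependent sum over the join yields the pushout square~\cref{general-V-is-truncated:1}, whose left vertical is a monomorphism because $\Pred(X_0)$ is a proposition. Applying \longshortref{pushout-mono-n-truncated} (pushouts along monomorphisms preserve $(n-1)$-truncatedness) then reduces the task to checking $(n-1)$-truncatedness of the top-right and bottom-left corners.

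For the top-right corner $\sm{p : \UU^{\leq n}(X_0, X_1)} \Pred(X_0)$, the hypothesis that $\Pred(X_0)$ implies $\tlevel{(n-1)}(X_0)$, combined with univalence of $\UU^{\leq n}$ at $(n-1)$-types, makes $\UU^{\leq n}(X_0, X_1)$ an $(n-1)$-type on the fiber where $\Pred(X_0)$ holds; summing a proposition-indexed family of $(n-1)$-types stays $(n-1)$-truncated. For the bottom-left corner $\sm{p : \UU^{\leq n}(X_0, X_1)} E(p)$, \cref{general-equality-in-V} identifies it with the equality type $(X_0, c_0) =_{\sm{X : \UU^{\leq n}} \cExt_{S,\Pos}(\inV, X)} (X_1, c_1)$. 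Its $(n-1)$-truncatedness follows directly from the hypothesis that $(S, \Pos)$ retains $n$-truncatedness (\cref{trunc-preserve}), once we supply its premise from the induction hypothesis.

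The main obstacle I anticipate is organising the double induction so that the children of $c_0$ and $c_1$ at matched positions supply exactly the data demanded by the premise of \cref{trunc-preserve}. That premise is phrased in terms of matched position pairs $p_0, p_1$ and $(n-1)$-truncation of the corresponding dependent equalities $t_{j_0}(p_0) =_{F(q)} t_{j_1}(p_1)$ in $\inV$; these are precisely the sub-equalities addressed by the outer induction hypothesis applied to each child. Once this correspondence is made explicit, the proof closes by combining the pushout-preservation lemma, univalence at $(n-1)$-types, and the characterisation of dependent equality afforded by \cref{general-equality-in-V}.
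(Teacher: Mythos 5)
Your proposal is correct and follows the same route as the paper's own proof: double induction on the constructors, reduction via \cref{general-equality-V} to the pushout square obtained by distributing the dependent sum over the join, appeal to \longshortref{pushout-mono-n-truncated} for the $\Pred$-branch, and discharge of the other corner by \cref{general-equality-in-V} together with the $n$-truncatedness-retention hypothesis, whose premise is supplied by the induction hypothesis. The concern you flag about matching the induction hypothesis to the premise of \cref{trunc-preserve} dissolves just as you anticipate, since the IH instantiated at child positions is literally that premise with $I = \UU^{\leq n}$ and $F = \inV$.
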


As the special case where $P$ is constantly false, we obtain the folklore construction of $0$-truncated ``closed'' universes.

\begin{corollary-qed} \label{cor:no-univ-trunc}
  Let $(S,\Pos)$ be a $0$-truncatedness retaining container.
  If $\Pred = \lam{X} \bot$ then $\mathsf{V}$ is $0$-truncated.
\end{corollary-qed}

\section{Models of $n$-truncated Type Theory with Univalence for $(n-1)$-types.}
\label{sec:model}

In this section, we show that Martin-L\"of type theory with function extensionality and the assumption that all types are $n$-truncated is consistent with univalence for $(n-1)$-types.

\andrea{reviewer 2 complains that the phrasing is a bit convoluted, maybe too many details and parentheticals?}
\begin{remark} \label{hierarchies-necessary}
The full strength of this statement is realized only with a sufficiently long chain of universes $\UU_0, \ldots, \UU_k$, one included in the next.
For if $k < n$, it is known~\cite[Section~6]{kraus-sattler:hierarchy-of-universes} how to modify a model of homotopy type theory (including univalence, but no higher inductive types) to be $n$-truncated by restricting types of ``size'' $i$ (classified by $\UU_i$) to $i$-types (and restricting all types to the $n$-truncated).
\end{remark}

For the reason given in the above remark, we consider Martin-L\"of type theory to come with an $\omega$-indexed (cumulative) hierarchy of universes $\UU_0, \UU_1, \ldots$.
Alternatively, we could include higher inductive types, which in the presence of univalence for $(n-1)$-types are still able to produce proper $n$-types.%
\footnote{
For example, univalence for $0$-types is sufficient to show that the circle $S^1$ is a proper $1$-type.
We consider an $n$-type \emph{proper} if it is not an $(n-1)$-type.
}
However, a key point is still for an $n$-truncated universe univalent for $(n-1)$-types to be able to contain a code for a (smaller) universe of the same kind, and at this point we may as well consider a hierarchy of universes.

\smallskip

We use \emph{categories with families} (cwfs)~\cite{dybjer:cwf} as our notion of model of dependent type theory.
They are models of a generalized algebraic theory~\cite{cartmell:generalized-algebraic-theories} (as will all semantic notions considered here).
Fixing the underlying category $\C$, we obtain a category of \emph{cwf structures} on $\C$.
We refer to a cwf structure on $\C$ by its presheaf of types $\Ty$, the presheaf $\Tm$ of terms on its category of elements $\int \Ty$ left implicit.

Let $T$ stand for a choice of type formers, specified by a collection of rules that are generally natural in the context (one way to ensure this naturality is by demanding that these rules be interpretable in presheaves over the category of contexts and substitutions~\cite{awodey:natural-models,paolo:thesis}).
Type formers can be standard type formers such as dependent sums, dependent products, or identity types, but also ``axioms'' such as function extensionality.
As before, we have categories of \emph{cwfs with type formers $T$} as well as \emph{cwf structures with type formers $T$} on a fixed category $\C$.

\begin{definition}[{\cite[Definition~2.4]{annenkov-et-al:two-level-type-theory}}]
A \emph{cwf hierarchy $\Ty$ with type formers $T$} on a category $\C$ is a sequential diagram
\[
\xymatrix{
  \Ty_0
  \ar[r]
&
  \Ty_1
  \ar[r]
&
  \ldots
}
\]
of cwf structures with type formers $T$ on $\C$.
\end{definition}

Note that the \emph{lifting maps} $\Ty_i \to \Ty_{i+1}$ preserve type formers $T$.
As before, cwf hierarchies with type formers $T$ assemble into a category.

\begin{definition}[{\cite[Definition~2.5]{annenkov-et-al:two-level-type-theory}}] \label{model-of-mltt}
A \emph{model of Martin-L\"of type theory} with type formers $T$ is a category $\C$ with a cwf hierarchy $\Ty$ with type formers $T$ on $\C$ together with, for each $i$, a global section $\UU_i$ with an isomorphism $\El_i \co \Tm_{i+1}(\Gamma, \UU_i) \simeq \Ty_i(\Gamma)$ natural in
$\Gamma \in \C$.
\end{definition}

In all uses of the above definition, we will implicitly assume that $T$ contains at least dependent sums, dependent products, identity types, and finite coproducts.
This makes available basic concepts of homotopy type theory such as being $n$-truncated (for an external number $n$).
We write $\MLTT_T$ for the category such models, and $\MLTT_T(\C)$ if we wish to fix the underlying category.
We say that $\C$ is \emph{$n$-truncated} (where $n \geq -2$) if all $A \in \Ty_i(\Gamma)$ are $n$-truncated, naturally in $\Gamma \in \C$, for all $i$.
As a type former (an axiom), we denote it $\Tr(n)$.

We call $\UU_i$ the \emph{$i$-th universe} of $\C$.
Note that, in contrast to our internal reasoning, we explicitly reference the decoding natural transformation $\El_i$.
Given a generic property $P$ of types (such as being $n$-truncated) forming an internal proposition, we say that $\C$ satisfies \emph{univalence for $P$} if the universe $\UU_i$ is $P$-univalent for all $i$ in the sense of what follows \cref{partial-univalence}.
We add a subscript $\UA(n)$ to $\MLTT_T$ to indicate restriction to models with univalence for $n$-types.

Let $T$ be a collection of type formers.
Given $\C \in \MLTT_T$ and $i \geq 0$, the type forming operations contained in $T$ can be encoded as internal operations on the universe $\UU_i$.
Assume that these internal operations restrict to the subuniverse $\UU_i^{\leq n}$ of $n$-types.
For well-behaved $T$, it is possible to find a global $\UU_i$-indexed container $C$ of size $i + 1$ such that for any family $S$ over $\UU_i^{\leq n}$, a lift of the given internal operations on $\UU_i^{\leq n}$ to $\sm{X : \UU_i^{\leq n}} S(X)$ corresponds to a $C$-algebra structure on $S$.
Lastly, assume that $C$ retains $n$-truncatedness in the sense of \cref{trunc-preserve}.
If all of this is the case, naturally in $\C$ , we say that $T$ is \emph{$n$-benign}.

\begin{figure*}
\begin{multicols}{3}
\begin{itemize}
\item
Unit type:
\begin{align*}
S &= 1
,\\
t(\bullet) &= 1
,\\
\Pos(A, x, y) &= 0
.\end{align*}
\item
Dependent sums:
\begin{align*}
S &= \textstyle{\sm{A : \UU^{\leq n}} A \to \UU^{\leq n}}
,\\
t(A, B) &= \textstyle{\sm{a : A} B(a)}
,\\
\Pos(A, B) &= 1 + A
,\\
s((A, B), \inl(\bullet)) &= A
,\\
s((A, B), \inr(a)) &= B(a)
.\end{align*}
\item
Dependent products:
\begin{align*}
S &= \textstyle{\sm{A : \UU^{\leq n}} A \to \UU^{\leq n}}
,\\
t(A, B) &= \textstyle{\prd{a : A} B(a)}
,\\
\Pos(A, B) &= 1 + A
,\\
s((A, B), \inl(\bullet)) &= A
,\\
s((A, B), \inr(a)) &= B(a)
.\end{align*}
\item
Identity types:
\begin{align*}
S &= \textstyle{\sm{A : \UU^{\leq n}} A \times A}
,\\
t(A, x, y) &= A(x, y)
,\\
\Pos(A, x, y) &= 1
,\\
s((A, x, y), \bullet) &= A
.\end{align*}
\item
Empty type:
\begin{align*}
S &= 1
,\\
t(\bullet) &= 0
,\\
\Pos(A, x, y) &= 0
.\end{align*}
\item
Binary coproducts:
\begin{align*}
S &= \UU^{\leq n} \times \UU^{\leq n}
,\\
t(A, B) &= A + B
,\\
\Pos(A, B) &= 1 + 1
,\\
s((A, B), \inl(\bullet)) &= A
,\\
s((A, B), \inr(\bullet)) &= B
.\end{align*}
\end{itemize}
\end{multicols}
\caption{
$\UU^{\leq n}$-indexed containers for basic type formers.
The specifying data is given in a slightly alternate form: a type of shapes $S$, a target function $t : S \to \UU_i^{\leq n}$, a family of positions $\Pos$ over $S$, and a source function $s : \prod_{s : S} \Pos(s) \to \UU_i^{\leq n}$.
This corresponds to a polynomial functor $\UU^{\leq n} \leftarrow S \to P \to \UU^{\leq n}$ with middle arrow a fibration.
The actual indexed container is obtained by taking fibers using the identity type.
}
\label{fig:containers-for-type-formers}
\end{figure*}

\begin{example}
The basic type formers we implicitly require for a model of Martin-L\"of type theory are $n$-benign for any $n \geq 0$.
The associated $\UU_i^{\leq n}$-indexed containers are listed in \cref{fig:containers-for-type-formers} (with the size index $i$ omitted).
Retention of $n$-truncatedness in the sense of \cref{trunc-preserve} follows the scheme of \cref{ex:trunc-preserve}.
\end{example}

\begin{example}
Any type former that only has term forming operations is automatically $n$-benign.
In the first place, this applies to axiom-style type formers such as function extensionality.
\end{example}

We are now ready to state the main result.

\begin{theorem} \label{derived-truncated-model}
Let $n \geq 0$ and $T$ be an $n$-benign choice of type formers, including function extensionality.
Let $(\C, \Ty)$ be a model of Martin-L\"of type theory with type formers $T$ that is univalent for $(n-1)$-types.
Then there is an $n$-truncated model $\Ty'$ of Martin-L\"of type theory with type formers $T$ on $\C$ that is univalent for $(n-1)$-types.
Furthermore, there is a morphism $\Ty' \to \Ty$ of cwf hierarchies with type formers $T$ on $\C$.
\end{theorem}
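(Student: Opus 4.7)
The plan is to build $\Ty'$ as the cwf hierarchy on $\C$ induced level-by-level by an internally constructed partially univalent $n$-truncated universe via the machinery of \cref{sec:general}. Since $T$ is $n$-benign, for each $i \geq 0$ we obtain a $\UU_i^{\leq n}$-indexed container $(S_i, \Pos_i)$ of size $i+1$ that retains $n$-truncatedness and whose algebras correspond to lifts of the operations of $T$. Reasoning internally in $\Ty_{i+1}$, which we have assumed supports univalence for $(n-1)$-types, I instantiate \cref{general-in-V} with $\Pred \equiv \tlevel{n-1}$ to obtain $\inV_i$ and the universe $\V_i : \UU_{i+1}$ with $\El_{\V_i} : \V_i \to \UU_i^{\leq n}$. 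Define the new cwf structure by $\Ty'_i(\Gamma) \defeq \Tm_{i+1}(\Gamma, \V_i)$, natural in $\Gamma$, with term formation $\Tm'_i(\Gamma, A) \defeq \Tm_i(\Gamma, \El_i(\El_{\V_i}(A)))$. Lifting maps $\Ty'_i \to \Ty'_{i+1}$ arise from an internal embedding $\V_i \hookrightarrow \V_{i+1}$ obtained from cumulativity of the containers across universe sizes, which commutes with every constructor of $\inV$.

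The required properties then fall out cleanly: type formers $T$ on $\Ty'$ are interpreted via the constructor~\ref{general-in-V:tcon}, since the induced $(S_i, \Pos_i)$-algebra structure on $\inV_i$ is exactly a lift of $T$ by $n$-benignness; each $A \in \Ty'_i(\Gamma)$ is $n$-truncated because it decodes through $\UU_i^{\leq n}$; and univalence for $(n-1)$-types of $\V_i$ is precisely \cref{general-V-partially-univalent} with $\Pred \equiv \tlevel{n-1}$. The promised morphism $\Ty' \to \Ty$ is the composition $\El_i \circ \El_{\V_i}$ on types, acting as the identity on terms; it preserves type formers because $\El_{\V_i} \circ \tcon$ agrees with the ambient operation by the defining correspondence of $n$-benignness. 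Notice that \cref{general-V-truncated} is being used here as an internal statement in $\Ty$, not as a property of the ambient model, so it applies even though $\Ty$ itself is in general not $n$-truncated.

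The main obstacle is naturality in $\Gamma \in \C$: the entire internal construction of \cref{sec:general}---the indexed higher inductive type $\inV_i$, its encode--decode characterization of equality, and the pushout argument in the proof of \cref{general-V-truncated}---must be carried out uniformly across contexts so that $\Ty'$ is a genuine cwf structure on $\C$ rather than a fibrewise collection of universes. $n$-benignness was defined precisely to enforce this naturality for the containers $(S_i, \Pos_i)$; beyond that, the argument relies on the ambient model supporting the indexed higher inductive types and pushouts along monomorphisms used in \cref{sec:general}. This is the substantive semantic assumption implicit in the statement, and is where one appeals to the existence of a sufficiently expressive model of homotopy type theory with univalence for $(n-1)$-types, as alluded to in the introduction to \cref{sec:model}.
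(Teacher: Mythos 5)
Your proposal captures the overall architecture of the paper's proof: define $\Ty'_i(\Gamma) = \Tm(\Gamma, \V_i)$ for an internally constructed global universe $\V_i = \sum_{X : \UU_i^{\leq n}} \inV_i(X)$ with $\inV_i$ the partially propositional indexed W-type from \cref{sec:general}, inherit terms from $\Ty$, and use $n$-benignness to lift the type-forming operations of $T$. However, there is a concrete gap: you never arrange for $\V_i$ to have a \emph{code} in $\V_{i+1}$, which is required for $\Ty'$ to satisfy \cref{model-of-mltt}. The new model must come with a global section $\UU'_i \in \Ty'_{i+1}(1) = \Tm(1, \V_{i+1})$ together with an isomorphism $\El'_i : \Tm'_{i+1}(\Gamma, \UU'_i) \simeq \Ty'_i(\Gamma)$, and the natural choice $\UU'_i = \V_i$ requires that $\V_i$ decode from some element of $\V_{i+1}$. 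The bare container $(S_i, \Pos_i)$ supplied by $n$-benignness encodes the type formers of $T$, not the universes (which in this setup live in the cwf-hierarchy structure, not in $T$); and the lifting map $\V_i \hookrightarrow \V_{i+1}$ that you invoke embeds $\V_i$ into $\V_{i+1}$ as a \emph{subtype}, which is not the same as producing a global \emph{term} of $\V_{i+1}$ whose decoding is $\V_i$. The paper handles this by replacing $C_{i+1}$ with $C'_{i+1}$, the coproduct of $C_{i+1}$ with a position-free container whose shapes are indexed by $j \leq i$ and whose output index is $\V_j$; this adds the missing constructors $\overline{\V_j} : \inV_{i+1}(\V_j)$. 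As \cref{hierarchies-necessary} explains, this is not a technicality but the crux of why the result is nontrivial.

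Two smaller points. First, once universe codes are added, the lifting map $\inV_i \to \inV_{i+1}$ must be built by recursion on $\inV_i$ using that $\inV_{i+1}$ restricted along $\mathsf{lift}$ carries a $C'_i$-algebra structure by \emph{forgetting} the code for the $i$-th universe from its $C'_{i+1}$-structure; "cumulativity of containers" alone does not suffice since $C'_{i+1}$ has strictly more shapes than $C'_i$. Second, to verify that these lifting maps commute strictly with the interpretation of $T$, the paper uses the judgmental $\beta$-law of the higher inductive family $\inV_i$ (the only place in the paper where a judgmental $\beta$-law for a HIT is needed); your proposal does not account for this. Your final paragraph frames naturality as "the main obstacle," but the paper's global-universe approach makes naturality in $\Gamma$ automatic — the substantive semantic assumption is rather that $\C$ supports pushouts and partially propositional indexed W-types, which the paper spells out in \cref{relative-consistency}.
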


\begin{proof}
Given a cwf structure $\Ty$ on a category $\C$, note that a further cwf structure $\Ty'$ together with a morphism $\Ty' \to \Ty$ corresponds up to isomorphism to just a presheaf $\Ty'$ of types with a natural transformation $\Ty' \to \Ty$.%
\footnote{This is immediate when switching from cwfs to the equivalent notion of \emph{categories with attributes}.}
The terms of $\Ty'$ are inherited (up to isomorphism) from those of $\Ty$ since terms correspond to sections of context projections and $\Ty' \to \Ty$ should preserve context extension.
Abstractly speaking, the forgetful functor from cwf structures on $\C$ to discrete fibrations on $\C$ is itself a discrete fibration.

Let us further assume that $\Ty$ implements some type type formers $T$.
To interpret $T$ in $\Ty'$ such that $\Ty' \to \Ty$ preserves $T$, we only have to interpret the actual type forming operations of $T$ in $\Ty'$ such that they are preserved by $\Ty' \to \Ty$; the term forming operations of $T$ will then be uniquely inherited from $\Ty$.

Let us now return to the situation of \cref{derived-truncated-model}.
The type forming operations of the type formers $T$ in $(\C, \Ty_i)$ can be encoded as internal operations on the universe $\UU_i$.%
\footnote{Note that this is only a bijective correspondence if we have the judgmental $\eta$-law for dependent products, but this is not required here.}
Since $T$ is $n$-benign, these internal operations further restrict to the subuniverse $\UU_i^{\leq n}$ of $n$-types.
We now wish to define a global type $V_i$ of size $i+1$ with a map $V_i \to \UU_i^{\leq n}$.
Restricting $\El_i$ along this map, we can see $V_i$ as a universe.
Defining $\Ty'_i(\Gamma) = \Tm(\Gamma, V_i)$ with $\Ty'_i \to \Ty_i$ induced by $V_i \to \UU_i^{\leq n}$, we then obtain the cwf structure $\Ty'_i$ with a map $\Ty'_i \to \Ty_i$.
Interpreting $T$ in $\Ty'_i$ compatible with $\Ty_i$ will follow from a (strict) lift of the internal type formation operations from $\UU_i^{\leq n}$ to $V_i$.
Finally, everything needs to be natural in $i \in \omega$.

Let us start with the base $i = 0$.
We will define $V_0 \defeq \sm{X : \UU_0^{\leq n}} \inV_0$ for a family $\inV_0$ over $\UU_0^{\leq n}$, with $V_0 \to \UU_0^{\leq n}$ the first projection.
Using that $T$ is $n$-benign, we have a $\UU_0^{\leq n}$-indexed container $C_0$ such that a lift of the internal formations operations from $\UU_0^{\leq n}$ to $V_0$ corresponds to a $C_0$-algebra structure on $\inV_0$.
We now follow \cref{sec:general} for the construction of $\inV_0$ from $C_0$; this means $\inV_0$ is the $\tlevel{n-1}$-propositional indexed W-type $W_{C_0}^{\tlevel{n-1}}$ as per \longshortref{partially-propositional-indexed-W-types}.
In particular, we obtain a $C_0$-algebra structure on $\inV_0$.
Note that $V_0$ is univalent for $(n-1)$-types by \cref{general-V-partially-univalent} and $n$-truncated by \cref{general-V-truncated}.

For general $i$, we let $C_i'$ be the coproduct of the $\UU_i^{\leq n}$-indexed container $C_i$ given from $T$ being $n$-benign with the indexed container with shapes $0 \leq j < i$, with indexing of $j$ being $V_j$ (lifted to $\Ty_i$), and no positions.
We then define $\inV_i$ from $C_i'$ as before.
This guarantees that there are codes for the universes below $i$ in $V_i$.

To make $\Ty'$ into a cwf hierarchy and $\Ty' \to \Ty$ into a morphism of cwf hierachies, we need to construct, for every $i \geq 0$, a dotted morphism making the naturality square
\[
\xymatrix{
  \Ty'_i
  \ar[r]
  \ar@{.>}[d]
&
  \Ty_i
  \ar[d]
\\
  \Ty'_{i+1}
  \ar[r]
&
  \Ty_{i+1}
}
\]
of presheaves of types commute.
This amounts to defining internal $V_i \to V_{i+1}$ making the square
\[
\xymatrix{
  V_i
  \ar[r]
  \ar@{.>}[d]
&
  \UU_i
  \ar[d]^{\lift}
\\
  V_{i+1}
  \ar[r]
&
  \UU_{i+1}
}
\]
commute strictly.
In turns, this corresponds to an internal function
\begin{equation} \label{main-theorem:internal-lift'}
\begin{gathered}
\inV_i(X) \to \inV_{i+1}(\mathsf{lift}(X))
\end{gathered}
\end{equation}
for $X : \UU_i^{\leq n}$.
We define this by recursion for $\inV_i$, noting that $\inV_{i+1}$ restricted along $\mathsf{lift}$ carries a $C_i'$-algebra structure, forgetting the code for the $i$-th universe in its $C_{i+1}'$-algebra structure.

It remains to check that the map $\Ty'_i \to \Ty'_{o+1}$ respects the type forming operations of $T$.
This follows from~\cref{main-theorem:internal-lift'} commuting strictly with $C_i$-algebra structures.
This follows from the judgmental $\beta$-law of the higher inductive family $\inV_i$.%
\footnote{
This is the only place in our entire construction where judgmental $\beta$-laws for higher inductive types are needed.
One might well regard it as an artifact of our universe hierarchy setup.
}

It remains to check that $\Ty'$ has universes as required by \cref{model-of-mltt}.
Indeed, the $i$-th universe $\UU_i'$ is simply given by $V_i$ itself, with $\El'_i$ the identity isomorphism.
\end{proof}

\begin{corollary} \label{relative-consistency}
Relative to Martin-L\"of type theory with function extensionality and univalence for $(n-1)$-types (and any further $n$-benign type formers), if the addition of pushouts and propositionally truncated indexed W-types is consistent, then it is consistent to assume that all types are $n$-truncated.
\end{corollary}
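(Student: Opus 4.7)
The plan is to invoke \cref{derived-truncated-model} directly. By the assumed consistency, pick a model $(\C, \Ty)$ of Martin-L\"of type theory with function extensionality, univalence for $(n-1)$-types, the given $n$-benign type formers $T$, together with pushouts and propositionally truncated indexed W-types. These two HIT formers are exactly what is needed to carry out the construction of \cref{sec:general} internally to $(\C, \Ty)$: the family $\inV$ is defined as the partially propositionally truncated indexed W-type $W_{S,\Pos}^{\tlevel{n-1}}$, and the $n$-truncatedness proof in \cref{general-V-truncated} proceeds via the pushout square~\cref{general-V-is-truncated:1}.

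With these HITs available internally, \cref{derived-truncated-model} produces a cwf hierarchy $\Ty'$ on the same underlying category $\C$, still carrying the type formers $T$ and still univalent for $(n-1)$-types, with every type now $n$-truncated. Crucially, the extra HIT formers are only required internally to $(\C, \Ty)$ for building each universe $V_i$; they need not appear as type formers of $\Ty'$ itself. Hence $\Ty'$ is a model of the base theory extended by $\Tr(n)$, without any commitment to HITs in the conclusion.

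Relative consistency then follows from the promised morphism of cwf hierarchies $\Ty' \to \Ty$ preserving $T$ (which includes the empty type through finite coproducts). Indeed, a derivation of falsity in the extended theory would, by initiality of the syntactic model, yield a term of the empty type in $\Ty'$; transporting it along $\Ty' \to \Ty$ would produce a term of the empty type in $\Ty$, contradicting consistency of the base model we started with.

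The only nontrivial step beyond quoting \cref{derived-truncated-model} is the bookkeeping observation that pushouts and partially propositionally truncated indexed W-types suffice as the internal HIT formers used in the construction of the $V_i$; this has been arranged precisely through the indexed-container formulation of \cref{sec:general}. Once this localization is in hand, the remainder is a standard argument about transport of consistency along cwf morphisms, and does not itself present any obstacle.
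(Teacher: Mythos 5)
Your proof is correct and takes essentially the same route as the paper: invoke \cref{derived-truncated-model} on a model of the base theory extended with the two HIT formers, obtain the $n$-truncated model $\Ty'$ on the same category, and observe that inhabitation of the empty type transports along the morphism $\Ty' \to \Ty$. The paper's own proof is the same three-line argument; you merely make explicit (and usefully so) that the HIT formers are consumed internally to build the universes $V_i$ and need not be present as type formers in the conclusion $\Ty'$.
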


\begin{proof}
Given a model for the former theory, we obtain a model (on the same category) of the latter theory by \cref{derived-truncated-model}.
By construction, the empty type is inhabited in this model exactly if it is inhabited in the old model.
\end{proof}

\begin{corollary} \label{consistency-result}
In Martin-L\"of type theory with function extensionality and univalence for $(n-1)$-types (and any further $n$-benign type formers implemented by a known model of homotopy type theory), it is consistent to assume that all types $n$-truncated.
\end{corollary}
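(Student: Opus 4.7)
The plan is to instantiate \cref{derived-truncated-model} with a concrete model of homotopy type theory, thereby upgrading the conditional statement of \cref{relative-consistency} to an unconditional consistency result under the hypothesis that a sufficient model is known.

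First, I would fix a known model $(\C, \Ty)$ of homotopy type theory implementing the given $n$-benign type formers $T$. The hypothesis directly supplies such a model, and standard examples (the simplicial set model, or cubical set models) support function extensionality together with fully univalent universes, which in particular yields univalence for $(n-1)$-types. The only additional higher inductive families used in \cref{sec:general} to build $\inV$ are pushouts and propositionally truncated indexed W-types; both are available in these models, so the hypotheses of \cref{derived-truncated-model} are met.

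Second, I would apply \cref{derived-truncated-model} to $(\C, \Ty)$ to obtain a cwf hierarchy $\Ty'$ on the same underlying category $\C$, implementing $T$, satisfying univalence for $(n-1)$-types, and having the property that all types are $n$-truncated. The theorem also furnishes a morphism $\Ty' \to \Ty$ of cwf hierarchies with type formers $T$.

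Finally, I would transfer consistency along this morphism: a closed term inhabiting the empty type in $\Ty'$ would be sent by $\Ty' \to \Ty$ to such a term in $\Ty$, contradicting consistency of the base model. Hence $\Ty'$ is a consistent model of the desired theory. The main work is not in this corollary itself, which is a short instantiation step, but has already been carried out in \cref{sec:simple,sec:general} where $\V$ is built and shown partially univalent and $n$-truncated, and in \cref{derived-truncated-model} where the universe hierarchy and compatibility with the lift maps are handled; the only genuine external input needed here is the existence of an ambient model supporting pushouts and propositionally truncated indexed W-types.
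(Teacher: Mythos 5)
Your proposal is correct and follows essentially the same route as the paper: the paper first packages the instantiation into \cref{relative-consistency} (apply \cref{derived-truncated-model}, then observe that the empty type is inhabited in the new model iff in the old) and then \cref{consistency-result} simply plugs in a known model such as simplicial or cubical sets supporting higher inductive families. You merge those two steps into one direct argument, using the morphism $\Ty' \to \Ty$ to transfer a hypothetical inhabitant of the empty type back to the base model; this is the same content, just phrased without the intermediate corollary.
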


\begin{proof}
Apply \cref{relative-consistency} to a model of homotopy type theory such as simplicial sets or cubical sets that supports higher inductive families.
\end{proof}

\section{A Cubical Type Theory with UIP, Propositional Extensionality, and Homotopy Canonicity}
\label{sec:htcan}

In \cite{coquand-huber-sattler:homotopy-canonicity} the authors
establish the homotopy canonicity property for a cubical type theory
without judgmental equations for the box filling operations.
Here we will follow that proof to prove homotopy canonicity for a
cubical type theory with an axiomatic UIP principle and propositional
extensionality given by a modified $\Glue$-type.
To keep this section brief, we closely follow their notation.

\subsection{$0$-truncated Cubical Cwf}
We take the definition of cubical cwf from
\cite{coquand-huber-sattler:homotopy-canonicity} and adapt it by
adding a new $\mathsf{trunc}$ operation and an extra argument to $\Glue$-types.
The definition is internal to the category of cubical sets of \cite{CCHM}.
A minor difference to \cite{coquand-huber-sattler:homotopy-canonicity}, following the previous section, we only require a sequential diagram of $\Ty_i$ presheaves, without topmost $\Ty$, and we do not require the lifting map $\Ty_i \to \Ty_{i+1}$ to be mono.

Given $A : \Ty_i(\Gamma)$, we define $\isprop(A)$ and $\isset(A)$ in $\Ty_i(\Gamma)$ using the $\mathsf{Path}$-type former.

\begin{itemize}
\item \textbf{Glue types.} Given $A : \Ty_i(\Gamma)$, $A_p : \Elem(\Gamma, \isprop(A))$, $\varphi : \FF$,
  $T : [\varphi] \to \Ty_i(\Gamma)$, and $e : \Elem(\Gamma,\Equiv(T\,\tt,A))$, we have the \emph{glueing} $\Glue(A,A_p,\varphi,T,e)$ in $\Ty_i(\Gamma)$, equal to $T\,\tt$ on $\varphi$.
  We also have $\mathsf{glue}(a,t)$, $\mathsf{unglue}$, and their equations as described in \cite[Sec.~1.3]{coquand-huber-sattler:homotopy-canonicity}.
\item \textbf{$0$-truncation operation.}
  Given $A$ in $\Ty_i(\Gamma)$ we have $\mathsf{trunc}(A)$ in $\Elem(\Gamma,\isset(A))$.
  No equations are required other than stability under substitution.
\end{itemize}

\subsection{Standard Model}
We now work in the category of cubical sets of \cite{CCHM}.
It satisfies the assumptions listed at the top of
\cite[Section~2.2]{coquand-huber-sattler:homotopy-canonicity}, so we have
a hierarchy of universes of fibrant types $\UFib_i$, defined from a
cumulative hierarchy of universes of presheaves $\mathsf{U}_i$ for $i
\in \{0,1,\ldots,\omega\}$.
Using it as a model of cubical type theory with uniformly indexed higher inductive types, we replay the construction from \cref{sec:general} and obtain a family $\inV : (\UFib_i)^{\leq 0} \to \UFib_{i+1}$.
Then we take
\[
\V_i \equiv \textstyle{\sm{A : (\UFib_i)^{\leq 0}} \inV(A) : \UFib_{i+1}}
\]
with $\El_{\V_i} : \V \to (\UFib_i)^{\leq 0}$.

Just as~\cite[Section~2.3]{coquand-huber-sattler:homotopy-canonicity} defines the standard model as an internal cwf from the universes $\UFib_i$, we define the standard model from the universes $V_i$:
\begin{itemize}
\item $\mathsf{Con}$ is the category with objects in $\mathsf{U}_\omega$ and functions between them as morphism,
\item the types of size $i$ over $\Gamma : \mathsf{U}_\omega$ are maps $\Gamma \to \V_i$,
\item the elements of $A : \Gamma \to \V_i$ are $\Pi(\rho : \Gamma). \El_{\V_i}(A\,\rho)$.
\end{itemize}
We will often omit the use of $\El_{\V_i}$ to lighten the notational burden.

Type formers $\Pi,\Sigma,\mathsf{N},\mathsf{Path}$ and universes are
given by including a code for them in the container $(S,\Pos)$ for
$\V_i$. The filling operation is derived from the one for $\UFib_i$.
$\Glue$-types are handled below.

\subsubsection{A Code for $\Glue$ in $\V_i$}

One would think that $\inV$ might need an explicit constructor for $\Glue$.
However, the path constructor $\uacon$ of $\inV$ suffices to derive one, given
$\Glue$ for $\UFib$ and fibrancy of $\inV$.

Given $\Gamma : \mathsf{U}_\omega$, $A : \Gamma \to \V_i$, $A_p :
\Pi(\rho : \Gamma). \isprop(A\,\rho)$, $\varphi : \FF$, $T : [\varphi]
\to \Gamma \to \V_i$ and $e : [\varphi] \to \mathsf{Equiv}(T \tt,
A)$, we wish to define $\Glue(A,A_p,\varphi,T,e) : \Gamma \to \V_i$.
We take
\[
\Glue(A,A_p,\varphi,T,e)\,\rho \defeq (G,w_G)
\]
where $G = \Glue(\El_\V(A\,\rho),\varphi,\El_\V \circ (T\,\rho),e\,\rho) : \UFib_i$, and given $w_A \equiv A\,\rho .2$ and $w_T \equiv \lam{o} T\,o\,\rho .2$,
we obtain $w_\Glue$ by first transporting $w_A :
\inV(\El_\V(A\,\rho))$ to $w_G' : \inV(G)$ by the canonical path between the two indices, and then
composing under $[\varphi]$ with a path between $w_G'$ and $w_T$ built
by $\uacon$. The latter is possible because, assuming $[\varphi]$, both
$w_G'$ and $w_T$ are codes for $\El_V(T\,\tt\,\rho)$, which is propositional
by $A_p\,\rho$.
Note that with this correction $\Glue(A,A_p,\varphi,T,e) \equiv T\,\tt$ when $[\varphi] \equiv \top$.
One then checks that the code so defined commutes with the lifting maps $\V_i \to \V_{i+1}$.

\subsection{Sconing Model}

\newcommand{\MM}{\mathcal{M}}

Given a cubical cwf $\MM$ (denoted by $\mathsf{Con}$, $\Ty_i$, $\Elem$, $\ldots$), we want to define a new cubical cwf
$\mathcal{M}^*$, (denoted by $\mathsf{Con}^*$, $\Ty^*_i$, $\Elem^*$, $\ldots$) as the Arting glueing of $\mathcal{M}$ along an internal global
sections functor $\verts{-}$. We assume $\MM$ size-compatible with the universes $\mathsf{U}_i$ as in \cite[Sec.~3]{coquand-huber-sattler:homotopy-canonicity}.
In \cite{coquand-huber-sattler:homotopy-canonicity}, the functor $\verts{-}$ targets the standard model directly, given that $\Elem(1,A)$ is a fibrant type.
In our case, we have to include a code for it in $\inV_i$, as in extending the container $(S,\Pos)$, as follows:
\begin{itemize}
  \item given $A : \Ty_i(1)$, a constructor $\ceils{A} : \inV_i(\Elem(1,A))$.
\end{itemize}
Note that both $\Ty_i(1)$ and $\Elem(1,A)$ are $0$-truncated, because
the $\mathsf{trunc}$ operation implies $\isset(\Elem(1,A))$ for any
$A$ in $\Ty_j(1)$, and $\Ty_i(1)$ itself is equivalent to
$\Elem(1,\mathsf{U}_i)$.  This makes sure that the extended container
still preserves $0$-truncatedness.

For the definition of natural numbers in $\MM^*$, we will also need a
code for the type family $\mathbb{N}' : \Elem(1,\mathsf{N}) \to
\UFib_0$ defined in \cite[Appendix B]{coquand-huber-sattler:homotopy-canonicity}:
\begin{itemize}
  \item given $n : \Elem(1,\mathsf{N})$, a constructor $\mathsf{N}'(n) : \inV_0(\mathbb{N}'\,n)$
\end{itemize}
where $\mathbb{N}'\,n$ can be shown to be $0$-truncated by \cref{cor:no-univ-trunc}.

The functor $\verts{-}$ is then given on contexts, types, and elements of $\mathcal{M}$ like so:
\begin{itemize}
\item $|\Gamma| \defeq \mathsf{Hom}_\mathcal{M}(1,\Gamma)$,
\item $|A|\,\rho \defeq (\Elem(1, A), \ceils{A\rho})$,
\item $|a|\,\rho \defeq a\rho$.
\end{itemize}

We now define the sconing model $\MM^*$, starting with the cwf components.
\begin{itemize}
\item A context $(\Gamma,\Gamma') : \Con^*$ consists of $\Gamma : \Con$ in $\MM$ and a family $\Gamma' : |\Gamma| \to \UU_\omega$.
\item A type $(A,A') : \Ty_i^*(\Gamma,\Gamma')$ consists of a type $A : \Ty_i(\Gamma)$ in $\MM$ and a family
  \[ A' : \Pi (\rho : |\Gamma|) (\rho' : \Gamma'\,\rho) \to |A|\,\rho \to \V_i \]
  of proof-relevant predicates over it.
\item An element $(a,a') : \Elem^*((\Gamma,\Gamma'),(A,A'))$ consists of an element $a : \Elem(\Gamma,A)$ in $\MM$ and
  \[ a' : \Pi (\rho : |\Gamma|) (\rho' : \Gamma'\,\rho) \to A(\rho,\rho',a\rho) .\]
\end{itemize}
We observe that this definitions differs from the one given in
\citet{coquand-huber-sattler:homotopy-canonicity} only by the use of
$\V_i$ in place of $\UFib_i$ to define $\Ty_i^*$. As such we will
not repeat here the details about the rest of the cwf structure or the
shared type formers and operations, and instead discuss only $\Glue^*$ and $\trunc^*$.

\subsubsection{$\Glue$-types}

\begin{lemma} \label{lem:isprop-sconing}
  Let $(A,A')$ in $\Ty_i^*(\Gamma,\Gamma')$. The following statements are logically equivalent, naturally in $(\Gamma,\Gamma')$:
\begin{gather}
\Elem^*((\Gamma,\Gamma'), \isprop^*(A,A'))
\label{lem:isprop-sconing:0}
\\
\begin{split}
&\Elem(\Gamma, \isprop(A))
\\[-0.6ex]
{} \times {} &\Pi(\rho : |\Gamma|)(\rho' : \Gamma'\,\rho). \isprop(\Sigma (a : |A|\,\rho). A'(\rho,\rho'\,a))
\end{split}
\label{lem:isprop-sconing:1}
\\
\begin{split}
&\Elem(\Gamma, \isprop(A))
\\[-0.6ex]
{} \times {} &\Pi(\rho : |\Gamma|)(\rho' : \Gamma'\,\rho) (a : |A|\,\rho). \isprop(A'(\rho,\rho'\,a))
\label{lem:isprop-sconing:2}
\end{split}
\end{gather}
\end{lemma}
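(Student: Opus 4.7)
My plan is to split the three-way equivalence into \cref{lem:isprop-sconing:2} $\Leftrightarrow$ \cref{lem:isprop-sconing:1} $\Leftrightarrow$ \cref{lem:isprop-sconing:0}, handling the first as a standard fact about dependent sums over propositional bases and the second by unfolding the interpretation of $\isprop^*$ in the sconing cwf. Naturality in $(\Gamma, \Gamma')$ will come for free, since every construction used is built from substitution-stable operations of the cwf.

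For \cref{lem:isprop-sconing:2} $\Leftrightarrow$ \cref{lem:isprop-sconing:1}, both statements share the factor $\Elem(\Gamma, \isprop(A))$, so it suffices to show, under the assumption that $|A|\,\rho$ is propositional (which follows from $\Elem(\Gamma, \isprop(A))$ by applying the global sections functor $|{-}|$), that $\isprop(\Sigma (a : |A|\,\rho).\, A'(\rho,\rho',a))$ is logically equivalent to $\Pi (a : |A|\,\rho).\, \isprop(A'(\rho,\rho',a))$. This is the standard homotopy-type-theoretic fact that a dependent sum over a propositional base is propositional iff each fibre is.

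For \cref{lem:isprop-sconing:1} $\Leftrightarrow$ \cref{lem:isprop-sconing:0}, I would unfold $\isprop^*(A,A')$ as $\Pi^*(x, y : (A,A')).\, \mathsf{Path}^*_{(A,A')}(x,y)$ using the interpretations of $\Pi^*$ and $\mathsf{Path}^*$ in the sconing cwf. An element of $\Elem^*((\Gamma,\Gamma'), \isprop^*(A,A'))$ decomposes into an $\MM$-component, which is precisely $\Elem(\Gamma, \isprop(A))$, together with a predicate-side component that, for all $\rho$, $\rho'$ and all $(x,x'), (y,y')$ in the total space $\Sigma (a : |A|\,\rho).\, A'(\rho,\rho',a)$, supplies a dependent path in $A'$ between $x'$ and $y'$ over the $\MM$-path from $x$ to $y$ given by the first component. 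Such dependent paths are precisely what is needed to assemble a path in the total space from a given base path, so the predicate component is equivalent to propositionality of the total space, yielding \cref{lem:isprop-sconing:1}.

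The main technical obstacle is spelling out $\mathsf{Path}^*$ and $\Pi^*$ in the sconing cwf carefully enough to identify the predicate component as a witness of propositionality of the total space. Thanks to the sconing setup in the style of~\cite{coquand-huber-sattler:homotopy-canonicity}, both type-formers have a straightforward pointwise interpretation on the predicate side, so this amounts to careful bookkeeping rather than any genuinely new idea.
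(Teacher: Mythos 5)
Your plan is essentially the same as the paper's proof, just packaged as two bi-implications $\cref{lem:isprop-sconing:2}\Leftrightarrow\cref{lem:isprop-sconing:1}$ and $\cref{lem:isprop-sconing:1}\Leftrightarrow\cref{lem:isprop-sconing:0}$ instead of the paper's cyclic chain $\cref{lem:isprop-sconing:0}\Rightarrow\cref{lem:isprop-sconing:1}\Rightarrow\cref{lem:isprop-sconing:2}\Rightarrow\cref{lem:isprop-sconing:0}$; mathematically the two decompositions exercise the same ideas (unfolding $\isprop^*$ in the sconing, and the fact that a $\Sigma$-type over a propositional base is propositional iff each fibre is). The one place where you should be slightly more careful than your sketch suggests is the direction $\cref{lem:isprop-sconing:1}\Rightarrow\cref{lem:isprop-sconing:0}$: the predicate component of $\isprop^*$ asks for a dependent line in $A'$ over the \emph{particular} base line produced by $|A_p|$, whereas propositionality of the total space only hands you a line over \emph{some} base line. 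Because $|A|\,\rho$ is propositional you can fill a square between the two base lines and transport the dependent line across it, so the gap is real but easily closed. The paper's cyclic organisation actually routes around this: it proves $\cref{lem:isprop-sconing:2}\Rightarrow\cref{lem:isprop-sconing:0}$ rather than $\cref{lem:isprop-sconing:1}\Rightarrow\cref{lem:isprop-sconing:0}$, and there one only needs to contract a loop at a point of $|A|\,\rho$ back to a constant line, which is a touch simpler than reconciling two arbitrary parallel lines. Either way works; the paper's ordering is just the economical one. Naturality in $(\Gamma,\Gamma')$ indeed comes for free from substitution-stability, as you say.
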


\begin{proof}
Given~\cref{lem:isprop-sconing:0}, we have $A_p : \Elem(\Gamma,\isprop(A))$ and a proof that one can fill lines in $A'$ over lines produced by $|A_p|$; that is enough to fill lines in the $\Sigma$-type in~\cref{lem:isprop-sconing:1}.
From there, we derive~\cref{lem:isprop-sconing:2}: since $|A|\,\rho$ is propositional, any path from $a$ to $a$ is constant.
Going back to~\cref{lem:isprop-sconing:0} requires only to contract a path in $|A|\,\rho$.
\end{proof}

Let $(A,A')$ in $\Ty_i^*(\Gamma,\Gamma')$, $(A_p,A_p')$ in $\Elem^*((\Gamma,\Gamma'), \isprop^*(A,A'))$, $\varphi$ in $\FF$,
$\angles{T,T'}$ in $[\varphi] \to \Ty_i^*(\Gamma,\Gamma')$, and $\angles{e,e'}$ in\\ $\Elem^*((\Gamma,\Gamma'),\Equiv^*((T\,\tt,T'\,\tt),(A,A')))$.
We follow the recipe of \cite[Sec.~3.2.6]{coquand-huber-sattler:homotopy-canonicity} and define
\[ \Glue^*((A,A'),(A_p,A_p'),\varphi,\angles{T,T'},\angles{e,e'}) \] as $(\Glue(A,A_p,\varphi,T,e),G')$ where $G'\,\rho\,\rho'\,(\glue(a,t))$ is defined as the $\Glue$-type in $\V_i$ between $A'\,\rho\,\rho'\,a$ and $T'\,\tt\,\rho\,\rho'\,(t\,\tt)$ along $\varphi$. In our case we also have to provide a proof of $\isprop(A'\,\rho\,\rho'\,a)$, which we obtain from $(A_p,A_p')$ by applying \cref{lem:isprop-sconing}, going from~\cref{lem:isprop-sconing:0} to~\cref{lem:isprop-sconing:2}.

\subsubsection{$\mathsf{trunc}$-operation}

\begin{lemma} \label{lem:isset-sconing}
Let $(A,A') : \Ty_i^*(\Gamma,\Gamma')$. The following statements are logically equivalent, naturally in $(\Gamma,\Gamma')$:
\begin{gather}
\Elem^*((\Gamma,\Gamma'), \isset^*(A,A'))
\label{lem:isset-sconing:0}
\\
\begin{split}
& \Elem(\Gamma, \isset(A))
\\[-0.6ex]
{} \times {}& \Pi(\rho : |\Gamma|)(\rho' : \Gamma'\,\rho). \isset(\Sigma (a : |A|\,\rho). A'(\rho,\rho'\,a))
\end{split}
\label{lem:isset-sconing:1}
\\
\begin{split}
& \Elem(\Gamma, \isset(A))
\\[-0.6ex]
{} \times {}& \Pi(\rho : |\Gamma|)(\rho' : \Gamma'\,\rho) (a : |A|\,\rho). \isset(A'(\rho,\rho'\,a))
\end{split}
\label{lem:isset-sconing:2}
\end{gather}
\end{lemma}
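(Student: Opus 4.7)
My plan is to mirror the proof of \cref{lem:isprop-sconing} one truncation level up. The most economical route is to reduce the current statement to that lemma via the standard observation that $\isset(X) \simeq \prd{x,y : X} \isprop(x =_X y)$, both in $\MM$ and in $\MM^*$.

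The first step will be to spell out the identity types of $(A,A')$ in the sconing model: a path between $(a,a')$ and $(b,b')$ is a pair of a path $p : a = b$ in $A$ together with a dependent path $p'$ in $A'$ over $p$. This is the standard Artin-gluing description, already implicit in the construction of \cite{coquand-huber-sattler:homotopy-canonicity}. Combined with the above unfolding of $\isset$, it converts each of \cref{lem:isset-sconing:0,lem:isset-sconing:1,lem:isset-sconing:2} into a statement about propositionality of identity types, to which \cref{lem:isprop-sconing} can then be applied pointwise.

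With the reduction in place, the three implications become largely bookkeeping. For \cref{lem:isset-sconing:0} $\Rightarrow$ \cref{lem:isset-sconing:1}, applying \cref{lem:isprop-sconing:0} $\Rightarrow$ \cref{lem:isprop-sconing:1} to each identity type yields $\Elem(\Gamma,\isset(A))$ from the $\MM$-side together with $\isprop$ of every identity type of $\Sigma(a:|A|\,\rho).\ A'(\rho,\rho'\,a)$, which is exactly sethood of this $\Sigma$-type. For \cref{lem:isset-sconing:1} $\Rightarrow$ \cref{lem:isset-sconing:2}, I will use that $A'(\rho,\rho'\,a)$ appears, up to equivalence, as a fiber of the first projection out of the $\Sigma$-type, and fibers of a map from a set to a set are sets. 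Finally, \cref{lem:isset-sconing:2} $\Rightarrow$ \cref{lem:isset-sconing:0} reverses this via the corresponding direction of \cref{lem:isprop-sconing}, using that $|A|\,\rho$ is already a set (via $A_p$) to contract the base component of any higher cell into its fiberwise part in $A'$.

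The main obstacle I anticipate is verifying that the path-type decomposition in $\MM^*$ holds in the form needed, together with the bookkeeping around the dependence on $\rho,\rho'$ and the extra sconing code carried in $\V_i$. Once these details are checked, the argument is a mechanical iteration of the one for $\isprop$.
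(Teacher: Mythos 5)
Your proof is a genuinely different route from the paper's. The paper's proof of this lemma is a one-line remark that it ``follows the same strategy as \cref{lem:isprop-sconing}, except this time filling and contracting squares rather than lines'': i.e.\ a direct cubical argument one dimension up, repeating the line-filling/contracting manipulations of the propositional case with squares. Your plan instead reduces the set case to the propositional case via the equivalence $\isset(X) \simeq \prd{x,y : X}\isprop(x =_X y)$, unfolding identity types of $(A,A')$ in $\MM^*$ as pairs of a path in $A$ together with a dependent path in $A'$ over it, and then applying \cref{lem:isprop-sconing} fibrewise in the extended context. This is a clean conceptual argument, and the three implications check out once the $\Sigma$-type path-characterization and transport lemma are used to see that $\isprop$ of $\Sigma(p : a = b).\ \PathP_{A'}\,p\,a'\,b'$ is exactly sethood of $\Sigma(a:|A|\rho).\ A'(\rho,\rho'\,a)$, and that dependent-path types over a set-valued family are propositions.

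What your route buys is conceptual uniformity: it would generalize mechanically to the $n$-truncated case without having to reason about $n$-cubes directly. What it costs is an extra layer of bookkeeping that you have correctly flagged as the main obstacle: you need (i) that $\verts{-}$ takes the $\MM$-identity type $\Id(A,a,b)$ to the path type in $|A|\rho$ (preservation of $\Id$ up to equivalence by the global-sections functor), and (ii) that the $\MM^*$-identity type decomposes as the expected pair of a base path and a dependent path in $A'$, which is a fact about the gluing construction of path types, not an instance of \cref{lem:isprop-sconing} itself. These should hold but must be spelled out. One small slip: in your step for \cref{lem:isset-sconing:2} $\Rightarrow$ \cref{lem:isset-sconing:0} you invoke ``$A_p$'' to see that $|A|\rho$ is a set; there is no $A_p$ in this lemma. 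The fact that $|A|\rho$ is a set is instead supplied unconditionally by the $\mathsf{trunc}$ operation, as the paper points out when introducing the sconing. With that correction and the two $\Id$-type unfoldings established, the argument goes through.
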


\begin{proof}
This follows the same strategy as \cref{lem:isprop-sconing}, except this time filling and contracting squares rather than lines.
\end{proof}

Let $(A,A') : \Ty_i^*(\Gamma,\Gamma')$.
We define
\[
\mathsf{trunc}^*(A,A') : \Elem^*((\Gamma,\Gamma'),\isset^*(A,A))
\]
by applying \cref{lem:isset-sconing} in the direction from~\cref{lem:isset-sconing:0} to~\cref{lem:isset-sconing:2}. to the pair of $\mathsf{trunc}(A)$ and
$\mathit{trunc}'$ where \[\mathit{trunc'}\,\rho\,\rho'\,a : \isset(\El_{\V_i}(A'(\rho,\rho'\,a)))\] is given by $\El_{\V_i}(A'(\rho,\rho'\,a)) : (\UFib_i)^{\leq 0}$.

Given the above constructions, one mechanically verifies the necessary laws to obtain the following statement.
\begin{theorem-qed}[Sconing]
  Given any $0$-truncated cubical cwf $\MM$ that is size-compatible in the sense of \cite[Sec.~3]{coquand-huber-sattler:homotopy-canonicity},
the sconing $\MM^*$ is a $0$-truncated cubical cwf with operations defined as above. We further have a morphism $\MM^* \to \MM$ of $0$-truncated cubical cwfs given by the first projection.
\end{theorem-qed}

We state homotopy canonicity with reference to the initial $0$-truncated cubical cwf $\mathcal{I}$, whose existence and size-compatibility is justified as in \cite[Sec.~4]{coquand-huber-sattler:homotopy-canonicity}. The proof of the theorem also follows the argument given in that section.

\begin{theorem-qed}[Homotopy canonicity]
 In the internal language of the cubical sets category of \cite{CCHM}, given a closed natural $n : \Elem(1,\mathsf{N})$ in the initial model $\mathcal{I}$, we have a numeral $k : \mathbb{N}$ with $p : \Elem(1,\mathsf{Path}(\mathsf{N},n,\mathsf{S}^k(0)))$.
\end{theorem-qed}

\andrea{TODO: something about extending to strict canonicity}

\section{Related Work and Conclusion}
\label{sec:conclusion}

\subsection{Related Work}
In the realm of type theories with UIP and function extensionality,
XTT \cite{sterling-angiuli-gratzer:XTT} is a non-univalent variant
of CTT that takes the extra step of making UIP hold judgmentally, in
the spirit of observational type theory (OTT) \cite{OTT}. As
formulated XTT does not provide propositional extensionality and
requires a typecase operation within the theory for (strict)
canonicity. OTT does include propositional extensionality, but only for a
universe of propositions closed under a specific set of type formers
that made it possible to assume judgmental proof irrelevance for
such propositions.  We conjecture that by introducing UIP (or
$n$-truncatedness) only as a path equality we will be able to refine
our theory to one with strict canonicity without encountering
similar limitations.
Regarding strict propositions, i.e. where any two
elements are stricly equal in the model, the semantics for a
univalent universe of them within the cubical sets model is
described in \cite{coquand:bishop}.
However the corresponding universe of strict sets is not a strict set
itself.
Such semantics are used in \cite{coq:sprop} to justify the
addition of a primitive universe of strict propositions
$\mathsf{sProp}$.

\subsection{Conclusion}
We proved consistency for a theory with $n$-truncatedness and univalence for $(n-1)$-types.
We also showed homotopy canonicity for cubical variant of such a theory.
The main technical tool used was an $n$-truncated universe of $n$-types that is also univalent for $(n-1)$-types.
We would like to stress that such a universe can also be used directly
in HoTT with indexed higher inductive types, for applications that do
not mind the universe being limited to a fixed set of type formers.


\begin{acks}
  Christian Sattler was supported by USAF grant FA9550-16-1-0029.
  Andrea Vezzosi was supported by a research grant (13156) from VILLUM FONDEN.
\end{acks}

\bibliography{lics20}

\INLONGVERSION{
\appendix
\section{Pushouts Along Monomorphisms}
\label{pushouts-along-monomorphisms}

In this subsection, we prove some useful statements about pushouts along monomorphisms, in particular the fact that $n$-truncatedness of objects is preserved (\cref{pushout-mono-n-truncated}).
We were unable to locate this statement in the higher topos literature.
We expect that in homotopy type theory, many of these statements can also be obtained  from~\cite{kraus-raumer:quotient-equality} (pushouts being an example of a non-recursive higher inductive type), albeit under less minimalistic assumptions.

We use the language of higher categories (modelled for example by quasicategories~\cite{joyal-quaderns,lurie:htt}).
The statements of importance for the main body are about locally Cartesian closed higher categories.
In homotopy type theory, this corresponds to the presence of identity types, dependent sums, and dependent products with function extensionality.
All these statements and their proofs can also be read in homotopy type theory with these type formers, seen as an internal language for locally Cartesian closed higher categories (with pushout squares treated axiomatically as in~\cite{egbert:hott-intro}).
It is in this form that they are used in the main body of the paper.

A map $V \to U$ in a higher category $\C$ is \emph{univalent} if for any map $Y \to X$, the space of pullback squares from $Y \to X$ to $V \to U$ is $(-1)$-truncated, \ie if the object $V \to U$ is $(-1)$-truncated in the higher category $\C^\to_\cart$ of arrows and Cartesian morphisms.%
\footnote{We choose this definition as it makes sense in any higher category $\C$.}
We note the following for locally Cartesian closed $\C$.
\begin{itemize}
\item
Given univalent $V \to U$ and an object $Z$, the pullback $Z \times V \to Z \times U$ is univalent in the slice over $Z$.
This is a consequence of pullback pasting.
\item
The internal object of pullback squares (constructed using exponentials) from $Y \to X$ to univalent $V \to U$ is $(-1)$-truncated.
To see this, one considers the space of pullback squares from $Z \times Y \to Z \times X$ to $V \to U$ for an arbitrary object $Z$.
\end{itemize}
Combining these observations, we may phrase univalence via the generic case: in the slice over $U$, the internal object of pullback squares from $V \to U$ to $U \times V \to U \times U$ (living over $U$ via the first projection) is $(-1)$-truncated, or equivalently terminal.%
\footnote{
The reverse implication proceeds as follows.
Given $Y \to X$, let us show that the space of pullback squares from $Y \to X$ to $V \to U$ is $(-1)$-truncated.
We may suppose that $Y \to X$ is classified by $V \to U$ via a map $X \to U$.
The desired space is isomorphic to the space of pullback squares over $U$ from $Y \to X$ to $U \times V \to U \times U$.
Over $U$, the map $Y \to X$ is the product of $V \to U$ with $X$, so this space is the hom space over $U$ from $X$ to the internal object of pullback squares from $V \to U$ to $U \times V \to U \times U$.
The target object of this hom space was assumed terminal.
}
This is the univalence axiom for the ``universe'' $U$ and the ``universal family of elements'' $V \to U$ in homotopy type theory: given $x : U$, the dependent sum of $y : U$ and $V(x) \simeq V(y)$ is contractible, in turn equivalent to \cref{univalence}.
It also corresponds to the definition of univalent family given in~\cite{gepner-kock:univalence}.
The connection between univalent universes in the original sense of Voevodsky and univalent maps in our sense was probably first recognized by Joyal.

A \emph{classifier} for a collection of maps $Y_i \to X_i$ for $i \in I$ is a univalent map $V \to U$ having $Y_i \to X_i$ as pullback for $i \in I$:
\[
\xymatrix{
  Y_i
  \ar[r]
  \ar[d]
  \fancypullback{[d]}{[r]}
&
  V
  \ar[d]
\\
  X_i
  \ar[r]
&
  U
\rlap{.}}
\]
An object $X$ is said to be classified if the map $X \to 1$ is classified.
In homotopy type theory, a classifier for $(f_i)_{i \in I}$ (with external indexing) is simply a family $V$ over $U$ that satisfies the univalence axiom and restricts (up to equivalence) to the given families $f_i$ for $i \in I$.

As discussed in~\cite{lurie:htt}, classifiers are related to \emph{descent} for colimits in the sense of~\cite{rezk:model-topos}.
Here, we are interested only in the case of pushouts.
Instead of assuming blanket classifiers for classes of ``small'' maps as in a higher topos, we will keep track of which collections of maps need to have classifiers.

\begin{proposition}[Pushout descent from classifier] \label{descent}
Let $\C$ be a locally Cartesian closed higher category.
Let
\begin{equation} \label{descent:0}
\begin{gathered}
\xymatrix@!C@C-0.2cm{
  Y_{00}
  \ar[rr]
  \ar[dd]_{p_{00}}
  \ar[dr]
  \ar[rr]
  \fancypullback{[dd]}{[rrr]}
  \fancypullback{[dd]}{[dr]}
&&
  Y_{01}
  \ar[dd]|!{[dl];[dr]}{\hole}_(0.3){p_{01}}
  \ar[dr]
&\\&
  Y_{10}
  \ar[rr]
  \ar[dd]_(0.3){p_{10}}
&&
  Y_{11}
  \ar[dd]_{p_{11}}
  \fancypullback{[ll]}{[ul]}
\\
  X_{00}
  \ar[rr]|!{[ur];[dr]}{\hole}
  \ar[dr]
&&
  X_{01}
  \ar[dr]
\\&
  X_{10}
  \ar[rr]
&&
  X_{11}
  \fancypullback{[ll]}{[ul]}
}
\end{gathered}
\end{equation}
be a cube in $\C$ with horizontal faces pushouts and left and back faces pullbacks (as indicated).
Assume that the maps $p_{01}$ and $p_{10}$ have a common classifier $V \to U$.
Then the right and front faces are pullbacks and the map $p_{11}$ is classified by $V \to U$.
\end{proposition}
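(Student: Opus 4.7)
The plan is to extract classifying data from the hypothesis, use univalence of $V \to U$ together with the pushout property of the bottom face to assemble a classifying map for a candidate right vertical arrow, and finally identify the resulting pullback with $Y_{11}$ via a flattening argument.

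First, since $V \to U$ classifies both $p_{01}$ and $p_{10}$, I would pick classifying maps $\alpha_{01} \colon X_{01} \to U$ and $\alpha_{10} \colon X_{10} \to U$ exhibiting them as pullbacks of $V \to U$. The back face being a pullback means the restriction of $\alpha_{01}$ along $X_{00} \to X_{01}$ classifies $p_{00}$; similarly the left face yields a classification of $p_{00}$ by the restriction of $\alpha_{10}$ along $X_{00} \to X_{10}$. Univalence of $V \to U$ says that classifications of any given map form a $(-1)$-truncated space, so these two restrictions are canonically equal, producing a cocone under $X_{01} \leftarrow X_{00} \to X_{10}$ valued in $U$. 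The pushout property of the bottom face then gives a unique $\alpha_{11} \colon X_{11} \to U$ extending both $\alpha_{01}$ and $\alpha_{10}$.

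Next, I would form the pullback $Y_{11}' \defeq X_{11} \times_U V$ with its projection $p_{11}'$. By pullback pasting, the further pullbacks of $p_{11}'$ along $X_{01} \to X_{11}$, $X_{10} \to X_{11}$, and $X_{00} \to X_{11}$ are canonically $p_{01}$, $p_{10}$, and $p_{00}$. This yields a cocone under $Y_{01} \leftarrow Y_{00} \to Y_{10}$ with apex $Y_{11}'$, and the pushout property of the top face supplies a comparison map $c \colon Y_{11} \to Y_{11}'$ over $X_{11}$. Once $c$ is shown to be an equivalence, pullback pasting immediately delivers all three required conclusions in one stroke: the right and front faces are pullbacks, and $p_{11}$ is classified by $\alpha_{11}$.

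Showing $c$ is an equivalence is the step I expect to be the main obstacle, and it is precisely a flattening-style descent statement: the total space $Y_{11}'$ of the family $x \mapsto V(\alpha_{11}(x))$ over the pushout $X_{11}$ is itself the pushout of $Y_{01}$ and $Y_{10}$ under $Y_{00}$, which by the top face of the original cube is $Y_{11}$. In an $\infty$-topos this descent is automatic from the fact that colimits are universal; in the more abstract locally Cartesian closed setting it is the property we need the bottom pushout to enjoy (``van Kampen''), and the classifier hypothesis ensures that the family under consideration is of the right shape for flattening to apply. Putting these pieces together closes the argument and exhibits $V \to U$ as a classifier for $p_{11}$.
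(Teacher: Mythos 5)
Your construction matches the paper's almost step for step: pick classifying maps, use $(-1)$-truncatedness of $V \to U$ in $\C^\to_{\mathsf{cart}}$ to glue them along the span, use the pushout property of the bottom face to obtain $\alpha_{11} \colon X_{11} \to U$, form $Y_{11}' := X_{11} \times_U V$, and compare with $Y_{11}$. The difference is the final step, where you stop short.

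You correctly identify that the whole argument rests on showing $Y_{11}'$ is the pushout of $Y_{01} \leftarrow Y_{00} \to Y_{10}$, but you then treat this as a possibly-stronger descent or van Kampen hypothesis that ``the classifier hypothesis ensures \dots is of the right shape for flattening,'' leaving it unproved. This is a genuine gap, and it is resolved by an observation that you do not make: in a locally Cartesian closed higher category, the pullback functor $u^* \colon \C/U \to \C/V$ has a right adjoint $\Pi_u$, so $u^*$ is a \emph{left} adjoint and therefore preserves all colimits, in particular pushouts. Applying $u^*$ to the bottom pushout square (viewed in $\C/U$ via $\alpha_{11}$ and its restrictions) produces a new cube whose top face is a pushout with apex $Y_{11}'$ and whose left and back faces are pullbacks by construction. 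Because the left and back faces of the original cube were pullbacks, the two cubes have the same top span $p_{01} \leftarrow p_{00} \to p_{10}$ (as a diagram in $\C^\to$), so $Y_{11}$ and $Y_{11}'$ are both pushouts of the same span and hence canonically equivalent. That equivalence transports the ``right and front faces are pullbacks'' property and the classification of $Y_{11}' \to X_{11}$ by $V \to U$ back to the original cube. In short: no extra van Kampen or descent hypothesis is required here; what you call ``automatic in an $\infty$-topos from universality of colimits'' is exactly what local Cartesian closure already gives you, and the argument should invoke that directly rather than gesturing at it.
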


\begin{proof}
This is a standard exercise, we give the proof for completeness.

Let $u \co V \to U$ be the given classifier.
In $\C^\to_\cart$, by $(-1)$-truncatedness of $u$, we obtain a square
\[
\xymatrix{
  p_{00}
  \ar[r]
  \ar[d]
&
  p_{01}
  \ar@/^1em/[ddr]
\\
  p_{10}
  \ar@/_1em/[drr]
\\&&
  u
\rlap{.}}
\]
On codomains, we obtain an induced map
\[
\xymatrix{
  X_{00}
  \ar[r]
  \ar[d]
&
  X_{01}
  \ar[d]
  \ar@/^1em/[ddr]
\\
  X_{10}
  \ar[r]
  \ar@/_1em/[drr]
&
  X_{11}
  \ar@{.>}[dr]
  \fancypullback{[u]}{[l]}
\\&&
  U
\rlap{.}}
\]
This allows us to see the bottom face of~\cref{descent:0} as a square in the slice over $U$.
Pulling back along $u$, we obtain a cube
\begin{equation} \label{descent:1}
\begin{gathered}
\xymatrix@!C@C-0.2cm{
  Y_{00}
  \ar[rr]
  \ar[dd]_{p_{00}}
  \ar[dr]
  \ar[rr]
  \fancypullback{[dd]}{[rrr]}
  \fancypullback{[dd]}{[dr]}
&&
  Y_{01}
  \ar[dd]|!{[dl];[dr]}{\hole}_(0.3){p_{01}}
  \ar[dr]
  \fancypullback{[dd]}{[dr]}
&\\&
  Y_{10}
  \ar[rr]
  \ar[dd]_(0.3){p_{10}}
  \fancypullback{[dd]}{[rr]}
&&
  Y_{11}'
  \ar[dd]
  \fancypullback{[ll]}{[ul]}
\\
  X_{00}
  \ar[rr]|!{[ur];[dr]}{\hole}
  \ar[dr]
&&
  X_{01}
  \ar[dr]
\\&
  X_{10}
  \ar[rr]
&&
  X_{11}
  \fancypullback{[ll]}{[ul]}
\rlap{.}}
\end{gathered}
\end{equation}
Here, the top square is a pushout since pullback along $u$ preserves pushouts, a consequence of local Cartesian closure.
The cubes~\cref{descent:0,descent:1}, seen from the top as squares in $\C^\to$, are pushouts of the same span $p_{01} \leftarrow p_{00} \rightarrow p_{10}$.
Pushouts are unique up to isomorphism, so the cubes are isomorphic.
As the right and front faces are pullbacks in~\cref{descent:1}, so are they in~\cref{descent:0}.
By construction, $Y_{11}' \to X_{11}$ is classified by $V \to U$, hence so is $p_{11}$.
\end{proof}

Let us clarify the connection of \cref{descent} to descent.
A finitely complete higher category $\C$ has \emph{descent} for a pushout square
\begin{equation} \label{true-descent}
\begin{gathered}
\xymatrix{
  X_{00}
  \ar[r]
  \ar[d]
&
  X_{01}
  \ar[d]
\\
  X_{10}
  \ar[r]
&
  X_{11}
  \fancypullback{[l]}{[u]}
}
\end{gathered}
\end{equation}
if the slice functor $\C/-$ from $\C^\op$ to higher categories sends it to a pullback square
\[
\xymatrix{
  \C/X_{11}
  \ar[r]
  \ar[d]
  \fancypullback{[d]}{[r]}
&
  \C/X_{10}
  \ar[d]
\\
  \C/X_{01}
  \ar[r]
&
  \C/X_{00}
}
\]
(morphisms are pullback functors).
If this is the case, then:
\begin{itemize}
\item
in any a cube~\cref{descent:0} extending~\cref{true-descent}, the front and back faces are pullbacks (as in \cref{descent}),
\item
if $\C$ has pushouts of pullbacks of $X_{00} \to X_{10}$, the pushout \cref{true-descent} is stable under pullback.
\end{itemize}
If $\C$ has pushouts of pullbacks of $X_{00} \to X_{10}$, these conditions are equivalent to descent for~\cref{true-descent}.

Note that local Cartesian closure implies stability of all colimits, in particular pushouts, under pullback.
In that case, \cref{descent} shows that descent for a pushout follows from the existence of classifiers for finite collections of maps and the existence of certain pushouts.
However, we eschew such a strong assumption on classifiers.

We will directly assume descent for pushouts for a track of results (\cref{pushout-mono-descent}, \cref{pushout-mono-preserves-pullbacks}, \cref{pushout-mono-almost-lex}, \cref{join-prop-level-descent}, \cref{pushout-mono-n-truncated-descent}, and \cref{pushout-mono-n-truncated-descent-restricted}) which cannot easily be expressed in terms of classifiers for some fixed maps or serve as inspiration for the setting using classifiers for a restricted choice of maps.

We will now prove a series of facts (some of them standard) about pushouts along monomorphisms, \ie $(-1)$-truncated maps.
These are sometimes also called embeddings.

\begin{lemma} \label{pushout-mono-descent}
In a finitely complete higher category with descent for pushouts, consider a pushout
\begin{equation} \label{pushout-mono-descent:square}
\begin{gathered}
\xymatrix{
  A
  \ar[r]
  \ar@{^(->}[d]
&
  C
  \ar[d]
\\
  B
  \ar[r]
&
  D
  \fancypullback{[l]}{[u]}
\rlap{.}}
\end{gathered}
\end{equation}
If $A \to B$ mono, then so is $C \to B$ and the square is a pullback.
\end{lemma}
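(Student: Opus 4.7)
The plan is to construct a cube over the pushout~\cref{pushout-mono-descent:square} and apply descent to its bottom face, reading both conclusions off the cube's right and front faces. (Here I read the ``$C \to B$'' in the statement as a typo for the map $C \to D$ parallel to the given mono.)

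First, I would set up the cube. The bottom face is the given pushout. For the top face, take $Y_{00} = Y_{10} = A$ and $Y_{01} = Y_{11} = C$, with both top horizontal maps equal to the given $A \to C$ and both top vertical maps identities. The top face is then the pushout of the span $C \leftarrow A \to A$ whose right leg is the identity on $A$, which collapses to $C$ with the expected coprojections. The vertical edges of the cube are $p_{00} = \mathrm{id}_A$, $p_{01} = \mathrm{id}_C$, $p_{10} = (A \hookrightarrow B)$, and $p_{11} = (C \to D)$.

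Next, I would verify the pullback premises. The back face has two opposite identity sides, so it is trivially a pullback. The left face has both horizontals equal to $\mathrm{id}_A$ and both verticals equal to $A \hookrightarrow B$; this is a pullback precisely because $A \to B$ is mono, since pulling a mono back along itself yields the identity on its domain.

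Then I would invoke the descent analog of \cref{descent}, applicable directly in the present descent-for-pushouts setting: in a cube whose bottom face is a pushout and whose back and left faces are pullbacks, the right and front faces are also pullbacks. The front face of our cube is exactly~\cref{pushout-mono-descent:square}, so that square is a pullback. The right face expresses that the pullback of $C \to D$ along itself is the identity on $C$, which is the definition of $C \to D$ being mono. The only real obstacle is identifying the correct cube; once it is chosen, descent delivers both conclusions at once. An alternative route, if one prefers to avoid the cube, is to use descent to define a family $P$ over $D$ with $P \circ \iota_B = \mathsf{fib}_{A \hookrightarrow B}$ and $P \circ \iota_C = 1$ (gluing is available because $A \to B$ is mono), and then identify the total space $\sum_d P(d) \to D$ with $C \to D$ by computing the pushout $A \sqcup_A C \simeq C$; fiberwise propositionality then gives monomorphicity.
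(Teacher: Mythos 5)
Your proof takes essentially the same approach as the paper: construct the cube whose left and right faces are the connection squares of $A \to B$ and $C \to D$ respectively, observe that the top face is a pushout and the back and left faces are pullbacks, and invoke descent to conclude the right and front faces are pullbacks. One slip: your description of the left face as having ``both horizontals equal to $\mathrm{id}_A$ and both verticals equal to $A \hookrightarrow B$'' is inconsistent with your own data $p_{00} = \mathrm{id}_A$, $p_{10} = A \hookrightarrow B$ — the left face is the connection square of $A \hookrightarrow B$ (corners $A, A, A, B$, with one identity edge and one mono edge in each pair of parallel sides), and it is \emph{that} square whose pullback property is equivalent to $A \to B$ being mono, as your justification (``pulling a mono back along itself yields the identity'') correctly indicates; the face as you literally describe it would be a pullback unconditionally and would not use the mono hypothesis.
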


This is standard.
We give its proof so that we can translate it to the setting with limited assumptions on classifiers.

\begin{proof}[Proof of \cref{pushout-mono-descent}]
Consider the cube
\begin{equation} \label{pushout-mono-descent:0}
\begin{gathered}
\xymatrix@!C@C-0.1cm{
  A
  \ar[rr]
  \ar[dd]
  \ar[dr]
  \fancypullback{[dd]}{[rrr]}
  \fancypullback{[dd]}{[dr]}
&&
  C
  \ar[dd]|!{[dl];[dr]}{\hole}
  \ar[dr]
&\\&
  A
  \ar[rr]
  \ar[dd]
&&
  C
  \ar[dd]
  \fancypullback{[ll]}{[ul]}
\\
  A
  \ar[rr]|!{[ur];[dr]}{\hole}
  \ar[dr]
&&
  C
  \ar[dr]
\\&
  B
  \ar[rr]
&&
  D
  \fancypullback{[ll]}{[ul]}
\rlap{.}}
\end{gathered}
\end{equation}
It is given by the functorial action on the map of arrows from $A \to B$ to $C \to D$ of the ``connection operation'' turning an arrow $X \to Y$ into a square
\[
\xymatrix{
  X
  \ar[r]
  \ar[d]
&
  X
  \ar[d]
\\
  X
  \ar[r]
&
  Y
\rlap{.}}
\]
The map $X \to Y$ is mono exactly if this square is a pullback.

Let us inspect the faces of~\cref{pushout-mono-descent:0}.
The bottom face is the original pushout.
The top face is a pushout and the back face is a pullback since opposite edges are invertible.
The left face is a pullback since $A \to B$ is mono.
By descent, it follows that the right and front faces are pullbacks.
From the right face, we infer that $C \to D$ is mono.
From the front face, we infer that~\cref{pushout-mono-descent:square} is a pullback.
\end{proof}

Let us reprove this result starting from a classifier.
Note the extended conclusion.

\begin{lemma} \label{pushout-mono}
In a locally Cartesian closed higher category, let $A \to B$ be a monomorphism admitting a classifier $V \to U$ that also classifies the terminal object.
Then for any pushout
\begin{equation} \label{pushout-mono:square}
\begin{gathered}
\xymatrix{
  A
  \ar[r]
  \ar@{^(->}[d]
&
  C
  \ar[d]
\\
  B
  \ar[r]
&
  D
  \fancypullback{[l]}{[u]}
\rlap{,}}
\end{gathered}
\end{equation}
the map $C \to D$ is mono and classified by $V \to U$ and the square is a pullback.
\end{lemma}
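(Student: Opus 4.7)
The plan is to replay the proof of the preceding lemma, \cref{pushout-mono-descent}, but invoke \cref{descent} in place of the general descent for pushouts hypothesis. The same cube constructed there, via the connection operation applied to the map of arrows $(A \to B) \to (C \to D)$, will do the job: its bottom face is the given pushout~\cref{pushout-mono:square}; its top face has identity vertical edges and is trivially both a pushout and a pullback; its back face has identity horizontal edges and is trivially a pullback; and its left face is the connection square of $A \to B$, hence a pullback precisely because $A \to B$ is mono.

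The new ingredient is justifying that \cref{descent} applies. Matching our cube to the setup of \cref{descent}, the vertical maps $p_{01}$ and $p_{10}$ that need a common classifier are the identity $C \to C$ and the given mono $A \to B$. By assumption, $V \to U$ classifies $A \to B$. It also classifies every identity map, in particular $\mathrm{id}_C$: since $V \to U$ classifies the terminal object, there is a map $1 \to U$ along which the pullback of $V \to U$ is $1 \to 1$, and then pulling back $V \to U$ along the constant composite $C \to 1 \to U$ yields $\mathrm{id}_C$. Thus $V \to U$ is a common classifier for $p_{01}$ and $p_{10}$ as required.

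Applying \cref{descent} to this cube, I conclude that its right and front faces are pullbacks and that the remaining vertical map $p_{11} = (C \to D)$ is classified by $V \to U$. The front face being a pullback is precisely the statement that~\cref{pushout-mono:square} is a pullback. The right face is the connection square of $C \to D$, and such a connection is a pullback exactly when the underlying map is mono; hence $C \to D$ is mono. The only step requiring genuinely new reasoning is the classification argument for identities from a classifier of the terminal object; the rest is a mechanical reuse of the cube from the proof of \cref{pushout-mono-descent}, so no serious obstacle is expected.
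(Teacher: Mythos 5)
Your proposal is correct and matches the paper's own proof essentially verbatim: replay the cube argument from \cref{pushout-mono-descent} and invoke \cref{descent}, using that $A \to B$ is classified by assumption and $\mathrm{id}_C$ is classified as a pullback of $1 \to 1$ (which your $C \to 1 \to U$ pasting spells out a bit more explicitly than the paper does). The extra conclusion, that $C \to D$ is itself classified by $V \to U$, falls out of \cref{descent} exactly as you note.
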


\begin{proof}
We follow the proof of \cref{pushout-mono-descent}.
When it comes to using descent in the cube~\cref{pushout-mono-descent:0}, we apply \cref{descent}.
This uses that $A \to B$ and $C \to C$ (a pullback of $1 \to 1$) are classified by $V \to U$ and additionally gives that $C \to D$ is classified by $V \to U$.
\end{proof}

\cref{pushout-mono-descent} has the following interesting consequence.

\begin{proposition} \label{pushout-mono-preserves-pullbacks}
In a finitely complete higher category with descent for pushouts, pushouts along monomorphisms preserve pullbacks (assuming the involved pushouts exist).
\end{proposition}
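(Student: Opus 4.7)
The plan is to reduce the question to a componentwise computation via descent at the bottom-right pushout. Label the given cube of pullback squares as $A_{ij}, B_{ij}, C_{ij}$ for $i, j \in \{0, 1\}$, with each of the three span-component faces a pullback square and all left legs $A_{ij} \hookrightarrow B_{ij}$ monic (stability of monos under pullback ensures this for $A_{00} \hookrightarrow B_{00}$). Let $D_{ij}$ denote the pushout of $B_{ij} \leftarrow A_{ij} \to C_{ij}$; we need to show that the induced square of $D_{ij}$'s is a pullback.

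First, I would invoke descent at the corner $D_{11}$: the slice $\C/D_{11}$ is equivalent to the pullback $\C/B_{11} \times_{\C/A_{11}} \C/C_{11}$, with the equivalence sending $X \to D_{11}$ to its triple of pullbacks along the three legs into $D_{11}$. Applying descent to the cube relating the pushout for $D_{01}$ to that for $D_{11}$ --- exactly the configuration of \cref{descent} --- identifies $D_{01} \to D_{11}$ with the triple $(B_{01} \to B_{11},\, A_{01} \to A_{11},\, C_{01} \to C_{11})$; similarly for $D_{10} \to D_{11}$.

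Pullbacks in a pullback of slice categories are computed componentwise, so $D_{01} \times_{D_{11}} D_{10}$ corresponds under the descent equivalence to the triple $(B_{01} \times_{B_{11}} B_{10},\, A_{01} \times_{A_{11}} A_{10},\, C_{01} \times_{C_{11}} C_{10})$, which by the pullback hypothesis on the original cube is $(B_{00}, A_{00}, C_{00})$. This triple in turn corresponds, via another application of descent to the pushout cube for $D_{00} \to D_{11}$, to $D_{00} \to D_{11}$. Passing back through the equivalence yields $D_{01} \times_{D_{11}} D_{10} \simeq D_{00}$ in $\C/D_{11}$, which is the desired pullback square.

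The main obstacle is the bookkeeping in the second step: rigorously identifying $D_{01} \to D_{11}$ with its triple of span-component pullbacks requires descent not merely as an equivalence of slices but as a functorial statement about cubes whose top and bottom faces are pushouts. This is however precisely the content of \cref{descent}, whose proof already supplies the needed cube-argument template. The monomorphism hypothesis enters to ensure that this cube lies in the range where \cref{pushout-mono-descent} applies, so that each $C_{ij} \to D_{ij}$ is itself mono and the pushout square for $D_{ij}$ is also a pullback; this coherent extra structure streamlines the matching between the two applications of descent.
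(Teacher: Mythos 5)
There is a genuine gap in the identification step, and it is the crux of the argument. You apply \cref{descent} to the cube whose top and bottom faces are the pushout spans for $D_{01}$ and $D_{11}$, and claim this identifies $D_{01}\to D_{11}$ under the descent equivalence with the triple $(B_{01}\to B_{11},\, A_{01}\to A_{11},\, C_{01}\to C_{11})$. But \cref{descent} needs the ``left'' and ``back'' faces of that cube---the squares connecting the span legs of the $D_{01}$-pushout to those of the $D_{11}$-pushout---to already be pullbacks. In the setting of this proposition (fixed $A\hookrightarrow B$, varying $C_{ij}$), one of these is the square with sides $\mathrm{id}_A$, $A\to C_{01}$, $A\to C_{11}$, $C_{01}\to C_{11}$, i.e.\ it is a pullback exactly when $A \simeq A\times_{C_{11}} C_{01}$. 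This fails in general (e.g.\ when $C_{01}\to C_{11}$ is not mono and the fiber over the image of $A$ has extra points). Without it, the actual middle component of the triple corresponding to $D_{01}\to D_{11}$ is $A\times_{D_{11}} D_{01} \simeq A\times_{C_{11}} C_{01}$, not $A_{01}$, and similarly for the $B$-component; so the componentwise pullback computation that follows is based on a false premise. The appeal to \cref{pushout-mono-descent} at the end helps identify the $C$-component correctly, but does nothing for the $A$- and $B$-components.

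The paper's proof applies descent to a different cube, and this is not merely a cosmetic choice: the cube has the given $C$-pullback square as back face, the $D$-square as front face, and the maps $C_{ij}\to D_{ij}$ as the ``pushout direction.'' Reoriented to match \cref{descent}, the top and bottom faces are then pushouts of the form $D_{i0}\leftarrow C_{i0}\to C_{i1}$ (shown to be pushouts along monos by pushout pasting and \cref{pushout-mono-descent}), while the ``left'' and ``back'' faces become the $C$-pullback square (a hypothesis) and one of the pushout-along-mono faces (a pullback by \cref{pushout-mono-descent}). Both hypotheses of descent are thus established before applying it, avoiding the unproved---and generally false---pullback condition your route needs. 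If you want to salvage the slice-equivalence approach, you would have to compute the three pullbacks of $D_{0j}\to D_{11}$ directly (using \cref{pushout-mono-descent} for the $C$-leg and a separate argument for the $A$- and $B$-legs), rather than claiming them all at once from \cref{descent}; at that point the argument essentially collapses into the paper's single-cube application of descent.
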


\begin{proof}
Recall that the forgetful functor from a coslice creates pullbacks.
Consider a monomorphism $A \to B$ and a pullback square
\[
\xymatrix{
  C_{00}
  \ar[r]
  \ar[d]
  \fancypullback{[d]}{[r]}
&
  C_{01}
  \ar[d]
\\
  C_{10}
  \ar[r]
&
  C_{11}
}
\]
under $A$.
We consider its pushout along $A \to B$, forming a cube
\[
\xymatrix@!C@C-0.2cm{
  C_{00}
  \ar[rr]
  \ar[dd]
  \ar[dr]
  \ar[rr]
  \fancypullback{[dd]}{[rrr]}
&&
  C_{01}
  \ar[dd]|!{[dl];[dr]}{\hole}
  \ar[dr]
&\\&
  D_{00}
  \ar[rr]
  \ar[dd]
&&
  D_{01}
  \ar[dd]
  \fancypullback{[ul]}{[ll]}
\\
  C_{10}
  \ar[rr]|!{[ur];[dr]}{\hole}
  \ar[dr]
&&
  C_{11}
  \ar[dr]
\\&
  D_{10}
  \ar[rr]
  \fancypullback{[uu]}{[ul]}
&&
  D_{11}
  \fancypullback{[uu]}{[ul]}
  \fancypullback{[ul]}{[ll]}
\rlap{.}}
\]
Our goal is to show that its front face is a pullback.
By \cref{pushout-mono-descent}, the maps from back to front are mono.
By pushout pasting, the faces from back to front are pushouts.
By \cref{pushout-mono-descent}, the left face is a pullback.
Hence, the front face is a pullback by descent.
\end{proof}

\begin{remark} \label{pushout-mono-almost-lex}
Let $\C$ be a finitely complete higher category with descent for pushouts.
Given a monomorphism $A \to B$ having pushouts, \cref{pushout-mono-preserves-pullbacks} shows that the pushout functor $\C \backslash A \to \C \backslash D$ is almost left exact, with only the terminal object not generally being preserved.
We can pass to slices to fix this: given a pushout square as in~\cref{pushout-mono:square}, the induced functor from the higher category of factorizations of $A \to C$ to the higher category of factorizations of $B \to D$ is left exact.
\end{remark}

For the following statement, note that the forgetful functor from a slice higher category creates truncation levels of maps.
That is, the truncation level of a map does not change if we view regard the map as living over different objects.

\begin{lemma} \label{pushout-truncated-helper-easy}
In a finitely complete higher category, let
\begin{equation} \label{pushout-truncated-helper-easy:pushout}
\begin{gathered}
\xymatrix{
  A
  \ar[r]
  \ar[d]
&
  C
  \ar[d]
\\
  B
  \ar[r]
&
  D
  \fancypullback{[u]}{[l]}
}
\end{gathered}
\end{equation}
be a pushout stable under pullback.
A map over $D$ is $n$-truncated exactly if its pullbacks over $B$ and $C$ are $n$-truncated.
\end{lemma}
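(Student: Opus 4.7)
The plan is to prove the forward direction by noting that the pullback of an $n$-truncated map is $n$-truncated, which is standard, and then to establish the backward direction by induction on $n \ge -2$. Throughout, for $f \co X \to D$, I abbreviate its pullbacks to $B$, $C$, and $A$ as $f|_B$, $f|_C$, $f|_A$, defined on $B \times_D X$, $C \times_D X$, and $A \times_D X$ respectively.

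For the base case $n = -2$, suppose $f|_B$ and $f|_C$ are equivalences. Pulling $f|_B$ back along $A \to B$ makes $f|_A$ an equivalence as well. By stability of the pushout under pullback along $f$, the square
\[
\xymatrix{
  A \times_D X \ar[r] \ar[d] & C \times_D X \ar[d] \\
  B \times_D X \ar[r] & X
  \fancypullback{[u]}{[l]}
}
\]
is again a pushout, and it admits a natural componentwise morphism to the original pushout which is an equivalence on each of the three span objects. Homotopy invariance of pushouts under equivalences of spans then forces the induced map on cospan objects, namely $f$, to be an equivalence.

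For the inductive step from $n{-}1$ to $n$ with $n \ge -1$, I would use the standard characterization that $f$ is $n$-truncated exactly when its diagonal $\Delta_f \co X \to X \times_D X$ is $(n{-}1)$-truncated. Pulling the given pushout back along $X \times_D X \to D$ yields a new pushout square, stable under pullback (pullback-stability is preserved by further pullback), whose cospan object is $X \times_D X$. A short computation with iterated fiber products identifies the pullback of $\Delta_f$ along $B \times_D X \times_D X \to X \times_D X$ with $\Delta_{f|_B}$, and likewise for $C$. Since $f|_B$ and $f|_C$ are $n$-truncated by assumption, these diagonals are $(n{-}1)$-truncated, so applying the induction hypothesis to $\Delta_f$ together with this pulled-back pushout yields that $\Delta_f$ is $(n{-}1)$-truncated, whence $f$ is $n$-truncated. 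The main obstacle, such as it is, lies in this bookkeeping step: matching the pullback of $\Delta_f$ with $\Delta_{f|_B}$ and confirming that pullback-stability of the pushout transfers under further pullback. Both are routine but have to be tracked carefully through the iterated fiber products.
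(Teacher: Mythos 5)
Your proof is correct and takes essentially the same approach as the paper's: both proceed by induction on $n$, with the base case $n = -2$ handled by (homotopy) functoriality of pushouts, and the inductive step reducing $n$-truncatedness of $f$ to $(n-1)$-truncatedness of $\Delta_f$ using that diagonals are preserved by pullback. Your presentation is more explicit in moving to the slice over $X \times_D X$ and carrying the pushout along, whereas the paper stays in the slice over $D$ and invokes the preceding remark that the slice forgetful functor creates truncation levels, but these are the same underlying argument.
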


\begin{proof}
We induct on $n$.
In the base case $n = -2$, this is functoriality of pushouts that exist.
The case $n \geq -1$ reduces to the case for $n - 1$ since diagonals of maps are preserved by pullbacks.
\end{proof}

The following lemma is a simple recognition criterion for truncation levels of pushouts.

\begin{lemma} \label{pushout-truncated-helper}
In a finitely complete higher category, let
\begin{equation} \label{pushout-truncated-helper:pushout}
\begin{gathered}
\xymatrix{
  A
  \ar[r]
  \ar[d]
&
  C
  \ar[d]
\\
  B
  \ar[r]
&
  D
  \fancypullback{[u]}{[l]}
}
\end{gathered}
\end{equation}
be a pushout stable under pullback.
Let $n \geq -1$.
Then $D$ is $n$-truncated exactly if the maps
\begin{enumerate}[label=(\roman*)]
\item \label{pushout-truncated-helper:B}
$B \times_D B \to B \times B$,
\item \label{pushout-truncated-helper:mixed}
$B \times_D C \to B \times C$,
\item \label{pushout-truncated-helper:C}
$C \times_D C \to C \times C$
\end{enumerate}
are $(n-1)$-truncated.
\end{lemma}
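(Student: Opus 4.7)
The plan is to reduce the claim to two applications of \cref{pushout-truncated-helper-easy}. I would begin by recalling the standard fact that $D$ is $n$-truncated exactly if the diagonal $\Delta_D \co D \to D \times D$ is $(n-1)$-truncated; the remaining work is then to characterize $(n-1)$-truncatedness of $\Delta_D$ in terms of the three maps listed.

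First, I would pull back the pushout~\cref{pushout-truncated-helper:pushout} along $\pi_1 \co D \times D \to D$. By pullback-stability, this exhibits $D \times D$ as the pushout $D \times B \cup_{D \times A} D \times C$. Viewing $\Delta_D$ as a map over $D \times D$ and invoking \cref{pushout-truncated-helper-easy}, $\Delta_D$ is $(n-1)$-truncated exactly if its pullbacks over $D \times B$ and $D \times C$ are. A short pullback-pasting calculation identifies these pullbacks with the ``partial diagonals'' $\delta_B \co B \to D \times B$, $b \mapsto (b,b)$, and $\delta_C \co C \to D \times C$, $c \mapsto (c,c)$.

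For the second reduction, pulling back the original pushout along $B \to D$ and $C \to D$ presents $D \times B$ as the pushout $B \times B \cup_{A \times B} C \times B$ and $D \times C$ as the pushout $B \times C \cup_{A \times C} C \times C$. A second application of \cref{pushout-truncated-helper-easy} to $\delta_B$ over $D \times B$ reduces its $(n-1)$-truncatedness to that of its pullbacks to $B \times B$ and $C \times B$, which by pullback pasting come out (up to swap of factors) to $B \times_D B \to B \times B$ and $B \times_D C \to B \times C$, that is, conditions~\cref{pushout-truncated-helper:B,pushout-truncated-helper:mixed}. Symmetrically, $\delta_C$ over $D \times C$ reduces to conditions~\cref{pushout-truncated-helper:mixed,pushout-truncated-helper:C}. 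Combining the two reductions yields the claim.

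The main obstacle will be the bookkeeping in verifying that each iterated pullback is exactly the map claimed; this is mechanical pullback pasting, but care is needed to check that the two ``mixed'' maps arising from the halves of the second reduction agree up to swap of factors, so that they contribute only the single condition~\cref{pushout-truncated-helper:mixed} rather than two separate ones.
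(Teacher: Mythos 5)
Your proof is correct and follows essentially the same route as the paper: the paper's proof of the reverse implication is exactly the ``double invocation of \cref{pushout-truncated-helper-easy}'' that you spell out, and the forward implication (which you also get from the iff in that lemma) the paper obtains slightly more directly from the fact that pullbacks preserve truncation levels. One small wording fix: in the second reduction you should pull back the original pushout along the \emph{projections} $D \times B \to D$ and $D \times C \to D$ (not along $B \to D$, $C \to D$), which is also what makes the intermediate pushouts inherit pullback-stability from the original.
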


\begin{proof}
$D$ is $n$-truncated exactly if $D \to D \times D$ is $(n-1)$-truncated.
By pullback pasting, one sees that the listed maps are the pullbacks of the diagonal $D \to D \times D$ along the map $L \times R \to D \times D$ for all combinations of $L$ and $R$ being $B$ or $C$ (the map~\cref{pushout-truncated-helper:mixed} appears twice).
Thus, they are $(n-1)$-truncated if $D \to D \times D$ is $(n-1)$-truncated.
The reverse implication follows from a double invocation of \cref{pushout-truncated-helper-easy}.
\end{proof}

The \emph{join} $X \join Y$ of objects $X$ and $Y$, if it exists, is their pushout product, seen as maps $X \to 1$ and $Y \to 1$, \ie the pushout
\begin{equation} \label{join}
\begin{gathered}
\xymatrix{
  X \times Y
  \ar[r]
  \ar[d]
&
  X
  \ar[d]
\\
  Y
  \ar[r]
&
  X \join Y
  \fancypullback{[u]}{[l]}
\rlap{.}}
\end{gathered}
\end{equation}

\begin{lemma} \label{join-prop-level-descent}
In a finitely complete higher category with descent for pushouts, assume that the pushout~\cref{join} exists and is stable under pullback.
If $X$ and $Y$ are $(-1)$-truncated, then so is their join $X \join Y$.
\end{lemma}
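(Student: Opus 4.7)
The plan is to apply \cref{pushout-truncated-helper} with $n = -1$, which reduces the problem to showing that three canonical maps, each of the form $L \times_{X \join Y} R \to L \times R$ for $L, R \in \{X, Y\}$, are $(-2)$-truncated, \ie equivalences. To use this lemma we need the pushout~\cref{join} to be stable under pullback, which is part of our hypothesis.

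First I would unpack what $(-1)$-truncatedness of $X$ and $Y$ gives us at the level of the pushout span. Since $X \to 1$ is mono (as $X$ is a proposition), its pullback $X \times Y \to Y$ is mono, and symmetrically $X \times Y \to X$ is mono. The pushout~\cref{join} therefore falls into the hypothesis of \cref{pushout-mono-descent} (in two ways). Applying that lemma once, the square is a pullback and $X \to X \join Y$ is mono; applying it with the roles of $X$ and $Y$ swapped, $Y \to X \join Y$ is mono as well.

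These three facts immediately compute the relevant pullbacks: $X \to X \join Y$ being mono gives $X \times_{X \join Y} X \simeq X$, similarly $Y \times_{X \join Y} Y \simeq Y$, and the square being a pullback gives $Y \times_{X \join Y} X \simeq X \times Y$. Under these identifications, the three maps of \cref{pushout-truncated-helper} become the diagonals $X \to X \times X$ and $Y \to Y \times Y$ (equivalences since $X$ and $Y$ are propositions) together with the map $X \times Y \to Y \times X$ (an equivalence by symmetry of the product). All three are thus $(-2)$-truncated, and \cref{pushout-truncated-helper} concludes that $X \join Y$ is $(-1)$-truncated.

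I do not foresee a substantial obstacle: the whole argument is a direct assembly of the two preceding results, and the main conceptual point is simply recognizing that the join of two propositions is built from a span of monos, so that the pushout is already a pullback. The only mild bookkeeping is matching the $B, C$ notation of \cref{pushout-truncated-helper} to our $X, Y$ and noting that the two applications of \cref{pushout-mono-descent} must be done on both sides of the span.
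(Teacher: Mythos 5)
Your proof is correct and follows essentially the same route as the paper's: both note that, since $X$ and $Y$ are propositions, the two legs of the span are monomorphisms, apply \cref{pushout-mono-descent} twice to conclude the square is a pullback with mono codiagonal legs, and then verify the three conditions of \cref{pushout-truncated-helper} at $n = -1$.
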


\begin{proof}
By assumption, the left and top maps in~\cref{join} are mono.
Applying \cref{pushout-mono-descent} twice, the square is a pullback and the right and bottom maps are mono.
We check that $X \join Y$ is $(-1)$-truncated by instantiating \cref{pushout-truncated-helper} to the pushout~\cref{join}.
\begin{itemize}
\item 
Since $X \to X \join Y$ is mono, \cref{pushout-truncated-helper:B} is the diagonal $X \to X \times X$.
This is an equivalence since $X$ is $(-1)$-truncated.
\item
Since~\cref{join} is a pullback, \cref{pushout-truncated-helper:mixed} is an equivalence.
\item
The case for~\cref{pushout-truncated-helper:C} is analogous to the one for~\cref{pushout-truncated-helper:B}, using that $Y \to X \join Y$ is mono and $Y$ is $(-1)$-truncated.
\qedhere
\end{itemize}
\end{proof}

We have an analogous statement in the setting with classifiers, proved using \cref{pushout-mono} instead of \cref{pushout-mono-descent}.

\begin{lemma-qed}[Propositional Join] \label{join-prop-level}
In a locally Cartesian closed higher category, assume objects $X$ and $Y$ have classifiers also classifying the terminal object.
If $X$ and $Y$ are $(-1)$-truncated, then so is their join $X \join Y$ if it exists.
\end{lemma-qed}

For the application of \cref{join-prop-level} in the main body of the paper, we note that the assumptions on classifiers are satisfied in homotopy type theory as soon as we have univalence for propositions.

We finally come to the main result of this section.
We first state it using descent.
In this case, it has an easier proof.

\begin{proposition} \label{pushout-mono-n-truncated-descent}
Consider a finitely complete higher category with descent for pushouts.
Let $A \to B$ be a monomorphism having pushouts that are stable under pullback.
Consider a pushout
\begin{equation} \label{pushout-mono-n-truncated-descent:pushout}
\begin{gathered}
\xymatrix{
  A
  \ar[r]
  \ar[d]
&
  C
  \ar[d]
\\
  B
  \ar[r]
&
  D
  \fancypullback{[u]}{[l]}
\rlap{.}}
\end{gathered}
\end{equation}
If $B$ and $C$ are $n$-truncated for $n \geq 0$, then so is $D$.
This also holds for $n = -1$ if we have a map $B \times C \to A$.
\end{proposition}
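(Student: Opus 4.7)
The plan is to apply \cref{pushout-truncated-helper} to $D = B +_A C$ and show each of the three maps $B \times_D B \to B \times B$, $B \times_D C \to B \times C$, $C \times_D C \to C \times C$ is $(n-1)$-truncated, proceeding by induction on $n \geq -1$. By \cref{pushout-mono-descent}, the pushout square is a pullback and $C \to D$ is mono, so the last two maps become $(i, j) \co A \to B \times C$ and $\Delta_C \co C \to C \times C$. The diagonal $\Delta_C$ is $(n-1)$-truncated since $C$ is $n$-truncated. The map $(i,j)$ factors as the split monomorphism $A \hookrightarrow A \times C$ (graph of $j$) composed with the monomorphism $i \times \mathrm{id}_C$, hence is $(-1)$-truncated; for $n \geq 0$ this suffices. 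In the base case $n = -1$, $A$ is propositional as a subobject of the proposition $B$, so $A \to C$ is also mono, making $B \to D$ mono by the symmetric form of \cref{pushout-mono-descent} and hence $B \times_D B = B$; the hypothesized $B \times C \to A$ combined with $(i,j)$ witnesses an equivalence between the propositions $A$ and $B \times C$, and $\Delta_B$ is an equivalence since $B$ is propositional.

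For $n \geq 0$, only the map $B \times_D B \to B \times B$ requires work. I first identify $B \times_D B \simeq B +_A (A \times_C A)$ by applying \cref{pushout-mono-preserves-pullbacks} to the pullback square
\[
\xymatrix{
  A \times_C A \ar[r]^-{\pi_1} \ar[d]_{\pi_2} & A \ar[d]^j \\
  A \ar[r]_j & C
}
\]
viewed as a square under $A$ with structure maps $\Delta_A, \mathrm{id}_A, \mathrm{id}_A, j$. Pushing out along the mono $i \co A \to B$ yields a pullback whose corners are $B +_A (A \times_C A)$, $B$, $B$, $D$, which identifies the top-left corner with $B \times_D B$; in the resulting pushout the leg $A \to A \times_C A$ is the diagonal $\Delta_A$.

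I now view $B +_A (A \times_C A)$ as a pushout in the slice $\C / (B \times B)$: the structure maps are $\Delta_B$ for $B$, the composite $A \times_C A \hookrightarrow A \times A \xrightarrow{i \times i} B \times B$ for $A \times_C A$, and $(i,i)$ for $A$. All these are $(n-1)$-truncated using that $B$ and $C$ are $n$-truncated and $i$ is mono, so $B$ and $A \times_C A$ are $(n-1)$-truncated as objects of the slice, and $A \to B$ remains mono. Applying the proposition inductively at level $n - 1$ in the slice yields that $B +_A (A \times_C A)$ is $(n-1)$-truncated there, which is exactly that $B \times_D B \to B \times B$ is $(n-1)$-truncated. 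For $n \geq 1$ this is direct. For $n = 0$ the inductive application reaches level $-1$ and requires an additional map $B \times_{B \times B} (A \times_C A) \to A$ in the slice; computing this pullback (pulling $\Delta_B$ along $i \times i \co A \times A \to B \times B$ yields $\Delta_A$ since $i$ is mono, then restricting to $A \times_C A$ yields $A$) gives $B \times_{B \times B} (A \times_C A) \simeq A$, so the identity supplies the required datum. The main obstacle will be threading the induction through the slice correctly and verifying the extra data needed to close the $n = 0$ case.
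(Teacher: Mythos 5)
Your proof is correct and follows the same strategy as the paper's: reduce via \cref{pushout-truncated-helper} to three maps, dispatch $\Delta_C$ and $(i,j)\co A \to B \times C$ via monicity and the pullback property from \cref{pushout-mono-descent}, and handle the remaining map $B \times_D B \to B \times B$ by identifying $B \times_D B \simeq B +_A (A \times_C A)$ and applying the proposition inductively in the slice over $B \times B$. You streamline the identification step by invoking \cref{pushout-mono-preserves-pullbacks} directly, where the paper instead re-derives the relevant pullback property (its subgoal~(i)) by a bespoke descent-in-a-cube argument; your version is a cleaner re-use of a lemma the paper already has. You also verify the $n = -1$ base case directly through \cref{pushout-truncated-helper} rather than routing through \cref{join-prop-level-descent}, and you observe that $(i,j)$ is outright a monomorphism rather than merely $(n-1)$-truncated. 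One caution: your claimed equivalence $B \times_{B\times B}(A \times_C A) \simeq A$ is only valid because it arises at $n = 0$, where $C$ being $0$-truncated makes $\Omega_{j(a)} C$ contractible; for the argument it suffices, as the paper notes, to produce a \emph{map} $B \times_{B\times B}(A \times_C A) \to A$, which always exists via the projection to $A$, and this phrasing would let you avoid having to separate the $n = 0$ case.
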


\begin{proof}
We proceed by induction on $n$.
Of note, the inductive step will apply the claim to a slice, so the ambient category changes within the induction.
(In homotopy type theory, this simply corresponds to a use of the induction hypothesis in an extended context.)

In the base case $n = -1$, note that $A$ is $(-1)$-truncated since $A \to B$ is mono and $B$ is $(-1)$-truncated.
Since we have maps back and forth between $A$ and $B \times C$ and both objects are $(-1)$-truncated, these maps are invertible.
It follows that the span in~\cref{pushout-mono-n-truncated-descent:pushout} is a product span, and the claim reduces to \cref{join-prop-level-descent}.

From now on, assume $n \geq 0$.
By \cref{pushout-mono-descent}, the map $C \to D$ is mono and~\cref{pushout-mono-n-truncated-descent:pushout} is a pullback.
We check that $D$ is $n$-truncated by instantiating \cref{pushout-truncated-helper} to the pushout~\cref{pushout-mono-n-truncated-descent:pushout}.
\begin{itemize}
\item
Since $C \to D$ is mono, the map~\cref{pushout-truncated-helper:C} is the diagonal $C \to C \times C$.
This is $(n-1)$-truncated since $C$ is $n$-truncated.
\item
Since~\cref{pushout-mono-n-truncated-descent:pushout} is a pullback, the map~\cref{pushout-truncated-helper:mixed} is $A \to B \times C$.
This factors as
\[
\xymatrix{
  A
  \ar[r]
&
  A \times C
  \ar[r]
&
  B \times C
\rlap{.}}
\]
The first factor is a pullback of the diagonal $C \to C \times C$, which is $(n-1)$-truncated since $C$ $n$-truncated.
The second factor is a pullback of $A \to B$, which is $(n-1)$-truncated since $n \geq 0$ and $A \to B$ is mono.
\end{itemize}
The remaining case is the map~\cref{pushout-truncated-helper:B}, \ie showing that $B \times_D B \to B \times B$ is $(n-1)$-truncated.
Consider the diagram producing the diagonal of $A$ over $C$:
\begin{equation} \label{pushout-mono-n-truncated-descent:P}
\begin{gathered}
\xymatrix{
  A
  \ar@{.>}[dr]
  \ar@/_1em/[ddr]_{\id}
  \ar@/^1em/[drr]^{\id}
\\
&
  P
  \ar[r]^{p_1}
  \ar[d]_{p_2}
  \fancypullback{[d]}{[r]}
&
  A
  \ar[d]
\\
&
  A
  \ar[r]
&
  C
\rlap{.}}
\end{gathered}
\end{equation}
We see this as a square in the coslice under $A$.
We take its pushout along $A \to B$:
\begin{equation} \label{pushout-mono-n-truncated-descent:Q}
\begin{gathered}
\xymatrix{
  B
  \ar[dr]
  \ar@/_1em/[ddr]_{\id}
  \ar@/^1em/[drr]^{\id}
\\
&
  Q
  \ar[r]^{q_1}
  \ar[d]_{q_2}
&
  B
  \ar[d]
\\
&
  B
  \ar[r]
&
  D
\rlap{.}}
\end{gathered}
\end{equation}
We will independently prove the following:
\begin{enumerate}[label=(\roman*)]
\item \label{pushout-mono-n-truncated-descent:Q-is-pullback} the inner square in~\cref{pushout-mono-n-truncated-descent:Q} is a pullback,
\item \label{pushout-mono-n-truncated-descent:Q-truncated} the map $\angles{q_1, q_2} \co Q \to B \times B$ is $(n-1)$-truncated.
\end{enumerate}
Together, this implies the goal.

Let us prove~\cref{pushout-mono-n-truncated-descent:Q-is-pullback}.
By pushout pasting, the squares
\begin{equation} \label{pushout-mono-n-truncated-descent:pushout-pasting}
\begin{aligned}
\xymatrix{
  P
  \ar[r]^{p_1}
  \ar[d]
&
  A
  \ar[d]
\\
  Q
  \ar[r]^{q_1}
&
  \fancypullback{[u]}{[l]}
  B
}
&\qquad\qquad&
\xymatrix{
  P
  \ar[r]^{p_2}
  \ar[d]
&
  A
  \ar[d]
\\
  Q
  \ar[r]^{q_2}
&
  \fancypullback{[u]}{[l]}
  B
}
\end{aligned}
\end{equation}
are pushouts.
By~\cref{pushout-mono-descent}, the map $P \to Q$ is mono.
Again by~\cref{pushout-mono-descent}, the squares~\cref{pushout-mono-n-truncated-descent:pushout-pasting} are pullbacks.
Applying descent in the cube connecting the inner squares in~\cref{pushout-mono-n-truncated-descent:P,pushout-mono-n-truncated-descent:Q}, we see that the inner square in~\cref{pushout-mono-n-truncated-descent:Q} is a pullback.

Let us prove~\cref{pushout-mono-n-truncated-descent:Q-truncated}.
Focus on the pushout square
\begin{equation} \label{pushout-mono-n-truncated-descent:0}
\begin{gathered}
\xymatrix{
  A
  \ar[r]
  \ar[d]
&
  P
  \ar[d]
\\
  B
  \ar[r]
&
  \fancypullback{[u]}{[l]}
  Q
\rlap{.}}
\end{gathered}
\end{equation}
It lives in the slice over $B \times B$ via $\angles{q_1, q_2}$.
The induced map $P \to B \times B$ rewrites as
\[
\xymatrix@C+0.2cm{
  P
  \ar[r]^-{\angles{p_1, p_2}}
&
  A \times A
  \ar[r]
&
  B \times B
\rlap{.}}
\]
As in the proof of \cref{pushout-truncated-helper}, the first factor is a pullback of the diagonal $C \to C \times C$, hence $(n-1)$-truncated.
The second factor is mono, hence $(n-1)$-truncated since $n \geq 0$.
It follows that $P$ is $(n-1)$-truncated over $B \times B$.
Note that $B$ is $(n-1)$-truncated over $B \times B$ by assumption.

We apply the induction hypothesis to the pushout~\cref{pushout-mono-n-truncated-descent:0} in the slice over $B \times B$.
Note that the assumptions on descent and the mono $A \to B$ all descend to the slice.
We have just shown that the bottom left and top right objects of~\cref{pushout-mono-n-truncated-descent:0} are $(n-1)$-truncated over $B \times B$.
For the case $n-1 = -1$, we note that we have maps
\[
\xymatrix{
  B \times_{B \times B} P
  \ar[r]
&
  B \times_{B \times B} (A \times A)
&
  A
  \ar[l]_-{\simeq}
}
\]
over $B \times B$ where the second map inverts as $A \to B$ is mono.
This finishes the proof of~\cref{pushout-mono-n-truncated-descent:Q-truncated}.
\end{proof}

We now state the above result in terms of classifiers.
Since we only assume classifiers for $(n-1)$-truncated maps, the previous proof requires modification.

\begin{proposition}[$n$-truncated Pushout Of Mono] \label{pushout-mono-n-truncated}
In a locally Cartesian closed higher category, let $m \co A \to B$ be a monomorphism having pushouts.
Consider a pushout
\begin{equation} \label{pushout-mono-n-truncated:pushout}
\begin{gathered}
\xymatrix{
  A
  \ar[r]
  \ar[d]_{m}
&
  C
  \ar[d]
\\
  B
  \ar[r]
&
  D
  \fancypullback{[u]}{[l]}
\rlap{.}}
\end{gathered}
\end{equation}
Let $n \geq 0$.
Assume a classifier for any finite collection of $(n-1)$-truncated maps.
If $B$ and $C$ are $n$-truncated, then so is $D$.
This also holds for $n = -1$ if we have a map $B \times C \to A$ and a classifier for any finite collection of monomorphisms.
\end{proposition}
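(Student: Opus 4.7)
The plan is to mirror the proof of \cref{pushout-mono-n-truncated-descent}, replacing each invocation of descent for pushouts with an application of \cref{descent}, which supplies pushout descent from a classifier. The argument is by induction on $n \geq -1$; as in the descent proof, the inductive step for level $n$ will recurse at level $n-1$, possibly in a slice, so the ambient category changes within the induction.

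For the base case $n = -1$: since $m$ is mono and $B$ is a proposition, so is $A$. Combined with the assumed map $B \times C \to A$ and the canonical map $A \to B \times C$, the fact that both source and target are $(-1)$-truncated forces $A \simeq B \times C$. The span $B \leftarrow A \to C$ becomes a product span, so $D$ is equivalent to the join $B \join C$, and \cref{join-prop-level} applies since the classifier hypothesis for monomorphisms in particular supplies classifiers for $B$, $C$, and the terminal object.

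For the inductive step $n \geq 0$: the mono $m$ is $(n-1)$-truncated, so by hypothesis shares a classifier with the terminal object. \cref{pushout-mono} then yields that $C \to D$ is mono, classified by the same classifier, and that the pushout square is also a pullback. To conclude $n$-truncatedness of $D$, I would apply \cref{pushout-truncated-helper} and check $(n-1)$-truncatedness of its three diagonals. The diagonal $C \times_D C \to C \times C$ coincides with the diagonal of $C$ since $C \to D$ is mono, so is $(n-1)$-truncated. The mixed map $B \times_D C \to B \times C$ rewrites, via the pullback property, as $A \to B \times C$, factoring as the diagonal $A \to A \times C$ (a pullback of the diagonal of $C$) followed by $m \times C$, both $(n-1)$-truncated.

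The remaining diagonal $B \times_D B \to B \times B$ is handled by the cube construction of the descent proof: form the pullback square $P$ computing the diagonal of $A$ over $C$, then push out along $m$ to obtain $Q$ with maps $q_1, q_2 : Q \to B$, and prove (i) the resulting square for $Q$ is a pullback, and (ii) $\angles{q_1, q_2} : Q \to B \times B$ is $(n-1)$-truncated. For (i), I would invoke \cref{descent} on the cube connecting the inner squares, which needs a common classifier for the mono $P \to Q$ and the terminal object --- both $(n-1)$-truncated, hence covered by hypothesis. For (ii), I would observe that $P \to B \times B$ factors through $A \times A \to B \times B$ and is therefore $(n-1)$-truncated, and then apply the induction hypothesis to the defining pushout of $Q$ in the slice over $B \times B$, where the required classifiers for $(n-2)$-truncated maps are inherited from the base by pullback. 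The main obstacle is careful bookkeeping of which classifiers are needed at each invocation of \cref{descent}; since the maps appearing are all diagonals of $n$-truncated objects, monos, or pullbacks thereof, finite collections of $(n-1)$-truncated maps suffice, matching the hypothesis exactly.
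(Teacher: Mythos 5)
There is a genuine gap in your treatment of subgoal~(i), the pullback property for $Q$. You propose to invoke \cref{descent} "on the cube connecting the inner squares" and claim this "needs a common classifier for the mono $P \to Q$ and the terminal object." But in \cref{descent}, the classifier is required for the vertical edges $p_{01}$ and $p_{10}$ of the cube, which sit over $Y_{01}$ and $Y_{10}$ in the pushout apex-adjacent positions. In the cube connecting the inner squares of the $P$- and $Q$-diagrams, once oriented so that the horizontal faces are the pushout-pasting square $(P,A,Q,B)$ and the original pushout $(A,C,B,D)$ and the left/back faces are the remaining pushout-pasting square and the pullback defining $P$, these two maps work out to be $A \to C$ and $q_2 \co Q \to B$ --- not $P \to Q$ and the terminal object. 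Neither $A \to C$ nor $Q \to B$ is $(n-1)$-truncated in general, so the assumed classifiers do not apply. This is precisely the point where the paper remarks that "the previous descent argument no longer works because the relevant maps in the cube considered are not $(n-1)$-truncated."

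The paper's fix is to prove~(ii) \emph{first} and then use it in a different cube: one whose bottom layer is the pullback of the original pushout along $B \to 1$, so the vertices are $B \times A$, $B \times C$, $B \times B$, $B \times D$. In that cube the relevant vertical edges become $\langle m p_1, p_2\rangle \co P \to B\times A$, $A \to B\times C$, $\langle q_1, q_2\rangle \co Q \to B\times B$, and $B \to B\times D$; the ones needing a common classifier are $A \to B\times C$ and $Q \to B\times B$, which \emph{are} $(n-1)$-truncated (the former from the mixed diagonal computation, the latter from~(ii)). So~(i) genuinely depends on~(ii) in this setting, reversing the logical independence you assumed. Your~(ii) argument and base case are fine, but without correcting~(i) the inductive step does not go through as written.
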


\begin{proof}
The proof proceeds as for \cref{pushout-mono-n-truncated-descent}, using \cref{join-prop-level} instead of \cref{join-prop-level-descent} in the base case $n = -1$.

In the situation for $n \geq 0$, we use \cref{pushout-mono} instead of \cref{pushout-mono-descent} to deduce that $C \to D$ is mono and~\cref{pushout-mono-n-truncated:pushout} is a pullback.

A divergence occurs when checking the subgoals~\cref{pushout-mono-n-truncated-descent:Q-is-pullback,pushout-mono-n-truncated-descent:Q-truncated}.
The previous descent argument for~\cref{pushout-mono-n-truncated-descent:Q-is-pullback} no longer works because the relevant maps in the cube considered are not $(n-1)$-truncated.
Instead, we are forced to prove~\cref{pushout-mono-n-truncated-descent:Q-is-pullback} in a way that will depend on~\cref{pushout-mono-n-truncated-descent:Q-truncated}.
The proof of~\cref{pushout-mono-n-truncated-descent:Q-truncated} proceeds as before, noting that the assumptions on classifiers descend to the recursive case.

Let us now prove~\cref{pushout-mono-n-truncated-descent:Q-is-pullback}.
To start, we use \cref{pushout-mono} instead of \cref{pushout-mono-descent} to derive the assertions about the squares~\cref{pushout-mono-n-truncated-descent:pushout-pasting}.
Consider the cube
\begin{equation} \label{pushout-mono-n-truncated:real-cube}
\begin{gathered}
\xymatrix@!C@C-0.3cm{
  P
  \ar[rr]^{p_1}
  \ar[dd]_{\angles{m p_1, p_2}}
  \ar[dr]
  \ar[rr]
  \fancypullback{[dd]}{[rrr]}
  \fancypullback{[dd]}{[dr]}
&&
  A
  \ar[dd]|!{[dl];[dr]}{\hole}
  \ar[dr]
&\\&
  Q
  \ar[rr]^(0.3){q_1}
  \ar[dd]_(0.3){\angles{q_1, q_2}}
&&
  B
  \ar[dd]
  \fancypullback{[ul]}{[ll]}
\\
  B \times A
  \ar[rr]|!{[ur];[dr]}{\hole}
  \ar[dr]
&&
  B \times C
  \ar[dr]
\\&
  B \times B
  \ar[rr]
&&
  B \times D
  \fancypullback{[ul]}{[ll]}
\rlap{.}}
\end{gathered}
\end{equation}
The bottom face is the pullback of~\cref{pushout-mono-n-truncated:pushout} along $B \to 1$, hence a pushout by local Cartesian closure.
The top face is one of the pushouts~\cref{pushout-mono-n-truncated-descent:pushout-pasting}.
To see that the back and left faces are pullbacks, pullback paste in the diagrams
\begin{align*}
\xymatrix{
  P
  \ar[r]^-{p_1}
  \ar[d]_{\angles{m p_1, p_2}}
&
  A
  \ar[d]
\\
  B \times A
  \ar[r]
  \ar[d]_{\pi_2}
  \fancypullback{[d]}{[r]}
&
  B \times C
  \ar[d]_{\pi_2}
\\
  A
  \ar[r]
&
  C
\rlap{,}}
&&
\xymatrix{
  P
  \ar[r]
  \ar[d]_{\angles{m p_1, p_2}}
&
  Q
  \ar[d]_{\angles{q_1, q_2}}
\\
  B \times A
  \ar[r]
  \ar[d]_{\pi_2}
  \fancypullback{[d]}{[r]}
&
  B \times B
  \ar[d]_{\pi_2}
\\
  A
  \ar[r]
&
  B
\rlap{;}}
\end{align*}
here, the left composite square is a pullback by construction~\cref{pushout-mono-n-truncated-descent:P} and the right composite square is one of the pullbacks~\cref{pushout-mono-n-truncated-descent:pushout-pasting}.

We already know that $A \to B \times C$ is $(n-1)$-truncated.
The map $Q \to B \times B$ is $(n-1)$-truncated by~\cref{pushout-mono-n-truncated-descent:Q-truncated}.
By assumption, they have a common classifier.
We are thus in the position to apply \cref{descent} to the cube~\cref{pushout-mono-n-truncated:real-cube} and deduce that its front face is a pullback.
Pasting the pullbacks
\[
\xymatrix{
  Q
  \ar[r]^{q_1}
  \ar[d]_{\angles{q_1, q_2}}
  \fancypullback{[d]}{[r]}
&
  B
  \ar[d]
\\
  B \times B
  \ar[r]
  \ar[d]_{\pi_2}
  \fancypullback{[d]}{[r]}
&
  B \times D
  \ar[d]_{\pi_2}
\\
  B
  \ar[r]
&
  D
\rlap{,}}
\]
we get \cref{pushout-mono-n-truncated-descent:Q-is-pullback}.
\end{proof}

\begin{remark} \label{pushout-mono-n-truncated-descent-restricted}
Following the idea of the proof of \cref{pushout-mono-n-truncated}, one may strengthen also the statement of \cref{pushout-mono-n-truncated-descent} by restricting descent to $(n-1)$-descent (or $(-1)$-descent for $n = -1$).
Here, \emph{$n$-descent} in a finitely complete higher category $\C$ refers to the descent-like notion obtained by instead considering the functor from $\C^\op$ to higher categories that sends $Z$ to the full higher subcategory of $C/Z$ on $n$-truncated objects.
\end{remark}

\begin{remark}
It is possible to state the assumption on classifiers in \cref{pushout-mono-n-truncated} in a form closer to current systems of homotopy type theory.
This involves introducing a notion of universe, \ie fixing a (not necessarily univalent) map $V \to U$ whose classified maps are closed under composition and formation of diagonals and whose classified objects in any slice are closed under pushouts.
One then requires $A, B, C$ to be classified by $V \to U$ and that the collection of those maps classified by $V \to U$ that are $(n-1)$-truncated (or $(-1)$-truncated for $n = -1$) also admits a (univalent) classifier in our sense.

Note that we are speaking here about classification in the sense of higher categories, not strict classification by some universe in a model of type theory.
This is the essential difference to the topic of the main body of the article where this statement is used.
\end{remark}

For the application of \cref{pushout-mono-n-truncated} in the main body of the paper, we note that the assumptions on classifiers are satisfied in homotopy type theory as soon as we have univalence for propositions.

\section{Partially Propositional Indexed W-types}
\label{appendix-on-W-type-stuff}

In this section, we develop the theory of partially propositional indexed W-types and characterize their equality.
As a warm up, we recall indexed containers, the associated notion of indexed-type, and the encode-decode method used to characterize their equality.
The results developed here will be used in \cref{sec:general} in the definition of an $n$-truncated universe of $n$-types that is univalent for $(n-1)$-types.

We work informally in the language of homotopy type theory.
In particular, we have access to function extensionality.
We will be explicit about universes when they are needed and what kind of univalence we require of them.

No result in this section depends on judgmental $\beta$-equality for higher inductive types, even for point constructors.
The $\beta$-law as an internal equality will suffice.
This applies to pushouts, the higher inductive families that will constitute partially propositional W-types, and even ordinary indexed W-types.

\subsection{Indexed Containers}
\label{indexed-container}

Given a type $I$, an \emph{$I$-indexed container} $C$ is a pair $C = (S, \Pos)$ of type families as follows:
\begin{itemize}
\item
given $i : I$, we have a type $S(i)$,
\item
given $i, j : I$, and $s : S(i)$, we have a type $\Pos(s, j)$.%
\footnote{This is a polynomial functor $I \leftarrow E \to B \to I$ with Reedy fibrant specifying data.
The latter means that $B \to I$ and $E \to B \times I$ are fibrations.}
\end{itemize}
Its \emph{extension} is the endofunctor on families over $I$ that sends a family $X$ to the family $\cExt(X)$ given by
\[
\cExt(X)(i) = \sm{s:S(i)} \prd{j:I} \Pos(i, s, j) \to X(j)
.\]

\subsection{Indexed W-types}
\label{indexed-W-types}

Fix an $I$-indexed container $C$ as above.
As in~\cite{sojakova-et-al:homotopy-W-types,kaposi-kovacs:hits-syntax} (the former applying only to the non-indexed case), one has notions of \emph{algebras}, \emph{algebra morphisms}, \emph{algebra fibrations}, and \emph{algebra fibration sections} for $C$.
Here and in the following, we omit the prefix ``homotopy'' for all relevant notions, this being the default meaning for us.

\begin{definition}[Indexed W-type]
A \emph{W-type} for the indexed container $C$ is an initial $C$-algebra, meaning the type of algebra morphisms to any algebra is contractible,
\end{definition}

We denote a given, substitutionally stable choice of such an object by $(W_C, \sup)$, although denoting the whole algebra by $W_C$.
One may characterize $W_C$ via elimination: any algebra fibration over $W_C$ has an algebra section.
This is known as \emph{induction}.
Note that the $\beta$-law for the eliminator holds only up to identity type.

In a framework with inductive families (with or without judgmental $\beta$-law), one may implement $W_C$ as an inductive family with constructor $\sup(s, t) : W_C(i)$ for $i : I$, $s : S(i)$, and $t : \prd{j : I} \Pos(s, j) \to W_C(j)$; when applying $t$, we leave its first argument $j$ implicit.
To keep the presentation readable, we will informally use $W_C$ as if it was given as such an inductive family and use pattern-matching-style notation for induction; we leave the reduction to the eliminator (including the definition of the algebra fibration corresponding to each use of induction) to the reader.

\subsection{Encode-Decode Method}
\label{encode-decode}

The encode-decode method~\cite{licata-shulman:encode-decode} is a general method for characterizing equality in a (higher) inductive type (or family) $T$.
In our view, it decomposes into the following three steps (here for just a single type $T$).
\begin{enumerate}
\item \label{encode-decode:code}
Define a binary relation $\Eq_T$ of \emph{equality codes} on $T$.
This uses double induction on $T$ (with ultimate target a universe) and makes use of univalence if there are any path constructors.
\item \label{encode-decode:encode}
Define an ``encoding'' function
\begin{equation} \label{encode}
\begin{gathered}
\xymatrix@C+0.8cm{
  T(x_0, x_1)
  \ar[r]^-{\encode_{x_0, x_1}}
&
  \Eq_T(x_0, x_1)
\rlap{.}}
\end{gathered}
\end{equation}
This uses induction over the given equality and single induction on $T$.
\item \label{encode-decode:encode-fiber}
Prove $\encode^{-1}(c)$ for each code $c : \Eq_T(x_0, x_1)$.
This is a pair $(p, q)$ where $p : T(x_0, x_1)$ and $q : \encode(p) = c$.
This uses double induction on $T$ as in step~\cref{encode-decode:code}.
\end{enumerate}
Step~\cref{encode-decode:encode-fiber} makes~\cref{encode} into a retraction.
Summing over $x_1 : T$, the source of the retraction becomes contractible.
But any retraction with contractible source is an equivalence.
By a standard lemma about fiberwise equivalences~\cite[Thm~4.7.7]{HoTTBook}, it follows that the original map~\cref{encode} is an equivalence.

\subsection{Equality in Indexed W-types}
\label{indexed-w-types-equality}

We now apply the encode-decode method from~\cref{encode-decode} to characterize equality in $W_C$.
Although we did not find this in the literature, we believe it to be folklore.
We have a choice between two viable options:
\begin{itemize}
\item
characterize equality in each fiber $W_C(i)$ for $i : I$,
\item
characterize dependent equalitiy in $W_C$ over a given equality in $I$.
\end{itemize}
We go for the second one.

For step~\cref{encode-decode:code}, we define a type $\Eq(p_i, x_0, x_1)$ of \emph{equality codes} between $x_0 : W_c(i_0)$ and $x_1 : W_C(i_1)$ over $p_i : I(i_0, i_1)$.
We proceed by double induction on $x_0$ and $x_1$ into a large enough universe.
For $x_0 \equiv \sup(s_0, t_0)$ and $x_1 \equiv \sup(s_1, t_1)$, we take $\Eq(p_i, x_0, x_1)$ equal to the type of pairs $(p_s, c_t)$ where:
\begin{itemize}
\item $p_s$ is a dependent equality in $S$ over $p_i$ between $s_0$ and $s_1$,
\item $c_t$ is a dependent function, sending $j : I$\footnote{Instead of $j : I$, we could also quantify over $j_0, j_1 : I$ with $p_j : I(j_0, j_1)$. Perhaps this would be more consistent.}, $m_0 : \Pos(s_0, j)$, $m_1 : \Pos(s_1, j)$, and a dependent equality $p_m$ over $p_i$ and $p_s$ between $m_0$ and $m_1$ to
\[
c_t(p_m) : \Eq(\refl_j, t_0(m_0), t_1(m_1))
.\]
\end{itemize}

Steps~\cref{encode-decode:encode,encode-decode:encode-fiber} will establish the following statement.

\begin{proposition} \label{indexed-W-equality-equiv}
Given $p : I(i_0, i_1)$ with $x_0 : W_C(i_0)$ and $x_1 : W_C(i_1)$, we have
\[
(x_0 =_{W_C(p)} x_1) \simeq \Eq(p, x_0, x_1)
.\]
\end{proposition}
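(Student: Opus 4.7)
The plan is to complete the encode-decode method that was set up in the excerpt: step~\cref{encode-decode:code} is already done by the definition of $\Eq(p, x_0, x_1)$, so it remains to carry out the encoding in step~\cref{encode-decode:encode} and the retraction argument in step~\cref{encode-decode:encode-fiber}. Summing the retraction over $x_1$ makes the source contractible, hence it is in fact an equivalence, and by fiberwise equivalences the original map is an equivalence, giving \cref{indexed-W-equality-equiv}.

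For the encoding, I first path-induct on $p : I(i_0, i_1)$ and then on the dependent equality $x_0 =_{W_C(p)} x_1$, reducing the problem to constructing a canonical reflexive code $\encode'(x) : \Eq(\refl_i, x, x)$ for $x : W_C(i)$. This is defined by single induction on $x$: for $x \equiv \sup(s, t)$, take $\encode'(x) \defeq (\refl_s, c_t)$ where, given $j$, $m_0, m_1 : \Pos(s, j)$, and a (dependent) equality $p_m$ between them over $\refl_i$ and $\refl_s$, we path-induct on $p_m$ (it is just an ordinary equality $m_0 = m_1$ since the base equalities are reflexive) and apply the induction hypothesis $\encode'(t(m))$.

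For the retraction step, I proceed by double induction on $x_0 : W_C(i_0)$ and $x_1 : W_C(i_1)$. In the case $x_0 \equiv \sup(s_0, t_0)$, $x_1 \equiv \sup(s_1, t_1)$, a code $c : \Eq(p, x_0, x_1)$ has the form $(p_s, c_t)$. I first path-induct on $p$ to reduce to $i \defeq i_0 \equiv i_1$ and $p \equiv \refl_i$, so $p_s$ becomes an ordinary equality $s_0 = s_1$, on which I path-induct to further reduce to $s \defeq s_0 \equiv s_1$ and $p_s \equiv \refl_s$. After these reductions, $c_t$ amounts to a dependent function sending $j$ and $m : \Pos(s, j)$ to an element of $\Eq(\refl_j, t_0(m), t_1(m))$. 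By the induction hypothesis applied pointwise, each of these codes lies in the image of $\encode$ up to propositional equality, so after path-inducting on the witnessing equalities I may assume $c_t(m) \equiv \encode(q(m))$ for some $q(m) : t_0(m) = t_1(m)$. Function extensionality and a further path induction on the resulting $t_0 = t_1$ then reduce everything to the case where $c = \encode'(\sup(s, t))$ by construction, yielding $\encode^{-1}(c)$ definitionally.

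The main obstacle is the careful bookkeeping of the dependent equalities over $p_i$ and $p_s$, particularly in threading path inductions through the codes $c_t$ without tripping over the variable indices $j$ and positions $m_0, m_1$. This mirrors the pattern already executed in detail in the proof of \cref{simple-equality-in-V}, and indeed the present argument is precisely the abstract form of that proof applied to a generic indexed container; the main novelty here is just organizing the induction hypotheses uniformly across the fibers of $\Pos$.
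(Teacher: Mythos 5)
Your proof is correct and follows essentially the same route as the paper: encode-decode with a reflexivity code $\encode'$ defined by $W$-induction, and a retraction argument by double $W$-induction that reduces each code to a canonical reflexive one via iterated path induction and function extensionality. The only cosmetic divergence is that the paper path-inducts on $p$ before the double $W$-induction rather than after, and is slightly more explicit about (a) transporting along the $\beta$-equality for $\Eq$ (since judgmental $\beta$ is not assumed) and (b) the intermediate equivalence $\Pos(s,j) \simeq \sm{m_0,m_1 : \Pos(s,j)}(m_0 = m_1)$ used to collapse $c_t$ to a one-argument $c_t'$ — steps you compress into ``after these reductions, $c_t$ amounts to \ldots''.
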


\begin{proof}
For step~\cref{encode-decode:encode}, we first define $\encode'(x) : \Eq(\refl_i, x, x)$ for $i : I$ and $x : W_C(i)$.
We proceed by induction on $x$, letting $x \equiv \sup(s, t)$.
We let $\encode'(x)$ be $(\refl_s, p_t)$, transported along the $\beta$-equality for $\Eq$.
Here, $p_t$ is defined by equality induction from $p_t(j, \refl_m) \defeq \encode'(t(m))$.
From $\encode'$, we obtain
\[
\xymatrix@C+1.2cm{
  x_0 =_{W_C(p_i)} x_1
  \ar[r]^-{\encode_{p_i, x_0, x_1}}
&
  \Eq(p_i, x_0, x_1)
}
\]
by equality induction first on $p_i$ and then the equality between $x_0$ and $x_1$ in the same fiber.

For step~\cref{encode-decode:encode-fiber}, we prove $\encode_{p_i, x_0, x_1}^{-1}(c)$ for $c : \Eq(p_i, x_0, x_1)$.
Inducting on $p_i$, we may suppose $i \defeq i_0 \equiv i_1$ and $p_i \equiv \refl$.
We induct on $x_0$ and $x_1$, letting $x_0 \equiv \sup(s_0, t_0)$ and $x_1 \equiv \sup(s_1, t_1)$.
Using the $\beta$-equality for $\Eq$, we can reduce to the case where $c$ is the transport of a pair $(p_s, c_t)$ as in step~\cref{encode-decode:code}.
Inducting on $p_s$, we may suppose $s \defeq s_0 \equiv s_1$ and $p_s \equiv \refl$.

Given $j : I$, note that the canonical map
\[
\Pos(s, j) \to \textstyle{\sm{m_0, m_1 : \Pos(s, j)} m_0 = m_1}
\]
is an equivalence.
Using function extensionality and equality induction, we may thus suppose that $c_t$ is obtained using equality induction from
\[
c_t'(m) \defeq c_t(\refl_m) : \Eq(\refl_j, t_0(m), t_1(m))
\]
for $j : I$ and $m : \Pos(s, j)$ in the same fashion that $p_t$ is defined in the definition of $\encode'$.
By induction hypothesis, we have $\encode_{\refl_j, t_0(m), t_1(m)}^{-1}(c_t'(m))$.
Using function extensionality and equality induction, we may thus suppose that
\begin{align*}
c_t'
&\equiv
\lam{j}{m} \encode_{\refl_j, t_0(m), t_1(m)}(q_m(m))
\end{align*}
for some $q_m(m) : t_0(m) =_{W_C(j)} t_1(m)$ depending on $j : I$ and $m : \Pos(s, j)$.
Again using function extensionality and equality induction, we may suppose that $t \defeq t_0 \equiv t_1$ and $q_m \equiv \lam{j}{m} \refl_{t(m)}$.

Now we have $c_t' \equiv \lam{j}{m} \encode'(t(m))$, and thus
\begin{align*}
c
&\equiv
(\refl_s, c_t)
\\&\equiv
\encode'(\sup(s, t))
\\&\equiv
\encode_{\refl_i, \sup(s, t), \sup(s, t)}(\refl)
.\end{align*}
This shows $\encode_{p_i,x_0,x_1}^{-1}(c)$.
\end{proof}

\subsection{Partially Propositional Indexed W-types}
\label{partially-propositional-indexed-W-types}

Fix an $I$-indexed container $C$.
In addition, fix a family $P$ of propositions over $I$.

We wish to mix the inductive construction of the indexed W-type $W_C$ with the propositional truncation, applied only over indices $i : I$ with $P(i)$.
Note that indexed W-types are not merely fiberwise W-types.
Thus, it would be wrong to take $W_C$ and simply conditionally truncate each fiber,
Rather, we want to have the truncation interleaved within the inductive definition.

A type $A$ is \emph{$Q$\hyp{}propositional} if $A$ is propositional assuming $Q$, i.e., if $Q \to \isprop(A)$.
A family $X$ over $I$ is \emph{$P$\hyp{}propositional} if $X(i)$ is $P(i)$-propositional for $i : I$.
We use analogous terminology with being contractible.

A $C$-algebra is \emph{$P$\hyp{}propositional} if the underlying family is $P$-propositional.

\begin{definition}[Partially propositional indexed W-type] \label{partially-propositional-indexed-W-type}
A \emph{partially propositional W-type} for the propositional family $P$ and the indexed container $C$ is an initial $P$\hyp{}propositional $C$\hyp{}algebra.
\end{definition}

We denote a given, substitutionally stable choice of such an object by $W_C^P$.
We also call this the \emph{$P$\hyp{}propositional indexed W-type} generated by by $C$.
As before, one may characterize $W_{C,P}$ via elimination (\ie induction): any $P$\hyp{}propositional algebra fibration over $W_C^P$ has an algebra section.
Here, an algebra fibration over a $P$\hyp{}propositional algebra is \emph{$P$\hyp{}propositional} if the total space algebra is $P$\hyp{}propositional.

Equivalently, $W_C^P$ is the higher inductive family~\cite{lumsdaine-shulman:hits-semantics,kaposi-kovacs:hits-syntax} with point constructor $\sup$ as for $W_C$ and path constructor
\[
\trunc(c, x, y) : W_C^P(i)(x, y)
\]
for $i : I$ with $c : P(i)$ and $x, y : W_C^P(i)$.
As before, we do not assume that $\beta$-equality is judgmental.
Again, to keep the presentation readable, we will informally use induction on $W_C^P$ using pattern-matching-style notation, leaving the compilation to elimination with respect to $P$\hyp{}propositional algebra fibrations to the reader.
The reasoning principle is the same as for $W_C$, only that we must show the target's fiber over $i : I$ is propositional if $P(i)$.

Higher inductive families have been justified in the simplicial set model~\cite{lumsdaine-shulman:hits-semantics} and various cubical sets models~\cite{cavallo:hits,coquand-et-al:hits} of homotopy type theory.
Thus, these models support partially propositional W-types (even with judgmental $\beta$-reduction).

\subsection{Equality in Partially Propositional Indexed W-types}
\label{partially-propositional-indexed-W-types-equality}

We adapt the encode-decode method from~\cref{encode-decode} to characterize equality in $W_C^P$.
Large parts of the construction will follow the technical development in \cref{indexed-w-types-equality}, so the focus will lie on the new aspect of $P$-propositionality.

In step~\cref{encode-decode:code}, we need to define codes for equality. 
This relies on the following lemma.

\begin{lemma} \label{partial-contractiblity-lemma}
Let $\UU$ be a universe univalent for contractible types.
Let $Q$ be proposition.
Then the subtype of $\UU$ of $Q$-contractible types is $Q$-contractible.
\end{lemma}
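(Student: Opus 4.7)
The plan is to prove $Q \to \iscontr(S)$ where $S = \sm{A : \UU} (Q \to \iscontr(A))$ is the subtype of $Q$-contractible types. I would assume $q : Q$ and reduce to showing that $S$ is contractible under this assumption. Since $Q$ is a proposition inhabited by $q$, the type $Q$ is itself contractible, so evaluation at $q$ gives an equivalence $(Q \to \iscontr(A)) \simeq \iscontr(A)$ for each $A : \UU$. Summing over $A$, this yields an equivalence $S \simeq \sm{A : \UU} \iscontr(A)$, the subtype of genuinely contractible types in $\UU$. It therefore suffices to show that this subtype of contractibles is contractible.

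The key step is to extract propositionality from the hypothesis that $\UU$ is univalent for contractible types. Given two elements $(A_0, c_0), (A_1, c_1)$ of the subtype, both $A_0$ and $A_1$ are contractible, so the type of equivalences $A_0 \simeq A_1$ is contractible (any two contractibles admit a unique equivalence). By univalence restricted to contractibles, the canonical map $(A_0 =_\UU A_1) \to (A_0 \simeq A_1)$ is an equivalence, so $A_0 = A_1$ is likewise contractible. Since $\iscontr(A)$ is always a proposition, the second components $c_0, c_1$ match uniquely over this equality, making the total type of equalities $(A_0, c_0) = (A_1, c_1)$ contractible. Thus the subtype is propositional.

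To upgrade propositionality to contractibility, I need one inhabitant, which is supplied by any contractible type that sits in $\UU$; since a universe is assumed closed under basic type formers, the unit type $1$ is available and gives the center $(1, \ttt)$, where $\ttt$ witnesses contractibility of $1$. Combining with the propositionality just established, this pins down $S$ as contractible under $q$.

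The main obstacle is not conceptual but one of bookkeeping: carefully tracking that univalence for contractibles alone — rather than full univalence — suffices, and that the reduction under $q$ is strictly an equivalence (so that contractibility transfers along it). Once the equivalence $S \simeq \sm{A : \UU} \iscontr(A)$ is set up under $q$, the rest is a direct invocation of univalence for contractibles plus inhabitation by $1$.
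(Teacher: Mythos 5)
Your proof is correct and follows essentially the same route as the paper's: assume $Q$, reduce $Q$-contractibility to plain contractibility, and show the subtype of contractibles in $\UU$ is contractible. You are just more explicit than the paper's terse ``by univalence'', spelling out the split into propositionality (unique equivalences between contractibles, transported across univalence for contractible types) and inhabitation by a unit type in $\UU$; the latter is implicitly assumed by the paper (which, taken literally, overreaches its own weak notion of ``universe'' as an arbitrary type family), so flagging that dependency is a useful caveat, not a gap.
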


\begin{proof}
If $Q$ holds, $Q$-contractibility means contractibility.
So the goal becomes: if $Q$ holds, the subtype of $\UU$ of contractible types is contractible.
But the latter is always contractible, using univalence.
\end{proof}

Given $p_i : I(i_0, i_1)$ with $x_0 : W_c^P(i_0)$ and $x_1 : W_C^P(i_1)$, we define simultaneously:
\begin{itemize}
\item a type $\Eq(p_i, x_0, x_1)$ of \emph{equality codes},
\item a witness that $\Eq(p_i, x_0, x_1)$ is $P(i_0)$-contractible.%
\footnote{Note that $P(i_0) \leftrightarrow P(i_1)$ given $p_i : I(i_0, i_1)$.}
\end{itemize}
We proceed by double elimination on $x_0$ and $x_1$ with target over $i : I$ the type of $P(i)$-contractible types in a large enough universe $\UU$ univalent for contractible types.
This is a $P$-propositional algebra by \cref{partial-contractiblity-lemma}.
For $x_0 \equiv \sup(s_0, t_0)$ and $x_1 \equiv \sup(s_1, t_1)$, we take the join
\begin{equation} \label{partially-propositional-Eq}
\Eq(p_i, x_0, x_1) = P(i_0) \join \Eq'(p_i, (s_0, t_0), (s_1, t_1))
.\end{equation}
Here, $\Eq'(p_i, (s_0, t_0), (s_1, t_1))$ codes structural equality of the top-level constructor applications.
It is defined as the type of pairs $(p_s, c_t)$ as in the construction of $\Eq$ in \cref{indexed-w-types-equality}, with $c_t$ ultimately valued in $\Eq$.
The join~\cref{partially-propositional-Eq} may also be understood as a $P(i_0)$-partial contractible truncation.
In particular, it is contractible if $P(i_0)$.

\begin{proposition} \label{partially-propositional-indexed-W-equality-equiv}
Given $p : I(i_0, i_1)$ with $x_0 : W_C^P(i_0)$ and $x_1 : W_C^P(i_1)$, we have
\[
(x_0 =_{W_C^P(p)} x_1) \simeq \Eq(p, x_0, x_1)
.\]
\end{proposition}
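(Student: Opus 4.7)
The plan is to follow the encode-decode scheme of \cref{encode-decode} in essentially the same order used for \cref{indexed-W-equality-equiv}, adding extra case analysis for the path constructor $\trunc$ and for the outer join $P(i_0) \join (-)$ that now appears in the definition of $\Eq$ at the $\sup$--$\sup$ case.

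First I would define the pointwise encoding $\encode'(x) : \Eq(\refl_i, x, x)$ for $x : W_C^P(i)$ by elimination. The target family $\lam{i,x} \Eq(\refl_i, x, x)$ is $P$-propositional since its fiber over $i$ is $P(i)$-contractible by construction; hence it forms a $P$-propositional algebra fibration and the eliminator of $W_C^P$ applies. On the $\sup(s, t)$ case I take $\encode'(\sup(s, t)) \defeq \inr(\refl_s, \lam{j,m} \encode'(t(m)))$, transported along the $\beta$-equality for $\Eq$. The $\trunc$ case carries a coherence obligation across $\trunc(c, x, y)$, but $P(i)$ then holds and the target fiber is genuinely contractible, so the obligation is discharged automatically. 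From $\encode'$, the encoding map $\encode_{p, x_0, x_1} : (x_0 =_{W_C^P(p)} x_1) \to \Eq(p, x_0, x_1)$ follows by equality induction first on $p$ and then on the identification.

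For the retraction step I would show $\encode_{p, x_0, x_1}^{-1}(c)$ for $c : \Eq(p, x_0, x_1)$ by double induction on $x_0$ and $x_1$; the fibration $\lam{c} \encode^{-1}(c)$ lands in a $(-1)$-truncated type and is in particular $P$-propositional. In the $\sup$--$\sup$ case $c$ is an element of a join, so I case-split on its point constructors. The $\inr(z)$ branch proceeds verbatim as in \cref{indexed-W-equality-equiv}: peel off the components of $z$ by equality induction and the induction hypothesis to match $\encode'$ applied to $\sup(s, t)$. The $\inl$ branch hands us a proof of $P(i_0)$, under which both the source $x_0 =_{W_C^P(p)} x_1$ and the target $\Eq(p, x_0, x_1)$ are contractible (the source because $W_C^P$ is $P$-propositional, the target by construction of $\Eq$), so the fiber of $\encode$ is automatically inhabited; the $\glue$ branch lands in the same contractible situation. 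Finally, in the $\trunc$ path constructor cases of the outer double induction, $P$ again holds at the relevant index so both source and target of $\encode$ are contractible and there is nothing to show. As a fiberwise retraction whose total source is contractible, $\encode$ is a fiberwise equivalence by the standard lemma on fiberwise equivalences~\cite[Thm~4.7.7]{HoTTBook}, giving the claim.

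The main obstacle I expect is the bookkeeping around the outer join in the $\sup$--$\sup$ case: one must justify that, under the assumption $P(i_0)$ supplied by the $\inl$ and $\glue$ branches, the relevant types are genuinely contractible rather than merely propositional, so that no additional homotopy coherence arises. This hinges on \cref{partial-contractiblity-lemma} together with $P$-propositionality of $W_C^P$; once that is in place, all other cases reduce either to the calculation in the proof of \cref{indexed-W-equality-equiv} or are trivialised.
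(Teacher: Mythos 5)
Your proposal follows the paper's proof essentially step for step: define $\encode'$ by induction on $W_C^P$, using $P$-contractibility of the target to dismiss the $\trunc$ cases; derive $\encode$ by equality induction; establish $\encode^{-1}(c)$ by double induction, splitting on the outer join in the $\sup$--$\sup$ case and reducing the $\inr$ branch to the calculation for ordinary indexed W-types. One minor slip worth flagging: you assert that the fibration $\lam{c}\,\encode^{-1}(c)$ ``lands in a $(-1)$-truncated type,'' but the fiber of a map is not in general a proposition (that would presuppose $\encode$ is already known to be an embedding, which is what is being proved); the claim that actually licenses the double induction is the $P$-contractibility you correctly articulate in the following sentences, namely that under $P(i)$ both the source and the target of $\encode$ are contractible, hence so is the fiber. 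Since you supply that correct justification, the slip is not load-bearing and the proof stands.
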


\begin{proof}
It remains to adapt steps~\cref{encode-decode:encode,encode-decode:encode-fiber} of the encode-decode method.

For step~\cref{encode-decode:encode}, we first define $\encode'(x) : \Eq(\refl_i, x, x)$ for $i : I$ and $x : W_C(i)$.
Note that the goal is contractible if $P(i)$.
We may thus induct on $x$.
For $x \equiv \sup(s, t)$, we define $\encode'(x) = \inr(\ldots)$ where the omitted expression is as in the constructor case for $\encode'$ in \cref{indexed-W-equality-equiv}.
From $\encode'$, obtain 
\[
\xymatrix@C+1.2cm{
  x_0 =_{W_C^P(p_i)} x_1
  \ar[r]^-{\encode_{p_i, x_0, x_1}}
&
  \Eq(x_0, x_1)
}
\]
as in \cref{indexed-W-equality-equiv}.

For step~\cref{encode-decode:encode-fiber}, we prove $\encode_{p_i, x_0, x_1}^{-1}(c)$ for $c : \Eq(p_i, x_0, x_1)$.
Inducting on $p_i$, we may suppose $i \defeq i_0 \equiv i_1$ and $p_i \equiv \refl$.
Note that the goal becomes contractible if $P(i)$: source and target of $\encode_{\refl_i, x_0, x_1}$ become contractible.
Thus, we may induct on $x_0$ and $x_1$, letting $x_0 \equiv \sup(s_0, t_0)$ and $x_1 \equiv \sup(s_1, t_1)$.

Using the $\beta$-equality for $\Eq$, we can reduce to the case where $c$ is the transport of an element of the join~\cref{partially-propositional-Eq}.
We eliminate over this element of the join.
In the case for $\inl$ or $\glue$, we have $P(i)$, so are done as the goal is contractible.
In the case for $\inr$, we have that $c$ is the transport of $\inr(p_s, c_t)$ where $(p_s, c_t)$ is as in \cref{indexed-W-equality-equiv}.
The rest of the proof follows that proof.
\end{proof}

\begin{remark} \label{partially-n-truncated-indexed-W-type}
An evident generaliation of partially propositional indexed W-types is \emph{partially $n$-truncated indexed W-types} for internal or external $n \geq -2$.
We expect that the encode-decode method can also be adapted to characterize equality in partially $n$-truncated indexed W-types.
Instead of using the join as in~\cref{partially-propositional-indexed-W-equality-equiv}, one defines $\Eq(p_i, x_0, x_1)$ as the pushout
\[
\xymatrix@C-1.5cm@R-0.5cm{
  P(i_0) \times \Eq'(p_i, (s_0, t_0), (s_1, t_1))
  \ar[dr]
  \ar[dd]
\\&
  P(i_0) \times \vertss{\Eq'(p_i, (s_0, t_0), (s_1, t_1))}_{n-1}
  \ar[dd]
\\
  \Eq'(p_i, (s_0, t_0), (s_1, t_1))
  \ar[dr]
\\&
  \Eq(p_i, x_0, x_1)
  \fancypullback{[uu]}{[ul(0.6)]}
\rlap{.}}
\]
This is motivated by the equality in an $n$-truncation being the $(n-1)$-truncation of the original equality.
For $n = -1$, this definition reduces to~\cref{partially-propositional-indexed-W-equality-equiv}.
For $n = -2$, the definition still makes sense if we take the $(-3)$-truncation to mean $(-2)$-truncation.

Note that there is an induced map
\begin{equation} \label{partially-n-truncated-indexed-W-type:0}
\begin{gathered}
\Eq(p_i, x_0, x_1) \to \vertss{\Eq'(p_i, (s_0, t_0), (s_1, t_1))}_{n-1}
.\end{gathered}
\end{equation}
This is the pushout product of $P(i_0) \to 1$ with the map
\begin{equation} \label{partially-n-truncated-indexed-W-type:1}
\begin{gathered}
\Eq'(p_i, (s_0, t_0), (s_1, t_1)) \to \vertss{\Eq'(p_i, (s_0, t_0), (s_1, t_1))}_{n-1}
\end{gathered}
\end{equation}
Thus, the fiber of~\cref{partially-n-truncated-indexed-W-type:0} over an element $e$ is given by the join of $P(i_0)$ and the fiber of~\cref{partially-n-truncated-indexed-W-type:1} over $e$.
Thus, the join still plays a role, even though it is hidden in a different slice.
\end{remark}
}

\end{document}